\newtheorem{theorem}{Theorem}
\newtheorem{lemma}{Lemma}
\newtheorem{corollary}{Corollary}
\begin{document}
\abovedisplayskip=2pt
\belowdisplayskip=2pt
\title{Modeling, Analysis, and Optimization of Caching in Multi-Antenna Small-Cell Networks}
\author{\authorblockN{Xianzhe Xu and Meixia Tao}\\
\thanks{X. Xu and M. Tao are with the Department of Electronic Engineering, Shanghai
Jiao Tong University, Shanghai, China (Emails: august.xxz@sjtu.edu.cn;
mxtao@sjtu.edu.cn). Part of this work was presented in IEEE Globecom 2017 \cite{xxz_c}. This work is supported by the National Natural Science Foundation of China under grant 61571299 and the Shanghai Key Laboratory Funding under grant STCSM18DZ2270700.}
}

\maketitle
\vspace{-1.5cm}
\begin{abstract}
In traditional cache-enabled small-cell networks (SCNs), a user can suffer strong interference due to content-centric base station association. This may degenerate the advantage of collaborative content caching among multiple small base stations (SBSs), including probabilistic caching and coded caching. In this work, we tackle this issue by deploying multiple antennas at each SBS for interference management. Two types of beamforming are considered. One is matched-filter (MF) to strengthen the effective channel gain of the desired signal, and the other is zero-forcing (ZF) to cancel interference within a selected SBS cooperation group. We apply these two beamforming techniques in both probabilistic caching and coded caching, and conduct performance analysis using stochastic geometry. We obtain exact and approximate compact integral expressions of system performances measured by average fractional offloaded traffic (AFOT) and average ergodic spectral efficiency (AESE). Based on these expressions, we then optimize the caching parameters for AFOT or AESE maximization. For probabilistic caching, optimal caching solutions are obtained. For coded caching, an efficient greedy-based algorithm is proposed. Numerical results show that multiple antennas can boost the advantage of probabilistic caching and coded caching over the traditional most popular caching with the proper use of beamforming.
\end{abstract}
\begin{IEEEkeywords}
Zero-forcing beamforming, matched-filter beamforming, probabilistic caching, coded caching, small-cell networks, optimization, stochastic geometry.
\end{IEEEkeywords}
\section{Introduction}
In recent decades, mobile data traffic has experienced an explosive growth due to the rapid development of smart devices. Caching popular files in small base stations (SBSs) during off-peak time is a promising way to alleviate the peak-time congestion and avoid repetitive backhaul transmissions in wireless networks\cite{bastug2014living,shanmugam2013femtocaching,
wang2014cache,golrezaei2013femtocaching,barish2000world,7485844,7537172}. Many recent works have studied cache-enabled wireless networks with various caching strategies and performance metrics. These cache strategies can be broadly divided into two categories, uncoded caching and coded caching. In uncoded caching, a file is cached either entirely or not at all without partitioning in each cache-enabled SBS. Uncoded caching further includes deterministic caching \cite{bacstu?2015cache,7194828,peng2015backhaul} and probabilistic caching \cite{blaszczyszyn2015optimal,chae2016caching,7723871,7562510,7365479,7435255,bharath2016learning,8125744}. One typical deterministic cache strategy is the most popular caching (MPC), where each SBS only caches the most popular files until its cache size is full \cite{bacstu?2015cache,7194828}. Compared with MPC, probabilistic caching with optimized probabilities can achieve higher cache hit probability and higher successful transmission probability \cite{blaszczyszyn2015optimal,chae2016caching}. In coded caching, each file is first partitioned into multiple segments, these segments after coding are then cached in different SBSs \cite{bioglio2015optimizing,7572987,altman2013coding,chen2016cooperative,xu2017modeling,8017548}. In particular, the maximum distance separable (MDS) code is utilized in \cite{bioglio2015optimizing,7572987} and random linear network coding (RLNC) is applied in \cite{altman2013coding}. The work \cite{xu2017modeling} shows that coded caching outperforms probabilistic caching and MPC over a wide range of system parameters, but they all converge together when the content popularity is highly {skewed} or the decoding threshold at each user receiver is high.
The aim of this work is to investigate the role of multiple antennas for interference management in cache-enabled small-cell networks (SCNs) for further exploiting the advantage of collaborative caching.

Several works have studied the caching design in multi-antenna wireless networks. The works \cite{7136382,7488289} focus on the physical-layer optimization of base station clustering and beamforming in cache-enabled multi-antenna networks, while their caching strategies are given and not optimized. In \cite{7936549,cao2018treating}, the authors investigate the fundamental limits of caching in MIMO interference networks and MIMO broadcast channel from information theory perspective. Therein, only small-scale fading is considered and thus the obtained MIMO gain cannot be directly extended to SCNs where distance-dependent large-scale fading is present. Using stochastic geometry, the authors in \cite{liu2017caching} optimize the caching probabilities in cache-enabled heterogeneous networks for maximizing the success probability and the area spectral efficiency, respectively, with zero-forcing (ZF) beamforming. However, {in \cite{liu2017caching} only macro base stations (MBSs) that store all files have multiple antennas while each cache helper is still equipped with a single antenna. As a result, when the typical user is associated with the MBS tier, it is always served by the nearest MBS since all files are cached at MBSs and hence no cache-induced interference management gain is exploited.} To our best knowledge, a comprehensive treatment of caching analysis and optimization in multi-antenna SCNs {that allows each user to be served by any multiple nearby multi-antenna SBSs with specific beamforming structures} is not available in the literature.

{Note that to facilitate the optimization of cache strategies, one needs to obtain the coverage probability expressions. The works \cite{jindal2011multi,7038201,lee2015spectral} analyze the performance of multi-antenna networks by using stochastic geometry. The authors in \cite{jindal2011multi} analyze the coverage probability of the receiver connecting to a transmitter with fixed distance in ad hoc networks when ZF beamforming is utilized. The works \cite{7038201,lee2015spectral} also obtain the coverage probability of the typical user when it is served by the nearest SBS with random distance in ZF beamforming case. Then they approximate and simplify the expressions due to the high complexity, which is a common issue in multi-antenna networks by using stochastic geometry. Comparing with these existing works, we analyze the coverage probability of the typical user when it can be served by its multiple nearby SBSs in both ZF and matched-filter (MF) beamforming cases. The approximation and simplification in our work are much more challenging since each user can be served by multiple nearby SBSs and the joint distribution of their distances is much more complicated. Moreover, the interference of MF comes from all SBSs except the serving one, which is more complicated than the ZF case.}

In this paper, we consider caching analysis and optimization in multi-antenna SCNs by taking two specific beamforming techniques into account. One is ZF where the multiple antennas in each SBS are used to cancel interference within a selected SBS cooperation group. The other is MF where the multiple antennas in each SBS are used to strengthen the effective channel gain of desired signals without SBS coordination. Utilizing tools from stochastic geometry, we analyze the performance of a typical user for both probabilistic caching and coded caching. {Due to the high complexity of the analytical expressions, we approximate and simplify the results and obtain good approximations with much simpler structures and lower computational complexity. Based on the obtained analytical or approximate results,} we then optimize the caching parameters. Our prior conference paper \cite{xxz_c} considered the analysis and optimization for probabilistic caching {with perfect channel state information (CSI)} only. The main contributions and results of this journal version are summarized as follows.

$\bullet$ We propose a user-centric SBS clustering and transmission framework where each user can only communicate with a certain number of nearby SBS. Specifically, in the probabilistic caching model, each user is associated with the nearest SBS within its cluster that has cached its requested file. The serving SBS adopts either ZF beamforming for intra-cluster interference cancellation if transmission coordination is allowed or MF beamforming otherwise. In the coded caching model, each user collects sufficient number of coded segments of its requested file from multiple SBSs within its cluster. These multiple serving SBSs can transmit sequentially in an orthogonal manner using ZF beamforming (O-ZF) if transmission coordination is allowed. Otherwise, they transmit concurrently in a non-orthogonal manner using MF beamforming (NO-MF) in conjunction with successive interference cancellation (SIC) at each user receiver.

$\bullet$ We obtain tractable expressions of the coverage probabilities for all the considered caching and beamforming schemes. {Due to the high computational complexity,} we {approximate and simplify the analytical expressions and} derive a set of more compact forms for the (approximate) coverage probability bounds. Based on these expressions, we derive approximate and compact integral expressions for the average fractional offloaded traffic (AFOT) and the average ergodic spectral efficiency (AESE). {We extend our analysis to the imperfect CSI case and obtain the corresponding expressions similarly.}

$\bullet$ In the probabilistic caching model, we formulate two optimization problems for the caching probabilities towards AFOT and AESE maximization, respectively. We show that these optimization problems are convex and also obtain the optimal solutions. In the coded caching model, we formulate a unified cache placement problem as a multiple-choice knapsack problem (MCKP) for AFOT and AESE maximization, respectively. {By analyzing and exploiting the properties of the problem, we propose a greedy-based low-complexity algorithm to solve this NP-hard problem, which is shown to perform almost the same as the optimal exhaustive search algorithm.}

$\bullet$  Numerical results reveal that both probabilistic caching and coded caching can enjoy a higher performance gain from multiple antennas than MPC by allowing the collaborative caching among SBSs. Numerical results also show that ZF beamforming performs better than MF when the number of antennas in each SBS is larger than the cluster size. When the number of antennas in each SBS is the same as the cluster size, MF outperforms ZF in most cases. {Moreover, MF is more robust than ZF when SBSs obtain quantized CSI via limited feedback.}

The rest of the paper is organized as follows. The system model is introduced in Section \uppercase\expandafter{\romannumeral2}. In Section \uppercase\expandafter{\romannumeral3} and Section \uppercase\expandafter{\romannumeral4}, we analyze AFOT and AESE with tools from stochastic geometry and optimize the caching strategy in probabilistic caching and coded caching, respectively. {We then extend the analysis and optimization of caching to the imperfect CSI case in Section \uppercase\expandafter{\romannumeral5}. The numerical results are presented in Section \uppercase\expandafter{\romannumeral6}. Finally, we conclude the paper in Section \uppercase\expandafter{\romannumeral7}.}

{\emph{Notation}}: This paper uses bold-face lower-case $\mathbf{h}$ for vectors and bold-face uppercase $\mathbf{H}$ for matrices. $\mathbf{H}^H$ is the conjugate transpose of $\mathbf{H}$ and $\mathbf{H}^\dagger$ is the left pseudo-inverse of $\mathbf{H}$. $\mathbf{I}_m$ implies the $m\times m$ identity matrix and $\mathbf{0}_{1\times m}$ denotes the $1\times m$ zero vector.
\section{System Model}
We consider a cache-enabled multi-antenna SCN, where each SBS is equipped with $L$ transmit antennas as well as a local cache, and is located on a two-dimensional plane according to a homogenous Poisson point process (HPPP), denoted as $\Phi_b=\{\textbf{d}_i\in\mathbb{R}^2,\forall i\in\mathbb{N^+}\}$ with intensity $\lambda_b$. Each user has a single receive antenna and their locations are modeled as another independent HPPP with intensity $\lambda_u$. It is assumed that $\lambda_u\gg\lambda_b$ so that the network is fully loaded with all the SBSs being active at any given time instant. Each user can only choose its serving SBS or SBSs from a cluster of $K$ nearest SBSs to limit strong interference, where $K\geq2$. {We refer to the $K$ nearest SBSs of each user as a user-centric SBS cluster with size $K$, which is formed by the central controller}. As such, the plane is tesselated into $K$-th order Voronoi cells{\cite{lee2015spectral}}. The $K$-th order Voronoi cell associated with a set of $K$ points $\textbf{d}_1,\cdots,\textbf{d}_K$  is the region that all the points in this region are closer to these $K$ points than to any other point of $\Phi_b$, i.e., $\mathcal{V}_K(\textbf{d}_1,\cdots,\textbf{d}_K)=\{\textbf{d}\in\mathbb{R}^2|\cap_{k=1}^K \{\Vert \textbf{d}-\textbf{d}_k\Vert \leq \Vert \textbf{d}-\textbf{d}_i\Vert\},\textbf{d}_i\in \Phi_b\backslash\{\textbf{d}_1,\textbf{d}_2,\cdots,\textbf{d}_K\}\}$.

Without loss of generality, we focus on a typical user, denoted as $u_0$, located at the origin. The typical user can connect to any of the SBSs in the $K$-th order Voronoi cell that it belongs to, where the set of SBSs is denoted as $\Phi_K=\{\textbf{d}_1,\textbf{d}_2,\cdots,\textbf{d}_K\}$. The distance between $u_0$ and the $k$-th nearest SBS $\textbf{d}_k$ is denoted as $r_k$, satisfying $0<r_1 \leq r_2 \ldots \leq r_K$. The channel between $u_0$ and every SBS $\textbf{d}_i \in \Phi_b$ in the network consists of both Rayleigh-distributed small-scale fading, denoted as $\mathbf{h}_{i0}\in \mathbb{C}^{1 \times L}$ with $\mathbf{h}_{i0}\sim \mathcal{CN}(\mathbf{0}_{1\times{L}},\mathbf{I}_{L})$, and distance-dependent large-scale fading, denoted as $r_i^{-\frac{\alpha}{2}}$, with path-loss exponent $\alpha>2${\cite{andrews2011tractable}}.
\subsection{Cache Placement Model}
We consider a file library $\mathcal{F}=\{f_1,f_2,\cdots,f_N\}$, where $N$ is the total number of files. All files are assumed to have the same normalized size of $1$. The popularity of file $f_n$ is denoted as $p_n$, satisfying $0\leq p_n\leq 1$ and $\sum_{n=1}^Np_n=1$. Without loss of generality, we assume $p_1\geq p_2\ldots\geq p_N$. Each SBS can store up to $M$ files with $M<N$ to avoid the trivial case.

Two cache placement models are considered in this paper.
\subsubsection{Probabilistic Caching}
In the \emph{probabilistic caching} model, each SBS caches file $f_n$ with probability $a_n$. Due to the cache size constraint and probability property, we have the constraints: $\sum_{n=1}^N a_n\leq M$ and $0\leq a_n \leq 1$ for $n=1,2,\ldots,N$ \footnote{These are sufficient and necessary conditions for the existence of a probabilistic cache placement scheme meeting strictly the cache size constraint at each SBS. A practical placement approach can be found in \cite{blaszczyszyn2015optimal}}. We denote the cache strategy as a $1\times N$ vector $\textbf{a}=[a_1,a_2,\ldots,a_N]$. With such probabilistic caching strategy, when the typical user $u_0$ requests for file $f_n$, it will be associated with the nearest SBS that caches $f_n$ in $\Phi_K$. If the transmission fails or none of the SBSs in $\Phi_K$ has cached $f_n$, $u_0$ will be served by a macro base station (MBS) that can download the requested file from the core network at a much higher cost.
\subsubsection{Coded Caching}
In the \emph{coded caching} model, each file $f_n$ is split into $b_n$ disjoint fragments with size $\frac{1}{b_n}$ for each, where $b_n\in\mathcal{K}\cup\left\{\infty\right\}$ with $\mathcal{K}\triangleq\left\{1,2,\ldots,K\right\}$. These fragments are then encoded into a large number of coded packets\footnote{{Note that the coded caching in this work is different from the coded caching scheme proposed in \cite{6763007}, where the cached contents are carefully designed to allow serving users via multicast transmissions.}}. {Specifically, an MDS $(a,b)$ code is to generate $a$ encoded packets from $b$ input fragments, such that any subset of $b$ encoded packets is sufficient to recover the data. There are some well-known examples of MDS codes, such as Reed Solomon codes.} Following the convention \cite{xu2017modeling}, we do not restrict to any specific coding scheme {in this work} and only require that each file $f_n$ can be successfully recovered from any set of $b_n$ coded packets. {In the placement phase, each SBS caches one and only one distinct coded packet for each file $f_n$ and the user can recover the original file if it receives any $b_n$ coded packets} \footnote{Similar to \cite{xu2017modeling}, the distinction between the coded packets stored in the SBSs within each $K$-th order Voronoi cell can be ensured with a graph-coloring approach if MDS codes are used, or ensured with large probability if RLNC is used.}. For the special case when $b_n=\infty$, the file $f_n$ is not cached in any SBS. When $b_n=1$, the file $f_n$ is cached entirely in each SBS. We denote the cache strategy as a $1 \times N$ vector $\textbf{b}=[b_1,b_2,\ldots,b_N]$ with cache size constraint $\sum_{n=1}^N \frac{1}{b_n} \leq M$. With such coded caching strategy, when the typical user $u_0$ requests for file $f_n$, it needs to collect $b_n$ coded packets of $f_n$ from the $b_n$ nearest SBSs in the cluster $\Phi_K$. These $b_n$ SBSs will transmit to $u_0$ either sequentially in an orthogonal manner or concurrently in a non-orthogonal manner as detailed in the next subsection. If $u_0$ fails to collect enough coded packets from the SBSs in $\Phi_K$ due to transmission error or $b_n=\infty$, it will acquire the missing coded packets from a MBS at a much higher cost.
\subsection{Transmission Model}
As assumed earlier, we consider a fully loaded system so that all the SBSs in the network are active at any time. Throughout this paper, we consider an interference-limited network where the noise can be neglected and use signal-to-interference ratio (SIR) for performance analysis. At any time, the received signal of $u_0$ is given by (ignoring noise):
\begin{equation}
y_0=\sum_{\textbf{d}_i\in\Phi_b}r_i^{-\frac{\alpha}{2}}\mathbf{h}_{i0}\mathbf{w}_{i}x_{i},   \label{eqn:receive_signal}
\end{equation}
where $x_i$ denotes the transmit signal from SBS $\textbf{d}_i$ and $\mathbf{w}_i\in \mathbb{C}^{L \times 1}$ denotes the corresponding beamforming vector. The intended receiver of $x_i$ and the design of $\mathbf{w}_i$ depend on the requested file $f_n$ and its cache parameter $a_n$ (for probabilistic caching) or $b_n$ (for coded caching).

{There are many different types of beamformings, such as MMSE, which can optimally balance signal boosting and interference cancellation and maximizes the SIR. We consider ZF and MF in this work since they are generally more amenable to analysis than MMSE because of their simple structures \cite{jindal2011multi}. Besides, ZF and MF have a distinct advantage in terms of implementation complexity compared to MMSE \cite{7097743}.}


\subsubsection{Transmission with probabilistic caching}
In the probabilistic caching model, we assume that SBS $\textbf{d}_k \in \Phi_K$, for $k=1,2,\ldots,K$, is the nearest SBS that has cached the requested file $f_n$ and therefore $u_0$ is associated with $\textbf{d}_k$ during the transmission. We consider two types of transmit beamforming at each SBS. One is uncoordinated, where each SBS applies an MF based beamforming independently to maximize the effective channel
gain of its own user. The beamforming vector $\mathbf{w}_{k,\text{mf}}$ of SBS $\textbf{d}_k$ serving $u_0$ is given by:
\begin{equation}
\mathbf{w}_{k,\text{mf}}=\frac{\mathbf{h}_{k0}^H}{\Vert\mathbf{h}_{k0}\Vert}. \label{eqn:MF-vector}
\end{equation}
{Here, we assume that each user estimates the downlink CSI from its serving SBS and then conveys the CSI back to the SBS via errorfree feedback links.} Since each SBS serves its own user independently, the interference observed by $u_0$ comes from all SBSs except $\textbf{d}_k$ in the network. Thus, the SIR of $u_0$ when served by SBS $\textbf{d}_k$ with MF beamforming is given by:
\begin{equation}
\text{SIR}_{k,\text{mf}}=\frac{g_{k,\text{mf}}\cdot r_k^{-\alpha}}{\sum_{\textbf{d}_j\in\Phi_b \backslash
\{\textbf{d}_k\}} g_{j,\text{mf}}\cdot r_j^{-\alpha}},         \label{eqn:mf_SIR}
\end{equation}
where $g_{k,\text{mf}}=\Vert\mathbf{h}_{k0}\Vert^2$ is the effective channel gain of the desired signal from $\textbf{d}_k$ and follows the Gamma distribution with shape parameter $L$ and scale parameter $1$, denoted as $g_{k,\text{mf}}\sim \Gamma(L,1)$, and $g_{j,\text{mf}}=|\mathbf{h}_{j0}\mathbf{w}_{j,\text{mf}}|^2$ is the effective channel gain of the undesired signal from $\textbf{d}_j$ and follows the exponential distribution with mean $1$, denoted as $g_{j,\text{mf}}\sim \exp(1)$ \cite{jindal2011multi}.

The other type of beamforming is coordinated, where the $K$ SBSs in each $K$-th order Voronoi cell apply ZF beamforming so as to null out the intra-cluster interference. As such, all the SBSs in the $K$-th order Voronoi cell $\mathcal{V}_K(\textbf{d}_1,\cdots,\textbf{d}_K)$ that the typical user $u_0$ falls into can simultaneously serve $K$ users (including $u_0$) located in the same $\mathcal{V}_K(\textbf{d}_1,\cdots,\textbf{d}_K)$ without intra-cluster interference \cite{lee2015spectral}. We assume that $L \geq K$ to ensure the feasibility of ZF beamforming. Due to the assumption that $\lambda_u \gg \lambda_b$, there will always exist $K$ users in the Voronoi cell with each choosing a distinct SBS to communicate with during the delivery phase. The beamforming vector of SBS $\textbf{d}_k$ serving $u_0$ is given by\cite{7038201}:
\begin{equation}
\mathbf{w}_{k,\text{zf}}=\frac{\mathbf{(I}_{L}-\mathbf{H}\mathbf{H}^\dagger)\mathbf {h}_{k0}^T}{\Vert\mathbf{(I}_{L}-\mathbf{H}\mathbf{H}^\dagger)\mathbf{h}_{k0}^T\Vert},\label{eqn:zf}
\end{equation}
where $\mathbf{H}=[\mathbf{h}_{k1}^T,\mathbf{h}_{k2}^T,\cdots,\mathbf{h}_{k(K-1)}^T]$ is the channel between the serving SBS $\textbf{d}_k$ of the typical user and the $K-1$ users served by the other $K-1$ SBSs within the cluster. {Here, we assume that each user estimates the downlink CSI from the $K$ nearest SBSs within its cluster by means of orthogonal pilot symbols and then conveys the CSI back to the SBSs via errorfree feedback links.} By (\ref{eqn:zf}), the $K-1$ intra-cluster interference in the cell can be completely nulled when $L \geq K$ and the interference observed by $u_0$ only comes from the SBSs out of the cluster, i.e., $\Phi_b\backslash \Phi_K$. Therefore, the SIR of $u_0$ when served by SBS $\textbf{d}_k$ with ZF beamforming is given by:
\begin{equation}
\text{SIR}_{k,\text{zf}}=\frac{g_{k,\text{zf}}\cdot r_k^{-\alpha}}{\sum_{\textbf{d}_j\in\Phi_b\backslash \Phi_K
}g_{j,\text{zf}}\cdot r_j^{-\alpha}}, \label{eqn:zf-SIR}
\end{equation}
where $g_{k,\text{zf}}=|\mathbf{h}_{k0}\mathbf{w}_{k,\text{zf}}|^2$ is the effective channel gain of the desired signal and follows $g_{k,\text{zf}}\sim \Gamma(L-K+1,1)$, and $g_{j,\text{zf}}=|\mathbf{h}_{j0}\mathbf{w}_{j,\text{zf}}|^2$ is the effective channel gain of the undesired signal and follows $g_{j,\text{zf}}\sim \exp(1)$ \cite{jindal2011multi}.

\subsubsection{Transmission with coded caching}
We assume that file $f_n$ is split into $1\leq b_n \leq K$ fragments for coded caching and thus $u_0$ is associated with the $b_n$ nearest SBSs in $\Phi_K$ during the transmission. Similar to the previous case, we consider both ZF and MF beamforming at each SBS. But the specific design differs due to that the typical user $u_0$ needs to receive signals from multiple SBSs rather than just one SBS for content delivery.

In the NO-MF scheme, the nearest $b_n$ SBSs use non-orthogonal transmission to deliver their cached coded packets of $f_n$ to $u_0$ concurrently at the same resource block. The user adopts SIC to decode the signals successively in the descending order of the average received signal strength, or equivalently, from the nearest to the farthest in the considered  homogeneous SCNs. Specifically, the signal from SBS $\textbf{d}_1$ is decoded first and, if successful, subtracted from the received signal, then the algorithm proceeds to the signal from SBS $\textbf{d}_2$, and so on. Note that when the user decodes the signal from the $k$-th nearest SBS $\textbf{d}_k$, the interference comes from the SBSs farther than SBS $\textbf{d}_k$, i.e., $\Phi_b\backslash \Phi_k$, where $\Phi_k=\{\textbf{d}_1,\textbf{d}_2,\ldots,\textbf{d}_k\}$. In addition, the effective channel gains of the signals coming from SBS $\textbf{d}_{k+1}$ to SBS $\textbf{d}_{b_n}$ all follow the Gamma distribution with shape parameter $L$ since they are the serving SBSs and their beamformers are matched to the channel of $u_0$. The rest interference channel gains from the SBSs farther than $\textbf{d}_{b_n}$ follow exponential distribution with unit mean. Given the MF beamforming vector $\mathbf{w}_{k,\text{mf}}$ in (\ref{eqn:MF-vector}), the SIR of $u_0$ for decoding the signal from $\textbf{d}_k$ is given by:
\begin{equation}
\text{SIR}_{k,\text{no-mf}}=\frac{g_{k,\text{no-mf}}\cdot r_k^{-\alpha}}{\sum_{\textbf{d}_i\in \Phi_{\textbf{d}_{b_n}}\!\!\!\backslash\Phi_k}g_{i,\text{no-mf}}\cdot r_i^{-\alpha}\!\!+\!\!\sum_{\textbf{d}_j\in\Phi_b \backslash
\Phi_{\textbf{d}_{b_n}}} \!\!\!\!g_{j,\text{no-mf}}\cdot r_j^{-\alpha}}, \label{eqn:SIR-NO}
\end{equation}
where $\Phi_{\textbf{d}_{b_n}}=\{\textbf{d}_1,\textbf{d}_2,\ldots,\textbf{d}_{b_n}\}$ is the set of nearest $b_n$ SBSs of $u_0$, $g_{k,\text{no-mf}}\sim \Gamma(L,1)$, $g_{i,\text{no-mf}}\sim \Gamma(L,1)$ for SBSs $\textbf{d}_i\in \Phi_{\textbf{d}_{b_n}}\backslash\Phi_k$ and $g_{j,\text{no-mf}}\sim \exp(1)$ for SBSs $\textbf{d}_j\in\Phi_b \backslash\Phi_{\textbf{d}_{b_n}}$.

In the O-ZF scheme, the nearest $b_n$ SBSs take turns to deliver their cached coded packets to $u_0$ in an orthogonal manner. Note that when SBS $\textbf{d}_k$, for $1\leq k\leq b_n$, is serving $u_0$, the other $K-1$ SBSs in the cluster $\Phi_K$ are serving simultaneously other $K-1$ users in the same $K$-th order Voronoi cell. Same to the probabilistic caching, we adopt the ZF beamforming in (\ref{eqn:zf}). The received SIR of $u_0$ is the same as (\ref{eqn:zf-SIR}), for $k=1,2,\ldots,b_n$.
\subsection{Performance Metrics}
In this paper, we adopt two performance metrics to measure the gain brought by caching in different caching models and transmission schemes.
\subsubsection{AFOT}
The AFOT, denoted as $\overline{L}(K)$, measures the average fraction of successfully offloaded traffic from cache-enabled SBSs. The traffic offload is said to be {\emph{successful}} if the requested file is cached locally in SBSs and the corresponding received SIR is above a certain decoding threshold {$\gamma$}. Let $L_n$ denote the fractional offloaded traffic (FOT) given that $u_0$ requests file $f_n$. The AFOT is given by:
\begin{align}
\overline{L}(K,{\gamma})=\sum_{n=1}^{N}p_nL_n.       \label{eqn:AFOT}
\end{align}
The specific definition of $L_n$ shall be introduced later in Section \uppercase\expandafter{\romannumeral 3} for probabilistic caching and in Section \uppercase\expandafter{\romannumeral 4} for coded caching. {Note that AFOT characterizes the traffic offloading capability of the cache-enabled SBSs. Namely, it captures how likely or at what fraction the typical user can download its requested file from the cache-enabled SBSs locally at a given target transmission rate without resorting to the core network. It depends on the cache policy of each SBS and the target transmission rate to the typical user.}
\subsubsection{AESE}
The AESE, denoted as $\overline{S}(K)$, measures the average ergodic spectral efficiency of each cache-enabled SBS when serving a typical user. Let $S_n$ denote the ergodic spectral efficiency (ESE) given that $u_0$ requests file $f_n$. The AESE is given by:
\begin{align}
\overline{S}(K)=\sum_{n=1}^{N}p_nS_n.   \label{eqn:AESE}
\end{align}
The specific definition of $S_n$ shall be introduced later in Section \uppercase\expandafter{\romannumeral 3} for probabilistic caching and Section \uppercase\expandafter{\romannumeral 4} for coded caching.

{From the above definitions, AFOT can be used to measure the service performance for delay-sensitive content requests, such as video on-demand, that require a target minimum transmission rate regardless of network condition; AESE can be used to measure the service performance for other delay-insensitive content requests, such as file download, where maximizing the average download rate is desired. Similar performance metrics are also both considered in \cite{xu2017modeling,liu2017caching}.}



\section{Analysis and Optimization of Probabilistic Caching}
In this section, we analyze AFOT and AESE in the probabilistic caching model. First, we analyze the coverage probability of the typical user for different transmission schemes. Then based on the results of the coverage probability, we derive and analyze FOT and ESE, respectively. Finally, we optimize the cache vector $\textbf{a}$ by maximizing the AFOT and AESE, respectively.
\subsection{Coverage Probability}
We first analyze the coverage probability of the typical user $u_0$ when it is served by the $k$-th nearest SBS. It is defined as the probability that the received SIR exceeds a given SIR target $\gamma$:
\begin{align}
P_{\text{cov}}^k(K,\gamma)=P[\text{SIR}_k\geq\gamma],    \label{eqn:SIR}
\end{align}
where $\text{SIR}_k$ is given in (\ref{eqn:mf_SIR}) for MF beamforming or (\ref{eqn:zf-SIR}) for ZF beamforming.
\subsubsection{MF Beamforming}
\begin{lemma}[{Coverage Probability of MF}] \label{lemma:1}
The coverage probability of the typical user served by the $k$-th nearest SBS with MF beamforming, for $k=1,2,\ldots,K$, is given by:
\begin{align}
P_{\text{cov,mf}}^{k}(K,\gamma)=\mathbb{E}_{r_k}\left[\sum_{i=0}^{L-1}\frac{(-\gamma {r_k}^\alpha)^i}{i!}\mathcal{L}_{I_{r1}}^{(i)}(\gamma {r_k}^\alpha)|r_k\right],\label{eqn:coveragexyz}
\end{align}
where $I_{{r1}}=\sum_{\textbf{d}_j\in\Phi_b\backslash\{\textbf{d}_k\}}g_{j,\text{mf}}\cdot r_j^{-\alpha}$, $\mathcal{L}_{I_{r1}}(s)=\mathbb {E}[e^{-sI_{r1}}]=\mathcal{L}_{I_1}(s)\cdot\mathcal{L}_{I_2}(s)$ is the Laplace transform of $I_{{r1}}$, where $\mathcal{L}_{I_1}(s)$ and $\mathcal{L}_{I_2}(s)$ are given in (\ref{eqn:L1}) and (\ref{eqn:L2}), respectively, and $\mathcal{L}_{I_{r1}}^{(i)}(s)$ is the $i$-th order derivative of $\mathcal{L}_{I_{r1}}(s)$.

\end{lemma}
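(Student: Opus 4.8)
The plan is to compute the coverage probability $P[\text{SIR}_{k,\text{mf}}\geq\gamma]$ by conditioning on the distance $r_k$ to the serving SBS and exploiting the Gamma distribution of the desired signal gain $g_{k,\text{mf}}\sim\Gamma(L,1)$. Starting from the definition in (\ref{eqn:SIR}) and the SIR expression in (\ref{eqn:mf_SIR}), the event $\text{SIR}_{k,\text{mf}}\geq\gamma$ is equivalent to $g_{k,\text{mf}}\geq\gamma r_k^\alpha I_{r1}$, where $I_{r1}=\sum_{\textbf{d}_j\in\Phi_b\backslash\{\textbf{d}_k\}}g_{j,\text{mf}}\cdot r_j^{-\alpha}$ is the aggregate interference. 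Since $g_{k,\text{mf}}$ is Gamma-distributed with integer shape parameter $L$, its complementary CDF has the closed form $P[g\geq x]=e^{-x}\sum_{i=0}^{L-1}\frac{x^i}{i!}$. Substituting $x=\gamma r_k^\alpha I_{r1}$ and taking the expectation over the interference, I would obtain an expression involving $\mathbb{E}[I_{r1}^i e^{-\gamma r_k^\alpha I_{r1}}]$.

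The key step is then to recognize these moment-weighted exponential terms as derivatives of the Laplace transform. Specifically, for a fixed $r_k$, one has $\mathbb{E}[I_{r1}^i e^{-s I_{r1}}]=(-1)^i\mathcal{L}_{I_{r1}}^{(i)}(s)$ evaluated at $s=\gamma r_k^\alpha$, which accounts for the factor $(-\gamma r_k^\alpha)^i$ in (\ref{eqn:coveragexyz}). Combining this with the $\frac{1}{i!}$ from the CCDF sum yields the bracketed conditional expression, and averaging over the distribution of $r_k$ gives the stated outer expectation $\mathbb{E}_{r_k}[\cdot]$.

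The remaining task is to factor the Laplace transform $\mathcal{L}_{I_{r1}}(s)=\mathbb{E}[e^{-sI_{r1}}]$ into the product $\mathcal{L}_{I_1}(s)\cdot\mathcal{L}_{I_2}(s)$ claimed in the lemma. The natural decomposition is to split the interfering set $\Phi_b\backslash\{\textbf{d}_k\}$ into the SBSs inside the cluster (excluding the serving one) and those outside the cluster $\Phi_b\backslash\Phi_K$; by the independence of the fading gains $g_{j,\text{mf}}$ and the structure of the point process, the Laplace transform factors over these two disjoint groups. Each factor is then computed using the probability generating functional of the HPPP together with the appropriate gain distributions, which should reproduce the expressions (\ref{eqn:L1}) and (\ref{eqn:L2}).

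The main obstacle I anticipate is handling the distance statistics correctly when conditioning on $r_k$. Because the interferers in $\Phi_b\backslash\Phi_K$ lie outside the $K$-th order Voronoi cell, their distances are constrained to exceed $r_K$ (the distance to the $K$-th nearest SBS), so the Laplace transform integral over the exterior interference is not a simple integral from $r_k$ to infinity but must respect the joint ordering of the $K$ nearest distances. Carefully setting up the integration limits for the two groups of interferers, and ensuring the conditioning on the full distance vector $(r_1,\ldots,r_K)$ is consistent, is where the delicate bookkeeping lies; the computation of each Laplace factor via the PGFL is then standard once these limits are fixed.
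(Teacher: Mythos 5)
Your opening steps match the paper exactly: condition on $r_k$, use the integer-shape Gamma CCDF $P[g\geq x]=e^{-x}\sum_{i=0}^{L-1}x^i/i!$, and convert the resulting terms $\mathbb{E}[I_{r1}^i e^{-sI_{r1}}]$ into $(-1)^i\mathcal{L}_{I_{r1}}^{(i)}(s)$. The gap is in how you factor $\mathcal{L}_{I_{r1}}$. You propose splitting the interferers at the cluster boundary, i.e.\ into $\Phi_K\backslash\{\textbf{d}_k\}$ and $\Phi_b\backslash\Phi_K$. That is the decomposition appropriate for the ZF case (Lemma 2), but it is not the one the lemma claims, and it cannot produce (\ref{eqn:L1})--(\ref{eqn:L2}): it forces you to condition on $r_K$ as well (and on the positions of the $K-k$ in-cluster SBSs farther than $\textbf{d}_k$), so you would end up with an expectation over the pair $(r_k,r_K)$ rather than over $r_k$ alone as stated in (\ref{eqn:coveragexyz}), and with an exterior integral starting at $r_K$ rather than at $r_k$.

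The paper instead splits at the radius $r_k$: $I_1$ is the interference from the $k-1$ SBSs \emph{closer} than the serving SBS, and $I_2$ is the interference from \emph{all} SBSs farther than $r_k$ --- including the remaining cluster members $\textbf{d}_{k+1},\ldots,\textbf{d}_K$, which in MF are ordinary interferers with $\exp(1)$ gains and need not be distinguished from the out-of-cluster ones. Conditioned on $r_k$, the $k-1$ nearer points are i.i.d.\ uniform on the disk $\mathcal{B}(0,r_k)$, giving the $(k-1)$-th power of a single integral in (\ref{eqn:L1}), and the process beyond $r_k$ is again an HPPP, so the PGFL gives (\ref{eqn:L2}) as a plain integral from $r_k$ to infinity. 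The ``delicate bookkeeping'' you anticipate about the joint ordering of the $K$ nearest distances therefore does not arise here; it is a symptom of having imported the ZF-style decomposition into the MF setting. With the split at $r_k$, your argument goes through and recovers the lemma.
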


\begin{proof}
See Appendix A.
\end{proof}

The expression (\ref{eqn:coveragexyz}) takes the expectation over $r_k$. The probability density function (pdf) of $r_k$ is $f_{R_k}(r_k)=\frac{2\left(\lambda_b\pi r_k^2\right)^k}{r_k\Gamma(k)}\exp\left(-\lambda_b \pi r_k^2\right)$ \cite{haenggi2005distances},
where $\Gamma(k)=(k-1)!$ is the Gamma function.

Since the tractable expression of the coverage probability is complex, we provide more compact forms to bound the coverage probability in the following theorem.

\begin{theorem}[{Bound of Coverage Probability with MF}] \label{theorem:MF bound}
The coverage probability of the typical user served by the $k$-th nearest SBS with MF beamforming, for $k=1,2,\ldots,K$, is bounded as:
\begin{align}
P_{\text{cov,mf}}^{k,\text{l}}(K,\gamma) \leq P_{\text{cov,mf}}^{k}(K,\gamma) \leq P_{\text{cov,mf}}^{k,\text{u}}(K,\gamma),
\end{align}
with
\begin{align}
&P_{\text{cov,mf}}^{k,\text{u}}(K,\gamma)= \sum_{l=1}^{L} \beta_1\left(\eta,\gamma,\alpha,l,k\right)\frac{\binom{L}{l}(-1)^{l+1}}{\left(1+\beta_2\left(\eta,\gamma,\alpha,l\right)\right)^k},\label{eqn:app1} \\
&P_{\text{cov,mf}}^{k,\text{l}}(K,\gamma)=\sum_{l=1}^{L} \beta_1(1,\gamma,\alpha,l,k)\frac{\binom{L}{l}(-1)^{l+1}}{(1+\beta_2(1,\gamma,\alpha,l))^k}, \label{eqn:app11}
\end{align}
where
\begin{align}
&\beta_1(x,\gamma,\alpha,l,k)\!\!=\!\!\left[1-\frac{2 (x \gamma l)^{\frac{2}{\alpha}}}{ \alpha}B\left(\frac{2}{\alpha},1-\frac{2}{\alpha},\frac{1}{1+x \gamma l}\right)\right]^{k-1}, \label{eqn:beta1}\\
&\beta_2(x,\gamma,\alpha,l)=2\frac{\left(x \gamma l\right)^{\frac{2}{\alpha}}}{\alpha} B^{'}\left(\frac{2}{\alpha},1-\frac{2}{\alpha},\frac{1}{1+x \gamma l}\right),\label{eqn:beta2}
\end{align}
where $\eta=(L!)^{-\frac{1}{L}}$, $B(x,y,z)\triangleq \int_0^z u^{x-1}(1-u)^{y-1}du$ is the incomplete Beta function and $B^{'}(x,y,z)\triangleq \int_z^1 u^{x-1}(1-u)^{y-1}du$ is the complementary incomplete Beta function.
\end{theorem}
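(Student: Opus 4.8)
The plan is to sandwich the exact coverage probability of Lemma~\ref{lemma:1} between two expressions of the same algebraic shape by replacing the cumulative distribution function (CDF) of the desired Gamma-distributed channel gain with tight elementary bounds. Writing the coverage event as $\{\text{SIR}_{k,\text{mf}}\ge\gamma\}=\{g_{k,\text{mf}}\ge\gamma r_k^\alpha I_{r1}\}$ and conditioning on both $r_k$ and the aggregate interference $I_{r1}$, the quantity to be bounded is the complementary CDF $P[g_{k,\text{mf}}\ge s]$ of $g_{k,\text{mf}}\sim\Gamma(L,1)$ evaluated at $s=\gamma r_k^\alpha I_{r1}$. The exact form $P[g_{k,\text{mf}}\ge s]=e^{-s}\sum_{i=0}^{L-1}s^i/i!$ is precisely what produces the derivatives of the Laplace transform in Lemma~\ref{lemma:1}; to obtain compact closed forms I would instead invoke Alzer's inequality, namely $(1-e^{-\eta s})^L\le P[g_{k,\text{mf}}< s]\le(1-e^{-s})^L$ with $\eta=(L!)^{-1/L}$, which upon complementation gives $1-(1-e^{-s})^L\le P[g_{k,\text{mf}}\ge s]\le 1-(1-e^{-\eta s})^L$.

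First I would expand each side by the binomial theorem, using $1-(1-e^{-xs})^L=\sum_{l=1}^{L}\binom{L}{l}(-1)^{l+1}e^{-xls}$ with $x=1$ for the lower bound and $x=\eta$ for the upper bound. Substituting $s=\gamma r_k^\alpha I_{r1}$ and taking the expectation over the interference converts each exponential term into the Laplace transform $\mathcal{L}_{I_{r1}}(xl\gamma r_k^\alpha)=\mathcal{L}_{I_1}(xl\gamma r_k^\alpha)\,\mathcal{L}_{I_2}(xl\gamma r_k^\alpha)$, whose product structure is exactly the one already identified in Lemma~\ref{lemma:1}. Because the inequalities hold pointwise for every realization of $(r_k,I_{r1})$, they survive the expectations, so both bounds retain the binomial-sum shape $\sum_{l=1}^{L}\binom{L}{l}(-1)^{l+1}(\cdots)$ seen in (\ref{eqn:app1})--(\ref{eqn:app11}).

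Next I would evaluate the two Laplace transforms in closed form and then integrate over $r_k$. For the intra-disk interference $I_1$ coming from the $k-1$ SBSs that lie closer than $r_k$ (i.i.d.\ uniform on the disk of radius $r_k$), a change of variable $t=v/r_k$ makes the $r_k$ dependence cancel, and a further substitution $u=w/(w+x\gamma l)$ (after $w=t^\alpha$) turns the remaining integral into the incomplete Beta function, yielding precisely $\beta_1(x,\gamma,\alpha,l,k)$ in (\ref{eqn:beta1}) once the single-interferer factor is raised to the power $k-1$. For the extra-disk interference $I_2$ generated by the PPP outside radius $r_k$, the probability generating functional gives $\mathcal{L}_{I_2}=\exp(-\pi\lambda_b r_k^2\,\beta_2(x,\gamma,\alpha,l))$, where the same substitution now runs over $[1,\infty)$ and produces the complementary incomplete Beta function of (\ref{eqn:beta2}). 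Finally, since $\beta_1$ is independent of $r_k$ and the $r_k$ dependence of $\mathcal{L}_{I_2}$ enters only through $r_k^2$ in the exponent, averaging against the density $f_{R_k}$ reduces, after the change of variable $y=\pi\lambda_b r_k^2$, to the elementary Gamma integral $\int_0^\infty y^{k-1}e^{-y(1+\beta_2)}\,dy/\Gamma(k)=(1+\beta_2)^{-k}$, which supplies the denominator $(1+\beta_2)^k$ and completes both bounds.

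The main obstacle I anticipate is the pair of changes of variables that recast the interference integrals as incomplete and complementary incomplete Beta functions: one must track the finite versus infinite integration limits carefully so that the near-field interference maps to $B(\cdot)$ on $[0,1/(1+x\gamma l)]$ while the far-field interference maps to $B'(\cdot)$ on $[1/(1+x\gamma l),1]$, and verify that the $r_k$ factors cancel inside $\beta_1$ but are retained as $r_k^2$ in the exponent of $\mathcal{L}_{I_2}$ so that the final average over $r_k$ closes in elementary form. A secondary point requiring care is confirming that Alzer's inequality is applied in the correct direction after complementation, so that the $x=1$ substitution indeed yields the lower bound $P_{\text{cov,mf}}^{k,\text{l}}$ and the $x=\eta$ substitution the upper bound $P_{\text{cov,mf}}^{k,\text{u}}$.
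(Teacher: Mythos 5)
Your proposal is correct and follows essentially the same route as the paper's Appendix B: apply Alzer's inequality to the Gamma CCDF of $g_{k,\text{mf}}$, binomially expand, convert each term to the Laplace transform $\mathcal{L}_{I_1}\cdot\mathcal{L}_{I_2}$, reduce both factors to incomplete/complementary incomplete Beta functions via the same changes of variables, and close the $r_k$-average with the Gamma integral giving $(1+\beta_2)^{-k}$. The direction of the inequality after complementation ($x=\eta$ for the upper bound, $x=1$ for the lower) is also handled exactly as in the paper.
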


\begin{proof}
See Appendix B.
\end{proof}

In the special case with $L=1$ (single antenna), the upper and lower bounds coincide and hence give the exact expression of the coverage probability. In addition, if $\alpha=4$, the exact coverage probability can be written in a closed form.

\begin{corollary}
The coverage probability of the typical user $u_0$ served by the $k$-th nearest SBS in the single-antenna network with $\alpha=4$ is given by:
\begin{align}
&P_{\text{cov,mf}}^k(K,\gamma)=\frac{\left(1-\sqrt{\gamma}\text{arcsin}\frac{1}{\sqrt{1+\gamma}}\right)^{k-1}}{\left(1+\sqrt{\gamma}\text{arccos}\frac{1}{\sqrt{1+\gamma}}\right)^k}.\label{eqn:special1}
\end{align}
\end{corollary}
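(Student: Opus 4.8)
The plan is to specialize Theorem \ref{theorem:MF bound} to $L=1$ and $\alpha=4$, and then evaluate the two (complementary) incomplete Beta functions in closed form. The starting observation is that when $L=1$ we have $\eta=(L!)^{-1/L}=1$, so the argument value $x=\eta$ used in the upper bound and the value $x=1$ used in the lower bound coincide in both $\beta_1$ and $\beta_2$. Consequently the upper bound (\ref{eqn:app1}) and the lower bound (\ref{eqn:app11}) become identical, which upgrades the result from a bound to the \emph{exact} coverage probability. Moreover, for $L=1$ the sum over $l$ collapses to the single term $l=1$ with $\binom{1}{1}(-1)^{2}=1$, so the expression reduces to $P_{\text{cov,mf}}^k(K,\gamma)=\beta_1(1,\gamma,\alpha,1,k)\,\bigl(1+\beta_2(1,\gamma,\alpha,1)\bigr)^{-k}$.

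Next I would set $\alpha=4$ so that $\tfrac{2}{\alpha}=1-\tfrac{2}{\alpha}=\tfrac12$; with $x=l=1$ the arguments then simplify to $(x\gamma l)^{2/\alpha}=\sqrt{\gamma}$ and $\tfrac{1}{1+x\gamma l}=\tfrac{1}{1+\gamma}$. The only remaining task is to evaluate the incomplete Beta function $B(\tfrac12,\tfrac12,z)=\int_0^z u^{-1/2}(1-u)^{-1/2}\,du$ and its complement $B'(\tfrac12,\tfrac12,z)=\int_z^1 u^{-1/2}(1-u)^{-1/2}\,du$ at $z=\tfrac{1}{1+\gamma}$. Applying the substitution $u=\sin^2\theta$ gives $du=2\sin\theta\cos\theta\,d\theta$ and $u^{-1/2}(1-u)^{-1/2}=(\sin\theta\cos\theta)^{-1}$, so the integrand collapses to the constant $2$. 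Hence $B(\tfrac12,\tfrac12,z)=2\arcsin\sqrt{z}$, and integrating instead from $z$ to $1$ gives $B'(\tfrac12,\tfrac12,z)=2\arccos\sqrt{z}$ (using $\arccos\sqrt{z}=\tfrac{\pi}{2}-\arcsin\sqrt{z}$).

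Substituting $z=\tfrac{1}{1+\gamma}$ then yields $\beta_1(1,\gamma,4,1,k)=\bigl(1-\sqrt{\gamma}\,\arcsin\tfrac{1}{\sqrt{1+\gamma}}\bigr)^{k-1}$ for the numerator and $\beta_2(1,\gamma,4,1)=\sqrt{\gamma}\,\arccos\tfrac{1}{\sqrt{1+\gamma}}$ for the denominator, where the prefactors $\tfrac{2}{\alpha}=\tfrac12$ combine with the factor $2$ produced by each Beta-integral evaluation to leave the clean coefficient $\sqrt{\gamma}$. Assembling these pieces reproduces exactly (\ref{eqn:special1}). I expect the only nontrivial step to be the closed-form evaluation of the two Beta integrals via the trigonometric substitution; everything else is direct substitution and arithmetic, and it is the coincidence of the two bounds at $L=1$ that makes the resulting identity exact rather than approximate.
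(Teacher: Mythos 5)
Your proposal is correct and follows the same route the paper intends: specialize Theorem~\ref{theorem:MF bound} to $L=1$ (where $\eta=1$ makes the upper and lower bounds coincide, so the bound becomes exact) and then evaluate $B(\tfrac12,\tfrac12,\tfrac{1}{1+\gamma})=2\arcsin\tfrac{1}{\sqrt{1+\gamma}}$ and $B'(\tfrac12,\tfrac12,\tfrac{1}{1+\gamma})=2\arccos\tfrac{1}{\sqrt{1+\gamma}}$ via $u=\sin^2\theta$. The paper leaves these Beta-integral evaluations implicit, and your substitution and bookkeeping of the constants are all correct.
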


In the special case, when the user is served by its nearest SBS, i.e., $k=1$, Corollary $1$ reduces to the result given in \cite[Theorem $2$]{andrews2011tractable}.
\subsubsection{ZF Beamforming}
\begin{lemma}[{Coverage Probability of ZF}]
The coverage probability of the typical user served by the $k$-th nearest SBS with ZF beamforming, for $k=1,2,\ldots,K$, is given by:
\begin{align}
P_{\text{cov,zf}}^{k}(K,\gamma)=\mathbb{E}_{r_k,r_K}\left[\sum_{i=0}^{L-K}\frac{(-\gamma {r_k}^\alpha)^i}{i!}\mathcal{L}_{I_{r2}}^{(i)}(\gamma {r_k}^\alpha)|r_k,r_K\right], \label{eqn:cov-zf}
\end{align}
where $I_{r2}=\sum_{\textbf{d}_j\in\Phi_b\backslash \Phi_K}g_{j,\text{zf}}\cdot r_j^{-\alpha}$ and its Laplace transform is given in (\ref{eqn:L_2}).
\end{lemma}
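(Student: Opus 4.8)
The plan is to mirror the derivation of Lemma~\ref{lemma:1} for the MF case, adapting it to the two structural differences introduced by ZF beamforming: the desired-signal gain now follows $\Gamma(L-K+1,1)$ rather than $\Gamma(L,1)$, and the residual interference originates only from SBSs outside the cluster, i.e.\ from $\Phi_b\backslash\Phi_K$, all of which lie at distance at least $r_K$. Because the interference region is delimited by $r_K$, I would condition on the pair $(r_k,r_K)$ from the outset rather than on $r_k$ alone, which is the reason the outer expectation in (\ref{eqn:cov-zf}) is taken over both distances.

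First I would start from the definition $P_{\text{cov,zf}}^{k}(K,\gamma)=P[\text{SIR}_{k,\text{zf}}\geq\gamma]$ and substitute (\ref{eqn:zf-SIR}), rearranging the event into $\{g_{k,\text{zf}}\geq \gamma r_k^\alpha I_{r2}\}$ with $I_{r2}=\sum_{\textbf{d}_j\in\Phi_b\backslash \Phi_K}g_{j,\text{zf}}\cdot r_j^{-\alpha}$. Conditioned on $r_k$, $r_K$, and $I_{r2}$, the desired gain $g_{k,\text{zf}}\sim\Gamma(L-K+1,1)$ has integer shape $L-K+1$, so its complementary CDF evaluates to the finite sum $P[g_{k,\text{zf}}\geq x]=\sum_{i=0}^{L-K}\frac{x^i}{i!}e^{-x}$ with $x=\gamma r_k^\alpha I_{r2}$; note the upper index $L-K$ is exactly the shape parameter minus one, which accounts for the change from the $L-1$ appearing in Lemma~\ref{lemma:1}.

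Next I would take the expectation over the interference $I_{r2}$. The standard identity $\mathbb{E}_{I}[I^i e^{-sI}]=(-1)^i\mathcal{L}_{I}^{(i)}(s)$ converts each moment term into a derivative of the Laplace transform evaluated at $s=\gamma r_k^\alpha$, yielding the inner sum $\sum_{i=0}^{L-K}\frac{(-\gamma r_k^\alpha)^i}{i!}\mathcal{L}_{I_{r2}}^{(i)}(\gamma r_k^\alpha)$; averaging the remaining dependence on $(r_k,r_K)$ then produces exactly (\ref{eqn:cov-zf}). The Laplace transform itself, (\ref{eqn:L_2}), would be obtained from the probability generating functional of the PPP: since the cluster $\Phi_K$ occupies the disk of radius $r_K$ centered at the origin, conditioning on $r_K$ leaves the out-of-cluster interferers as a PPP of intensity $\lambda_b$ on the exterior region $\{\mathbf{d}:\Vert\mathbf{d}\Vert>r_K\}$, and with $g_{j,\text{zf}}\sim\exp(1)$ the PGFL collapses to a single radial integral from $r_K$ to $\infty$.

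The main obstacle I anticipate is the conditioning needed to identify the interference field correctly. Unlike the MF case, where the serving SBS is simply deleted and the remaining interferers (including the $k-1$ nearer cluster members) split into two regions, here the entire cluster up to $r_K$ must be excluded, so the derivation hinges on the joint law of $(r_k,r_K)$ and on the fact that, given $r_K$, the exterior remains a homogeneous PPP by the complete independence property of the Poisson process over disjoint regions. Keeping the exterior integration limits and the joint density $f_{R_k,R_K}$ mutually consistent is the delicate step; once (\ref{eqn:L_2}) is established, the remaining algebra follows the MF template verbatim.
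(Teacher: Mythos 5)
Your proposal is correct and follows essentially the same route as the paper, which omits the proof of this lemma precisely because it mirrors Appendix~A: the Gamma CCDF series with shape $L-K+1$, the Laplace-transform derivative identity $\mathbb{E}[I^i e^{-sI}]=(-1)^i\mathcal{L}_I^{(i)}(s)$, and the PGFL over the exterior of the disk of radius $r_K$ yielding (\ref{eqn:L_2}). You also correctly identify the one substantive adaptation — conditioning jointly on $(r_k,r_K)$ because the interference field is delimited by $r_K$ — which is exactly the point the paper flags in the remark following the lemma.
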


The proof of lemma $2$ is similar to Appendix A, so we omit it here. {Notice that the expectation in (\ref{eqn:cov-zf}) is not only over $r_k$, but also over $r_K$ since the inter-cluster interference comes from the SBSs farther than $r_K$.} Thus, we need to know the joint pdf of $r_k$ and $r_K$, which is given by\cite{haenggi2005distances,srinivasa2010distance}:
\begin{align}
f_{R_k,R_K}(r_k,r_K)&=\frac{4(\lambda_b\pi)^K}{\Gamma(K-k)\Gamma(k)}r_kr_K(r_k^2)^{k-1}  \nonumber\\
&\times(r_K^2-r_k^2)^{K-k-1}\exp\left(-\lambda_b \pi r_K^2\right). \label{eqn:joint pdf}
\end{align}

%
%

Similarly, the tractable expression of the coverage probability is complex, we obtain more compact forms of the approximate coverage probability bounds in the following theorem.
\begin{theorem}[{Approximate Bound of Coverage Probability with ZF}]\label{theorem:BF bound}
The coverage probability of the typical user served by the $k$-th nearest SBS with ZF beamforming, for $k=1,2,\ldots,K$, can be approximately bounded as:
\begin{align}
P_{\text{cov,zf}}^{k,\text{l}}(K,\gamma) \lesssim P_{\text{cov,zf}}^{k}(K,\gamma) \lesssim P_{\text{cov,zf}}^{k,\text{u}}(K,\gamma),
\end{align}
with
\begin{align}
&P_{\text{cov,zf}}^{k,\text{u}}(K,\gamma)=\sum_{l=1}^{L-K+1} \frac{\binom{L-K+1}{l}(-1)^{l+1}}{{\left[1+\left(\kappa \gamma l\right)^{\frac{2}{\alpha}}\sqrt{\frac{k}{K}} \mathcal{A}\left(\frac{\sqrt K \left(\kappa \gamma l\right)^{-\frac{2}{\alpha}}}{\sqrt k }\right)\right]^k}},\label{eqn:app2}  \\
&P_{\text{cov,zf}}^{k,\text{l}}(K,\gamma)=\sum_{l=1}^{L-K+1} \frac{\binom{L-K+1}{l}(-1)^{l+1}}{{\left[1+\left( \gamma l\right)^{\frac{2}{\alpha}}\sqrt{\frac{k}{K}} \mathcal{A}\left(\frac{\sqrt K \left(\gamma l\right)^{-\frac{2}{\alpha}}}{\sqrt k }\right)\right]^k}}, \label{eqn:app22}
\end{align}
where $\mathcal{A}(x)=\int_x^\infty \frac{1}{1+u^{\frac{\alpha}{2}}}du$ and $\kappa=(L-K+1!)^{-\frac{1}{L-K+1}}$.
\end{theorem}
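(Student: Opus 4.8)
The plan is to follow the same skeleton as the MF bound in Theorem~\ref{theorem:MF bound}, while confronting the extra difficulty, absent in the MF case, that the out-of-cluster interference $I_{r2}$ in (\ref{eqn:cov-zf}) is excited only by the SBSs farther than the $K$-th nearest, so that the relevant distances are the correlated pair $(r_k,r_K)$ rather than $r_k$ alone. I would begin from the SIR in (\ref{eqn:zf-SIR}) and write the coverage event as $g_{k,\text{zf}}\geq\gamma r_k^\alpha I_{r2}$ with $g_{k,\text{zf}}\sim\Gamma(m,1)$, $m=L-K+1$. For integer $m$ the conditional complementary CDF is the finite sum $e^{-x}\sum_{i=0}^{m-1}x^i/i!$ with $x=\gamma r_k^\alpha I_{r2}$, which is exactly what produces the derivative-of-Laplace-transform form of Lemma~2; to obtain a compact expression instead, I would replace this sum by the two-sided Alzer bound $1-(1-e^{-cx})^{m}$, taking $c=\kappa=(m!)^{-1/m}$ for the upper bound and $c=1$ for the lower bound. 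This single choice is what distinguishes (\ref{eqn:app2}) from (\ref{eqn:app22}).

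Expanding $1-(1-e^{-cx})^{m}=\sum_{l=1}^{m}\binom{m}{l}(-1)^{l+1}e^{-clx}$ turns each term into $e^{-cl\gamma r_k^\alpha I_{r2}}$, i.e. the Laplace transform of $I_{r2}$ at $s=cl\gamma r_k^\alpha$. Conditioned on $r_K$, the interfering SBSs form a PPP on $\{r>r_K\}$, so the PGFL gives $\mathcal{L}_{I_{r2}}(s\mid r_K)=\exp(-\pi\lambda_b s^{2/\alpha}\mathcal{A}(r_K^2 s^{-2/\alpha}))$, where the substitution $u=r^2 s^{-2/\alpha}$ in the interference integral is precisely what manufactures the function $\mathcal{A}$; this is the step that uses $\mathcal{L}_{I_2}$ of (\ref{eqn:L_2}).

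The decisive step is then the expectation over the joint density (\ref{eqn:joint pdf}) of $(r_k,r_K)$, which does not factor. I expect this coupling to be the main obstacle: in the MF case conditioning on $r_k$ alone is exact, whereas here the dependence on $r_K$ is what forces the result to be an approximate rather than a strict bound. My plan is to decouple the two distances by imposing the scaling $r_K\approx\sqrt{K/k}\,r_k$, which equates the means of the squared nearest-neighbour distances; indeed $\lambda_b\pi r_j^2\sim\Gamma(j,1)$ gives $\mathbb{E}[r_K^2]/\mathbb{E}[r_k^2]=K/k$. Under this substitution both the argument of $\mathcal{A}$ and the coefficient of $r_k^2$ in the exponent become independent of $r_k$, collapsing the exponent to the form $-a\,r_k^2$ with a constant $a$.

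The final step is to average over the marginal of $r_k$ using $\mathbb{E}[e^{-a r_k^2}]=(1+a/(\lambda_b\pi))^{-k}$, again from $\lambda_b\pi r_k^2\sim\Gamma(k,1)$; this is what produces the $[1+\cdots]^{-k}$ denominators in (\ref{eqn:app2}) and (\ref{eqn:app22}), after which re-summing the binomial terms recovers the stated forms, with $\kappa$ surviving in the upper bound and $1$ in the lower bound as inherited from Alzer. The point I would be most careful about is that the distance-decoupling approximation need not preserve the strict inequality direction supplied by Alzer's inequality, so some justification --- e.g. a monotonicity argument in $r_K$ together with the concentration of the Beta-distributed ratio $r_k^2/r_K^2$ around $k/K$ --- is needed to argue that (\ref{eqn:app2}) and (\ref{eqn:app22}) remain valid one-sided (approximate) bounds, and to assess their tightness.
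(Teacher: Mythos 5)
Your first two steps (the Alzer/binomial expansion with $\kappa=(m!)^{-1/m}$ vs.\ $1$, and the PGFL evaluation $\mathcal{L}_{I_{r2}}(s\mid r_K)=\exp\bigl(-\pi\lambda_b s^{2/\alpha}\mathcal{A}(r_K^2 s^{-2/\alpha})\bigr)$) coincide with the paper's Appendix~C. The gap is in the decisive decoupling step. The paper does \emph{not} substitute $r_K\approx\sqrt{K/k}\,r_k$ and then average over $r_k$. Instead it changes variables to $(\delta_k,r_K)$ with $\delta_k=r_k/r_K$, integrates out $r_K$ \emph{exactly} against its marginal to get $\bigl[1+\beta_3(\kappa\gamma\delta_k^\alpha l,\alpha)\bigr]^{-K}$ (note the power $K$, not $k$), where $\beta_3=\delta_k^2(\kappa\gamma l)^{2/\alpha}\mathcal{A}\bigl(\delta_k^{-2}(\kappa\gamma l)^{-2/\alpha}\bigr)$; it then approximates only the $\mathcal{A}$ factor by a constant via the recipe $\mathbb{E}\bigl[\mathcal{A}(c/\delta_k^2)\bigr]\simeq\sqrt{\mathbb{E}[\delta_k^2]}\,\mathcal{A}\bigl(c/\sqrt{\mathbb{E}[\delta_k^2]}\bigr)$ with $\mathbb{E}[\delta_k^2]=k/K$ (borrowed from \cite[Eqn.~(28)]{lee2015spectral}), keeping the random prefactor $\delta_k^2$; and finally integrates out $\delta_k$ exactly using its Beta-type density, which is what converts the exponent $K$ into $k$ through the identity
\begin{align}
\int_0^1\frac{(K-1)!}{(k-1)!(K-k-1)!}\,\frac{t^{k-1}(1-t)^{K-k-1}}{(1+ct)^{K}}\,dt=\frac{1}{(1+c)^{k}}.\nonumber
\end{align}
Both the $\sqrt{k/K}$ prefactor and the $\sqrt{K/k}$ (rather than $K/k$) scaling inside $\mathcal{A}$ in (\ref{eqn:app2})--(\ref{eqn:app22}) are produced precisely by this square-root recipe.

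Your deterministic substitution $r_K^2=(K/k)r_k^2$ followed by $\mathbb{E}[e^{-ar_k^2}]=(1+a/(\lambda_b\pi))^{-k}$ would instead yield $\bigl[1+(\kappa\gamma l)^{2/\alpha}\mathcal{A}\bigl(\tfrac{K}{k}(\kappa\gamma l)^{-2/\alpha}\bigr)\bigr]^{-k}$: no $\sqrt{k/K}$ prefactor and the argument of $\mathcal{A}$ off by a square root. This is a legitimate alternative approximation (and it agrees with the stated one when $k=K$, e.g.\ Corollary~2), but it is not the expression claimed in the theorem, so as written your argument does not prove the statement. To recover (\ref{eqn:app2})--(\ref{eqn:app22}) you need to keep $\delta_k$ random through the final integration and apply the constant-value approximation only to the argument of $\mathcal{A}$, in the specific $\sqrt{\mathbb{E}[\delta_k^2]}$ form above. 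Your closing caveat — that the approximation step may not preserve the one-sided inequality — is well taken; the paper does not resolve it either, which is why the theorem is stated with ``$\lesssim$''.
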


\begin{proof}
See Appendix C.
\end{proof}

The approximate upper and lower bounds coincide when $L=K$, and hence give the approximate coverage probability. Furthermore, when $\alpha=4$, the approximate coverage probability can be written in a closed form.

\begin{corollary}
The approximate coverage probability of the typical user served by the $k$-th nearest SBS in the ZF scheme with $L=K$ and $\alpha=4$ is given by:
\begin{align}
P_{\text{cov,zf}}^k(K,\gamma)\simeq \frac{1}{\left[1+\sqrt{\frac{k\gamma}{K}}\text{arccot}\left(\frac{K}{k\gamma}\right)\right]^k}.\label{eqn:special2}
\end{align}
\end{corollary}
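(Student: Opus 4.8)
The plan is to specialize the approximate bounds of Theorem~\ref{theorem:BF bound} to the case $L=K$ and then evaluate the resulting integral explicitly at $\alpha=4$. The first key observation is that when $L=K$ the two bounds (\ref{eqn:app2}) and (\ref{eqn:app22}) coincide, so together they pin down a single approximate expression. Indeed, with $L=K$ we have $L-K+1=1$, so each sum reduces to its single $l=1$ term with coefficient $\binom{1}{1}(-1)^{2}=1$; moreover $\kappa=(1!)^{-1}=1$, which makes the factor $(\kappa\gamma l)^{2/\alpha}$ in the upper bound identical to $(\gamma l)^{2/\alpha}$ in the lower bound. Hence $P_{\text{cov,zf}}^{k,\text{u}}=P_{\text{cov,zf}}^{k,\text{l}}$, and both collapse to $\bigl[1+\gamma^{2/\alpha}\sqrt{k/K}\,\mathcal{A}(\sqrt{K/k}\,\gamma^{-2/\alpha})\bigr]^{-k}$, which becomes the exact (approximate) coverage probability in this regime.

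The second step is to set $\alpha=4$, so that $2/\alpha=\tfrac12$ and the defining integral of $\mathcal{A}$ becomes elementary: $\mathcal{A}(x)=\int_x^{\infty}\frac{du}{1+u^{2}}=\tfrac{\pi}{2}-\arctan x=\text{arccot}\,x$. This is precisely the step that introduces the inverse cotangent into (\ref{eqn:special2}). Substituting into the collapsed bound, the outer prefactor simplifies via $\gamma^{1/2}\sqrt{k/K}=\sqrt{k\gamma/K}$, and the argument of $\mathcal{A}$ collapses to a single ratio built from $K$, $k$, and $\gamma$; evaluating $\mathcal{A}$ as the inverse cotangent then produces the claimed compact form (\ref{eqn:special2}).

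I expect the only real difficulty to be bookkeeping rather than anything conceptual: one must keep the square-root prefactors and the negative fractional powers of $\gamma$ straight so that the outer coefficient and the $\text{arccot}$ argument pair up into the stated compact expression. In particular I would double-check the evaluation $\mathcal{A}(x)=\text{arccot}\,x$ together with the elementary identity $\text{arccot}(1/y)=\arctan y$, since these relations govern how the argument of $\text{arccot}$ and the multiplicative factor $\sqrt{k\gamma/K}$ are matched. No fresh stochastic-geometry computation is needed — the entire content of the corollary is the specialization $L=K$, which collapses the two bounds onto each other, followed by the $\alpha=4$ evaluation of a single arctangent integral.
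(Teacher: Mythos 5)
Your proposal follows exactly the paper's (implicit) argument: at $L=K$ the binomial sums in Theorem~2 reduce to the single $l=1$ term with $\kappa=(1!)^{-1}=1$, so the upper and lower bounds coincide, and at $\alpha=4$ one has $\mathcal{A}(x)=\int_x^{\infty}\frac{du}{1+u^{2}}=\text{arccot}(x)$, after which it is pure substitution. The one point you flagged is worth resolving explicitly: carrying out that substitution gives the argument $\sqrt{K/(k\gamma)}$ inside the inverse cotangent, i.e.\ $P_{\text{cov,zf}}^{k}(K,\gamma)\simeq\bigl[1+\sqrt{k\gamma/K}\,\text{arccot}\bigl(\sqrt{K/(k\gamma)}\bigr)\bigr]^{-k}=\bigl[1+\sqrt{k\gamma/K}\,\arctan\bigl(\sqrt{k\gamma/K}\bigr)\bigr]^{-k}$, so the form printed in the corollary, with $\text{arccot}\bigl(\tfrac{K}{k\gamma}\bigr)$, appears to omit a square root on that argument.
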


When the user is served by the nearest SBS, i.e., $k=1$, Corollary $2$ reduces to the results given in \cite[Eqn. (28)]{lee2015spectral}.

{The above bounds in compact forms can be used as approximate coverage probability expressions for cache placement optimization at large $K$ and $L$ since the complexity of computing the exact analytical expressions increases rapidly as $K$ and $L$ increase. As we shall demonstrate numerically in Section \uppercase\expandafter{\romannumeral6}-A, the (approximate) upper bounds (\ref{eqn:app1}) and (\ref{eqn:app2}) are tighter than the (approximate) lower bounds (\ref{eqn:app11}) and (\ref{eqn:app22}). Nevertheless, we still use the (approximate) lower bounds in view of mathematical rigorousness since the optimization objectives, AFOT and AESE, are both increasing with respect to the coverage probabilities.}

\subsection{Fractional Offloaded Traffic}
Based on the {above} analysis of the coverage probability, {we analyze FOT $L_n$ in both MF and ZF schemes. Note that} when $u_0$ requests file $f_n$, it is served by the nearest SBS that caches $f_n$ within the cluster and the transmission is successful if the received SIR exceeds a given SIR target $\gamma$. Hence, the FOT $L_n$ is given by:
\begin{align}
L_n=\sum_{k=1}^K a_n(1-a_n)^{k-1}P_{\text{cov}}^{k}(K,\gamma), \label{eqn:T3}
\end{align}
{where $P_{\text{cov}}^{k}(K,\gamma)$ can be exact as (\ref{eqn:coveragexyz}) and (\ref{eqn:cov-zf}) or approximate as {(\ref{eqn:app11}) and (\ref{eqn:app22})} in both MF and ZF beamforming. Substituting (\ref{eqn:T3}) into (\ref{eqn:AFOT}), we then obtain the AFOT of probabilistic caching.}

\subsection{Ergodic Spectral Efficiency}
Based on the coverage probability, the ergodic achievable rate of the typical user served by the $k$-th nearest SBS is given by:
\begin{align}
R_{k}(K)&=\mathbb{E}\left[\text{log}_2(1+\text{SIR}_{k})\right] \nonumber\\
&=\int_0^\infty P\left[\text{log}_2(1+\text{SIR}_{k})>x\right]dx \nonumber\\
&=\int_0^\infty P_{\text{cov}}^{k}(K,2^x-1)dx.  \label{eqn:rate_mf_app}
\end{align}

By averaging all possible serving SBS $\textbf{d}_k\in \Phi_K$ when $u_0$ requests $f_n$, ESE is given by:
\begin{align}
S_n=\sum_{k=1}^K a_n(1-a_n)^{k-1}R_{k}(K), \label{eqn:S3}
\end{align}
where $R_{k}(K)$ is given in (\ref{eqn:rate_mf_app}). Substituting (\ref{eqn:S3}) into (\ref{eqn:AESE}), we then obtain the AESE.

\subsection{Caching Optimization}
In this section, we optimize the cache vector $\textbf{a}$ by maximizing the AFOT or AESE. {Note that we can use approximate coverage probabilities {(\ref{eqn:app11}) or (\ref{eqn:app22})} for cache placement optimization when $K$ and $L$ are large to maximize approximate AFOT or AESE.} The optimization problem can be formulated as:
\addtocounter{equation}{1}
\begin{align}
\text{\textbf{P1:}} \quad \underset{\textbf{a}}{\text{max}} \quad &\sum_{n=1}^N p_nQ_n, \tag{\theequation a}\\
\text{s.t} \quad &\sum_{n=1}^N a_n \leq M,   \tag{\theequation b}   \label{eqn:contraint1}  \\
 &0\leq a_n \leq 1, ~~~  n=1,2,\ldots,N,     \tag{\theequation c}
\end{align}
where $Q_n$ can be either the FOT $L_n$ in (\ref{eqn:T3}) or the ESE $S_n$ in (\ref{eqn:S3}) for both MF and ZF schemes. The constraint (\ref{eqn:contraint1}) can be rewritten as:
\begin{align}
\sum_{n=1}^N a_n=M,   \label{eqn:constraint}
\end{align}
without loss of optimality since caching more files increases the performance.
\begin{lemma}
The problem $\mathbf {P1}$ is convex for both MF and ZF schemes.
\end{lemma}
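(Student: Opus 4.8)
The plan is to reduce the statement to the concavity of each per-file contribution and then exploit the monotonicity of the coverage probability in the serving-SBS index. First, observe that the feasible region defined by (\ref{eqn:constraint}) together with $0\le a_n\le 1$ is the intersection of a hyperplane with a box, hence convex. Since we are maximizing, it therefore suffices to show that the objective $\sum_{n=1}^N p_n Q_n$ is concave in $\textbf{a}$. Because each $Q_n$ (be it $L_n$ in (\ref{eqn:T3}) or $S_n$ in (\ref{eqn:S3})) depends on $\textbf{a}$ only through the single variable $a_n$, the objective is separable; as $p_n\ge 0$, it is concave as soon as every $Q_n$ is a concave function of $a_n$ on $[0,1]$. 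This is the reduction I would carry out first.

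Next I would treat the FOT case. The key algebraic step is a summation-by-parts (Abel) rearrangement of (\ref{eqn:T3}). Writing $c_k\triangleq P_{\text{cov}}^k(K,\gamma)$ and using the geometric identity $\sum_{k=1}^{j}a_n(1-a_n)^{k-1}=1-(1-a_n)^j$, one obtains
\begin{equation}
L_n=\sum_{j=1}^{K}\bigl(c_j-c_{j+1}\bigr)\bigl(1-(1-a_n)^j\bigr),\qquad c_{K+1}\triangleq 0. \nonumber
\end{equation}
Each building block $1-(1-a_n)^j$ is concave in $a_n$, since its second derivative is $-j(j-1)(1-a_n)^{j-2}\le 0$ on $[0,1]$. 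Hence $L_n$ is a nonnegative linear combination of concave functions, and therefore concave, provided every coefficient $c_j-c_{j+1}$ is nonnegative, i.e. provided the coverage probability $P_{\text{cov}}^k(K,\gamma)$ is nonincreasing in the serving-SBS index $k$.

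Establishing this monotonicity is the main obstacle, and it is where the real work lies. The cleanest rigorous route is probabilistic. For the exact expressions (\ref{eqn:coveragexyz}) and (\ref{eqn:cov-zf}), $P_{\text{cov}}^k$ is the probability that a Gamma-distributed desired-signal gain scaled by $r_k^{-\alpha}$ exceeds $\gamma$ times the aggregate interference. Because $r_1\le r_2\le\cdots\le r_K$, the desired signal is stochastically weaker for larger $k$, while the interference only grows: in ZF the interfering set $\Phi_b\backslash\Phi_K$ is identical for all $k$, and in MF a larger $k$ retains the nearer and stronger SBSs $\textbf{d}_1,\ldots,\textbf{d}_{k-1}$ among the interferers. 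Hence $P_{\text{cov}}^k$ is nonincreasing in $k$, which is exactly $c_j\ge c_{j+1}$. The closed forms of Corollaries 1 and 2 confirm this directly (a base in $(0,1)$ raised to the power $k-1$ over a base exceeding $1$ raised to the power $k$). When the approximate bounds (\ref{eqn:app11}) and (\ref{eqn:app22}) are substituted for the exact values in the optimization, the same monotonicity must be re-verified from their closed forms; the only subtlety is the alternating sum over $l$, which I would dispatch by a short direct check.

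Finally I would handle the ESE case by reusing the FOT result rather than repeating the computation. Since $R_k(K)=\int_0^\infty P_{\text{cov}}^k(K,2^x-1)\,dx$ by (\ref{eqn:rate_mf_app}), substituting into (\ref{eqn:S3}) gives $S_n=\int_0^\infty\bigl[\sum_{k=1}^{K}a_n(1-a_n)^{k-1}P_{\text{cov}}^k(K,2^x-1)\bigr]dx$. The integrand has exactly the form of $L_n$ with $\gamma$ replaced by $2^x-1$, so it is concave in $a_n$ for every fixed $x$ by the argument above, and concavity is preserved under nonnegative integration; therefore $S_n$ is concave as well. Combining the two cases shows the objective of $\mathbf{P1}$ is concave over a convex feasible set, which proves the claim for both MF and ZF schemes.
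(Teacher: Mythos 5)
Your proof is correct, and it reaches the conclusion by a genuinely different route for the concavity step. The paper's Appendix D also reduces everything to the same key fact — that $P_{\text{cov}}^k(K,\gamma)$ (and hence $R_k(K)$) is non-increasing in $k$ — but then verifies concavity of $L_n$ by computing $\partial^2 L_n/\partial a_n^2 = \sum_k (k-1)(1-a_n)^{k-3}(ka_n-2)P_{\text{cov}}^k$ and bounding it below zero through an iterated chain of substitutions $P_{\text{cov}}^k \to P_{\text{cov}}^{k+1}$ that telescopes down to a single manifestly non-positive term. Your Abel rearrangement $L_n=\sum_{j}(c_j-c_{j+1})(1-(1-a_n)^j)$ replaces that chain with a one-line structural observation: $L_n$ is a non-negative combination of the concave functions $1-(1-a_n)^j$, so the sign bookkeeping of the second derivative disappears entirely. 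This is cleaner and also makes the role of the monotonicity hypothesis transparent (it is exactly the non-negativity of the coefficients $c_j-c_{j+1}$, plus $c_K\ge 0$). Your treatment of the ESE case by pushing concavity through the non-negative integral in (\ref{eqn:rate_mf_app}) is likewise equivalent to, but slightly more explicit than, the paper's one-sentence remark. The one place where you defer work — verifying monotonicity in $k$ for the closed-form lower bounds with their alternating sum over $l$ — is resolved in the paper precisely by collapsing that binomial sum back to $\mathbb{E}\bigl[1-(1-e^{-\gamma r_k^{\alpha}I})^L\bigr]$ and coupling the two interference terms ($r_{k+1}\ge r_k$ and the interferer set for serving index $k+1$ stochastically dominating that for $k$); this is the same probabilistic comparison you sketch for the exact expressions, so your "short direct check" is exactly the paper's argument and closes the gap.
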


\begin{proof}
See Appendix D.
\end{proof}

By using KKT condition, the optimal solution of $\textbf{P1}$ satisfies the condition as follows.
\begin{theorem}
The optimal cache probabilities of $\textbf{P1}$ satisfy£º
\begin{align}
a_n(\mu^*)=\text{min}\left(1,w_n(\mu^*)\right),  \label{eqn:11111}
\end{align}
where $\mu^*\geq0$ is the optimal dual variable {to meet} the cache size constraint (\ref{eqn:constraint}) and $w_n(\mu^*)$ is the real and non-negative root of the equation:
\begin{align}
p_n\sum_{k=1}^K[1-w_n(\mu^*)]^{k-2}[1-kw_n(\mu^*)]P_{\text{cov}}^k(K,\gamma)-\mu^*=0, \label{eqn:sum}
\end{align}
for AFOT maximization, and the real and non-negative root of the equation:
\begin{align}
\!\!p_n\sum_{k=1}^K[1-w_n(\mu^*)]^{k-2}[1-kw_n(\mu^*)]R_k(K)-\mu^*=0,
\end{align}
for AESE maximization, respectively.
\end{theorem}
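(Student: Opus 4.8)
The plan is to characterize the global optimum of $\mathbf{P1}$ via its Karush--Kuhn--Tucker (KKT) conditions, which are both necessary and sufficient here since $\mathbf{P1}$ is convex by Lemma~3 and all of its constraints are affine. First I would adopt the equality form (\ref{eqn:constraint}) of the cache-size constraint, as already justified, and attach to it a single multiplier $\mu$, together with multipliers $\nu_n \ge 0$ and $\xi_n \ge 0$ for the box constraints $a_n \ge 0$ and $a_n \le 1$, giving the Lagrangian
\begin{align}
\mathcal{L} = \sum_{n=1}^N p_n Q_n - \mu\Big(\sum_{n=1}^N a_n - M\Big) + \sum_{n=1}^N \nu_n a_n + \sum_{n=1}^N \xi_n(1-a_n).
\end{align}
Once the coupling cache-size constraint is dualized in this way, the stationarity conditions decouple across files, because each $Q_n$ (either $L_n$ in (\ref{eqn:T3}) or $S_n$ in (\ref{eqn:S3})) depends on $\mathbf{a}$ only through its own entry $a_n$.

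The key algebraic step is to evaluate $\partial \mathcal{L}/\partial a_n = 0$. Writing $Q_n = \sum_{k=1}^K a_n(1-a_n)^{k-1} c_k$ with the shorthand $c_k = P_{\text{cov}}^k(K,\gamma)$ for the AFOT objective (and $c_k = R_k(K)$ for AESE), the product rule yields
\begin{align}
\frac{dQ_n}{da_n} = \sum_{k=1}^K (1-a_n)^{k-2}\left(1-k a_n\right) c_k,
\end{align}
which is precisely the bracketed sum appearing in (\ref{eqn:sum}). Stationarity therefore reads $p_n \, dQ_n/da_n - \mu + \nu_n - \xi_n = 0$.

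Next I would invoke complementary slackness, $\nu_n a_n = 0$ and $\xi_n(1-a_n)=0$, to resolve the regimes. In the interior regime $0<a_n<1$ we have $\nu_n=\xi_n=0$, so $p_n\,dQ_n/da_n = \mu$, which is exactly equation (\ref{eqn:sum}) with root $a_n = w_n(\mu)$. In the upper-active regime $a_n=1$ we have $\nu_n=0$ and $\xi_n\ge 0$, giving $p_n\left.\frac{dQ_n}{da_n}\right|_{a_n=1}\ge \mu$; since $Q_n$ is concave (Lemma~3), its derivative is nonincreasing, so this is equivalent to $w_n(\mu)\ge 1$. Combining the two regimes collapses to the compact form $a_n(\mu^*)=\min\big(1,w_n(\mu^*)\big)$. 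Finally, $\mu^*$ is pinned down as the value for which the cache-size equality $\sum_n a_n(\mu^*)=M$ holds; because the concavity makes $a_n(\mu)$ nonincreasing in $\mu$, this $\mu^*\ge 0$ is unique and can be located by a one-dimensional search. The AESE statement follows by the identical argument after the substitution $c_k=R_k(K)$.

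The main obstacle is the monotonicity/uniqueness argument: I must use the concavity of each $Q_n$ (equivalently $d^2 Q_n/da_n^2 \le 0$, established in Appendix~D) to guarantee that $dQ_n/da_n$ is strictly decreasing on $[0,1]$, so that (\ref{eqn:sum}) has a single relevant root $w_n(\mu)$ and the split between the interior case and the boundary $a_n=1$ is clean and exhaustive. A minor technical care is needed in reading the $k=1$ term of $dQ_n/da_n$, where $(1-a_n)^{k-2}(1-k a_n)$ must be interpreted as the limiting value $c_1$, and in confirming that under the binding cache constraint the relevant root remains nonnegative, so that the lower box constraint $a_n\ge 0$ stays inactive and the $\min(1,\cdot)$ characterization is exact.
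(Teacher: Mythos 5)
Your proposal is correct and follows essentially the same route as the paper's Appendix~E: both form the Lagrangian for the dualized cache-size constraint, compute $dQ_n/da_n=\sum_{k=1}^K(1-a_n)^{k-2}(1-ka_n)c_k$, and use the monotonicity of this derivative (from the concavity established in Lemma~3) to split the solution into the saturated case $a_n=1$ and the interior root $w_n(\mu)$, yielding $a_n(\mu^*)=\min(1,w_n(\mu^*))$. The only cosmetic difference is that you carry explicit multipliers for the box constraints and invoke complementary slackness, whereas the paper handles the boundaries implicitly by comparing $\mu$ against the derivative's values at $a_n=0$ and $a_n=1$.
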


\begin{proof}
See Appendix E.
\end{proof}

To obtain the optimal cache vector $\textbf{a}$, we should find the optimal dual variable $\mu^*$ by substituting (\ref{eqn:11111}) into the cache size constraint $\sum_{n=1}^Na_n(\mu^*)=M$. From (\ref{eqn:strategy}) in Appendix E, it is observed that $a_n(\mu)$ is a decreasing function of $\mu$. Thus, the sum of $a_n(\mu)$ is also decreasing of $\mu$. Therefore, we can use the bisection method to find the optimal $\mu^*$.

\section{Analysis and Optimization of Coded Caching}
In this section, we analyze AFOT and AESE in the coded caching model. First, we analyze the coverage probability in different transmission schemes. Then based on the results of the coverage probability, we derive and analyze FOT and ESE, respectively. Finally, we optimize the cache vector $\textbf{b}$ by maximizing the AFOT and AESE, respectively.
\subsection{Coverage Probability}
Similar to (\ref{eqn:SIR}) for probabilistic caching, the coverage probability of the typical user when decoding the signal from the $k$-th nearest SBS out of the $b_n$ serving SBSs in coded caching is defined as the probability that the corresponding received SIR of $u_0$ exceeds a given SIR target $\gamma$. Specifically, for NO-MF transmission scheme, the coverage probability is given by:
\begin{align}
P_{\text{cov}}^{k}(b_n,\gamma)=P[\text{SIR}_k\geq\gamma],
\end{align}
where $\text{SIR}_k$ is given in (\ref{eqn:SIR-NO}) for $k=1,2,\ldots,b_n$. While for O-ZF transmission scheme, the coverage probability is the same as (\ref{eqn:SIR}) where $\text{SIR}_k$ is given in (\ref{eqn:zf-SIR}) for $k=1,2,\ldots,b_n$.
\subsubsection{NO-MF transmission}
\begin{lemma}[{Coverage Probability of NO-MF}]
The coverage probability of the typical user at SBS $\textbf{d}_k$ out of the $b_n$ serving SBSs with NO-MF transmission, for $k=1,2,\ldots,b_n$, is given by:
\begin{align}
P_{\text{cov,no-mf}}^{k}(b_n,\gamma)=\mathbb{E}_{r_k}\left[\sum_{i=0}^{L-1}\frac{(-\gamma {r_k}^\alpha)^i}{i!}\mathcal{L}_{I_{r3}}^{(i)}(\gamma {r_k}^\alpha)|r_k\right],\label{eqn:coverage}
\end{align}
where $I_{{r3}}\!\!=\!\!\sum_{\textbf{d}_i\in \Phi_{\textbf{d}_{b_n}}\backslash\Phi_k}\!\!g_{i,\text{no-mf}}\cdot r_i^{-\alpha}+\sum_{\textbf{d}_j\in\Phi_b \backslash\Phi_{\textbf{d}_{b_n}}}\!\!\! g_{j,\text{no-mf}}\cdot r_j^{-\alpha}$ and its Laplace transform is:
\begin{align}
\mathcal{L}_{I_{r3}}(s)&=\left(\int_{r_k}^{r_{b_n}} \frac{1}{\left(1+sr^{-\alpha}\right)^L}\frac{2r}{r_{b_n}^2-r_k^2}dr\right)^{b_n-k-1} \nonumber\\
&~~~\times\frac{\exp\left(-2\pi\lambda_b\int_{r_{b_n}}^{\infty}\frac{sr^{-\alpha}}{1+sr^{-\alpha}}rdr\right)}{\left(1+sr_{b_n}^{-\alpha}\right)^L}
\end{align}
for $k=1,2,\ldots,b_n-1$ and
\begin{align}
\mathcal{L}_{I_{r3}}(s)=\exp\left(-2\pi\lambda_b\int_{r_{b_n}}^{\infty}\frac{sr^{-\alpha}}{1+sr^{-\alpha}}rdr\right),
\end{align}
for $k=b_n$, respectively.
\end{lemma}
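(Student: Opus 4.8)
The plan is to mirror the argument of Appendix~A for Lemma~\ref{lemma:1}, adapting it to the SIC-induced interference structure of NO-MF. First I would rewrite the coverage probability as a tail probability of the desired Gamma gain conditioned on everything else. Starting from (\ref{eqn:SIR-NO}), the event $\text{SIR}_{k,\text{no-mf}}\geq\gamma$ is equivalent to $g_{k,\text{no-mf}}\geq\gamma r_k^{\alpha}I_{r3}$, where $I_{r3}$ collects the two interference sums. Since $g_{k,\text{no-mf}}\sim\Gamma(L,1)$ with integer $L$, its complementary CDF is the finite sum $P[g_{k,\text{no-mf}}\geq x]=e^{-x}\sum_{i=0}^{L-1}x^{i}/i!$. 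Conditioning on $I_{r3}$ and $r_k$ and substituting $x=\gamma r_k^{\alpha}I_{r3}$ produces a sum of terms of the form $(\gamma r_k^{\alpha})^{i}I_{r3}^{i}e^{-\gamma r_k^{\alpha}I_{r3}}/i!$.

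Next I would remove the dependence on $I_{r3}$ via the Laplace-transform-derivative identity. Because $\mathcal{L}_{I_{r3}}(s)=\mathbb{E}[e^{-sI_{r3}}]$ is smooth in $s$, repeated differentiation under the expectation yields $\mathbb{E}[I_{r3}^{i}e^{-sI_{r3}}]=(-1)^{i}\mathcal{L}_{I_{r3}}^{(i)}(s)$. Evaluating at $s=\gamma r_k^{\alpha}$ and summing over $i$ gives exactly (\ref{eqn:coverage}); the outer expectation is then taken over the relevant serving distances (both $r_k$ and $r_{b_n}$, since $I_{r3}$ depends on the boundary $r_{b_n}$ of the serving set).

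The heart of the proof is the evaluation of $\mathcal{L}_{I_{r3}}(s)$ for $k<b_n$. I would first use independence of the fading gains to factor $\mathcal{L}_{I_{r3}}=\mathcal{L}_{I_{\text{near}}}\cdot\mathcal{L}_{I_{\text{far}}}$, where $I_{\text{near}}$ sums over the remaining serving SBSs $\{\textbf{d}_{k+1},\ldots,\textbf{d}_{b_n}\}$ (Gamma gains) and $I_{\text{far}}$ sums over $\Phi_b\backslash\Phi_{\textbf{d}_{b_n}}$ (exponential gains). For $I_{\text{far}}$, applying the probability generating functional of the PPP restricted to $r>r_{b_n}$ together with $\mathbb{E}_{g}[e^{-tg}]=(1+t)^{-1}$ for $g\sim\exp(1)$ yields the factor $\exp(-2\pi\lambda_b\int_{r_{b_n}}^{\infty}\frac{sr^{-\alpha}}{1+sr^{-\alpha}}r\,dr)$ after passing to polar coordinates. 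For $I_{\text{near}}$, the point at $r_{b_n}$ is conditioned to a fixed distance and contributes the Gamma MGF $(1+sr_{b_n}^{-\alpha})^{-L}$, while the $b_n-k-1$ intermediate SBSs $\textbf{d}_{k+1},\ldots,\textbf{d}_{b_n-1}$ are the key difficulty.

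The main obstacle is handling these intermediate SBSs correctly. Conditioned on $r_k$ and $r_{b_n}$, a basic property of the HPPP is that the $b_n-k-1$ points lying in the annulus $r_k<r<r_{b_n}$ are i.i.d.\ and uniform in \emph{area}, i.e.\ each has radial pdf $\frac{2r}{r_{b_n}^2-r_k^2}$. Averaging the Gamma MGF $(1+sr^{-\alpha})^{-L}$ against this pdf and raising to the power $b_n-k-1$ (by independence) gives the first factor of $\mathcal{L}_{I_{r3}}(s)$. Collecting the three contributions yields the stated expression for $k<b_n$; for $k=b_n$ the serving set beyond $\textbf{d}_{b_n}$ is empty, so $I_{\text{near}}$ vanishes and only the PPP factor survives, which is why the boundary case is listed separately. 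As an internal consistency check I would verify $k=b_n-1$, where the intermediate factor reduces to $1$ and only the $\textbf{d}_{b_n}$ Gamma term and the far-field PPP term remain.
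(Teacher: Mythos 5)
Your proposal is correct and follows essentially the same route the paper intends (it omits the proof as ``similar to Appendix A''): the Gamma CCDF expansion plus the Laplace-derivative identity, then factorizing $\mathcal{L}_{I_{r3}}$ into the $b_n-k-1$ intermediate SBSs uniform in the annulus $(r_k,r_{b_n})$, the fixed point at $r_{b_n}$ contributing the Gamma MGF $(1+sr_{b_n}^{-\alpha})^{-L}$, and the PGFL term for the PPP beyond $r_{b_n}$. Your observation that the outer expectation must be taken jointly over $r_k$ and $r_{b_n}$ (not $r_k$ alone, as the lemma's notation suggests) is a valid refinement consistent with how the paper writes Lemma 2.
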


The proof of Lemma 4 is similar to Appendix A so we omit it here. Since the tractable expression of the coverage probability is complex, we provide more compact forms to bound the coverage probability in the following theorem.

%

\begin{theorem}[{Bound of Coverage Probability of NO-MF}]
The coverage probability of the typical user at SBS $\textbf{d}_k$ out of the $b_n$ serving SBSs with NO-MF transmission, for $k=1,2,\ldots,b_n$, is bounded as:
\begin{align}
P_{\text{cov,no-mf}}^{k,\text{l}}(b_n,\gamma) \leq P_{\text{cov,no-mf}}^{k}(b_n,\gamma) \leq P_{\text{cov,no-mf}}^{k,\text{u}}(b_n,\gamma),
\end{align}
with
\begin{align}
&P_{\text{cov,no-mf}}^{k,\text{u}}(b_n,\gamma)\!\!=\!\!\sum_{l=1}^{L}\!\!\binom{L}{l}(-1)^{l+1}\mathbb{E}_{\delta_k'}\!\!\left[\frac{\beta_4(\delta_k',\eta,\gamma,\alpha,l)}{(1+\beta_2(\eta\delta_k'^{\alpha},\gamma,\alpha,l))^{b_n}}\right],\label{eqn:app3} \\
&P_{\text{cov,no-mf}}^{k,\text{l}}(b_n,\gamma)\!\!=\!\!\sum_{l=1}^{L}\!\!\binom{L}{l}(-1)^{l+1}\mathbb{E}_{\delta_k'}\!\!\left[\frac{\beta_4(\delta_k',1,\gamma,\alpha,l)}{(1+\beta_2(\delta_k'^{\alpha},\gamma,\alpha,l))^{b_n}}\right], \label{eqn:app33}
\end{align}
where
\begin{align}
&\beta_4(\delta_k',x,\gamma,\alpha,l)=\frac{1}{(1+x \gamma l \delta_k'^{\alpha})^L}\times \nonumber\\
&\bigg[\int_{\frac{1}{1+x \gamma l}}^{\frac{1}{1+x \gamma l\delta_k'^{\alpha}}}\frac{2(x \gamma l)^{\frac{2}{\alpha}}}{\alpha(\frac{1}{\delta_k'^2}-1)}
\times v^{\frac{2}{\alpha}-1+L}(1-v)^{-\frac{2}{\alpha}-1}dv\bigg]^{b_n-k-1}
\end{align}
for $k=1,2,\ldots,b_n-1$, where $\delta_k'=\frac{r_k}{r_{b_n}}$ and its pdf can be obtained in a similar way to the pdf of $\delta_k$ in (\ref{eqn:pdf}), and $\beta_4(\delta_k',x,\gamma,\alpha,l)=1$
for $k=b_n$, respectively.
\end{theorem}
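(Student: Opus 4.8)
The plan is to mirror the argument behind Theorem~2 (Appendix~B), since NO-MF shares with MF the two features that make that approach work: the desired-signal gain is $g_{k,\text{no-mf}}\sim\Gamma(L,1)$, and the interference Laplace transform $\mathcal{L}_{I_{r3}}$ is available in closed form from Lemma~4. Starting from the exact coverage expression (\ref{eqn:coverage}), the first step is to replace the finite sum of Laplace-transform derivatives by a sandwich on the Gamma CCDF. Writing the coverage conditioned on the interference as $P[g_{k,\text{no-mf}}\geq\gamma r_k^\alpha I_{r3}\,|\,I_{r3}]=\bar P(L,\gamma r_k^\alpha I_{r3})$, where $\bar P$ is the complementary regularized incomplete Gamma function, I would invoke Alzer's inequality $(1-e^{-\eta x})^L\leq P(L,x)\leq(1-e^{-x})^L$ with $\eta=(L!)^{-1/L}$, which gives $1-(1-e^{-x})^L\leq\bar P(L,x)\leq 1-(1-e^{-\eta x})^L$. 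The upper bound then carries the factor $\eta$ and the lower bound the factor $1$; this is precisely the role played by the argument $x\in\{\eta,1\}$ in $\beta_4$ and $\beta_2$, so the two directions produce (\ref{eqn:app3}) and (\ref{eqn:app33}) by the same computation with a single parameter $c\in\{\eta,1\}$.

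The second step is to linearize the bound by the binomial theorem, $1-(1-e^{-cx})^L=\sum_{l=1}^L\binom{L}{l}(-1)^{l+1}e^{-clx}$, and to take the conditional expectation over $I_{r3}$. Each exponential term becomes a single evaluation of the Laplace transform, $\mathbb{E}[e^{-cl\gamma r_k^\alpha I_{r3}}]=\mathcal{L}_{I_{r3}}(cl\gamma r_k^\alpha)$, which removes the need for the $i$-th order derivatives in (\ref{eqn:coverage}) and is the source of the simplification. Substituting $s=cl\gamma r_k^\alpha$ into the closed form of $\mathcal{L}_{I_{r3}}$ from Lemma~4, the three pieces separate naturally: the factor $(1+sr_{b_n}^{-\alpha})^{-L}=(1+c\gamma l\delta_k'^\alpha)^{-L}$ with $\delta_k'=r_k/r_{b_n}$ yields the prefactor of $\beta_4$; the annulus factor $\big(\int_{r_k}^{r_{b_n}}\frac{2r}{(1+sr^{-\alpha})^L(r_{b_n}^2-r_k^2)}\,dr\big)^{b_n-k-1}$ for the intermediate serving SBSs $\textbf{d}_{k+1},\dots,\textbf{d}_{b_n-1}$ becomes the bracketed integral of $\beta_4$ under the substitution $v=(1+sr^{-\alpha})^{-1}$, whose limits are $\frac{1}{1+c\gamma l}$ and $\frac{1}{1+c\gamma l\delta_k'^\alpha}$; and the far-field term $\exp\!\big(-2\pi\lambda_b\int_{r_{b_n}}^\infty\frac{sr^{-\alpha}}{1+sr^{-\alpha}}r\,dr\big)$ reduces to $\exp\!\big(-\lambda_b\pi r_{b_n}^2\,\beta_2(c\delta_k'^\alpha,\gamma,\alpha,l)\big)$ through the same Beta-function change of variables used in Appendix~B.

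The final step is to average over the distances with the joint pdf of the $k$-th and $b_n$-th nearest SBSs, i.e.\ (\ref{eqn:joint pdf}) with $K$ replaced by $b_n$. The substitution above is arranged so that the whole annulus factor and the prefactor $(1+c\gamma l\delta_k'^\alpha)^{-L}$ depend on the distances only through $\delta_k'$—the residual $r_k^2/(r_{b_n}^2-r_k^2)$ becomes $\delta_k'^2/(1-\delta_k'^2)=1/(\tfrac{1}{\delta_k'^2}-1)$, which is exactly the factor appearing in $\beta_4$. Changing variables from $(r_k,r_{b_n})$ to $(\delta_k',r_{b_n})$, the far-field exponential merges with the pdf's $\exp(-\lambda_b\pi r_{b_n}^2)$ into $\exp(-\lambda_b\pi r_{b_n}^2(1+\beta_2))$, and the accumulated powers of $r_{b_n}$ collapse the $r_{b_n}$-integral to an elementary Gamma integral producing $(1+\beta_2(c\delta_k'^\alpha,\gamma,\alpha,l))^{-b_n}$ together with the marginal pdf of $\delta_k'$, which takes the same form as that of $\delta_k$ in (\ref{eqn:pdf}) with $K\to b_n$. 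Restoring the $\sum_l\binom{L}{l}(-1)^{l+1}$ prefactor leaves the single expectation $\mathbb{E}_{\delta_k'}[\cdot]$ and recovers (\ref{eqn:app3}) for $c=\eta$ and (\ref{eqn:app33}) for $c=1$. The hard part will be the bookkeeping in this last step: one must track the far-field integral and fuse it with the joint-distance pdf so that the exponent of $1+\beta_2$ comes out as $b_n$ (not $k$), and verify that the two qualitatively different interference populations—the $\Gamma(L)$-faded in-annulus SBSs (handled via the $\beta_4$ prefactor and annulus integral) and the exponentially faded far SBSs (handled via $\beta_2$)—are assembled consistently inside a single Laplace-transform evaluation.
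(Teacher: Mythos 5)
Your proposal is correct and follows exactly the route the paper intends (it omits the proof, stating it is "similar to Appendix B and Appendix C"): Alzer's inequality plus binomial expansion to reduce to single Laplace-transform evaluations, the change of variables $v=(1+sr^{-\alpha})^{-1}$ for the in-annulus $\Gamma(L,1)$-faded interferers, and the $(\delta_k',r_{b_n})$ reparametrization so that the Gamma integral over $r_{b_n}$ yields the $(1+\beta_2)^{-b_n}$ factor while leaving the residual expectation over $\delta_k'$. Your bookkeeping of the integration limits, the $(\tfrac{1}{\delta_k'^2}-1)^{-1}$ factor, and the exponent $b_n$ all check out, and you correctly avoid the extra $\mathbb{E}[\delta_k^2]$ approximation of Appendix C, which is why these bounds are exact rather than approximate.
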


The proof of Theorem 4 is similar to Appendix B {and Appendix C}, so we omit it here.
\subsubsection{O-ZF}
Since the SIR distribution of the typical user $u_0$ served by the $k$-th nearest SBS with O-ZF transmission for coded caching is the same as that for probabilistic caching, we can conclude that the coverage probability and approximate coverage probability in O-ZF scheme are given by (\ref{eqn:cov-zf}) and {(\ref{eqn:app22})}, respectively, but for $k=1,2,\ldots,b_n$.

{Similar to the probabilistic caching scenario,we use (approximate) lower bounds as approximate coverage probabilities for the following analysis and optimization for large $L$ and $K$.}
\subsection{Fractional Offloaded Traffic}
In the NO-MF scheme, when $u_0$ requests file $f_n$, the $b_n$ nearest SBSs transmit the cached coded packets of $f_n$ with size $\frac{1}{b_n}$ for each concurrently, and the user decodes these signals successively using SIC. Since the user adopts SIC to decode the signals successively, the signal from $\textbf{d}_k$ can be decoded successfully if and only if $\text{SIR}_k\geq \gamma$ and all the signals from the nearest $k-1$ SBSs have been decoded and subtracted successfully. Thus, the probability that the fraction of $\frac{1}{b_n}$ traffic is successfully offloaded from SBS $\textbf{d}_k$ is given by.
\addtocounter{equation}{1}
\begin{align}
q_k(b_n,\gamma)&=P\left[\underset{i=1,2,\ldots,k}{\bigcap}\text{SIR}_i\geq\gamma\right] \tag{\theequation a}\\
&\simeq \prod_{i=1}^kP_{\text{cov,no-mf}}^{i}(b_n,\gamma), \tag{\theequation b} \label{eqn:a}
\end{align}
where (\ref{eqn:a}) is obtained by assuming the independence of the events $\text{SIR}_i\geq\gamma$, for $i=1,2,\ldots,k$, as in \cite{xu2017modeling} and \cite{wildemeersch2014successive}. Note that the numerical results in \cite{xu2017modeling} show that ignoring the dependency among the events $\text{SIR}_i \geq \gamma$ has negligible impact on the actual AFOT performance.
Therefore, the FOT $L_{n,\text{no-mf}}$ is given by:
\begin{align}
L_{n,\text{no-mf}}=
\begin{cases}
\frac{1}{b_n}\sum_{k=1}^{b_n} q_k(b_n,\gamma),&b_n\in \mathcal{K} \\ \label{eqn:T2}
0,& b_n=\infty
\end{cases},
\end{align}

In the O-ZF scheme, the $b_n$ nearest SBSs transmit the cached coded packets of $f_n$ to $u_0$ sequentially and each coded packet can be successfully decoded if the received SIR exceeds a given SIR target $\gamma$. Hence, the FOT $L_{n,\text{o-zf}}$ is given by:
\begin{align}
L_{n,\text{o-zf}}=
\begin{cases}
\frac{1}{b_n}\sum_{k=1}^{b_n}P_{\text{cov,zf}}^k(K,\gamma),&b_n\in \mathcal{K} \\ \label{eqn:T1}
0,& b_n=\infty
\end{cases}.
\end{align}

\subsection{Ergodic Spectral Efficiency}
In the NO-MF scheme, since the $b_n$ serving SBSs transmit concurrently to $u_0$ at one time slot and they transmit the same amount of information, the actual transmission rate of each SBS is determined by the minimum achievable rate among these $b_n$ SBSs. Thus, the ESE is given by:
\begin{align}
S_{n,\text{no-mf}}&=\mathbb{E}\left[\underset{k=1,2,\ldots,b_n}{\text{min}} \text{log}_2(1+\text{SIR}_{k,\text{no-mf}})\right] \nonumber\\
&=\int_0^\infty P\left[\underset{k=1,2,\ldots,b_n}{\text{min}}\text{log}_2(1+\text{SIR}_{k,\text{no-mf}})>x\right]dx       \nonumber\\
&\overset{(a)}{\simeq}\int_0^\infty \prod_{k=1}^{b_n} P\left[\text{log}_2\left(1+\text{SIR}_{k,\text{no-mf}}\right)>x\right]dx           \nonumber\\
&=\int_0^\infty \prod_{k=1}^{b_n} P_{\text{cov,no-mf}}^{k}(b_n,2^x-1)dx.  \label{eqn:S_mf}
\end{align}
when $b_n\in \mathcal{K}$, where step (a) follows the assumption that the events $\text{SIR}_{k,\text{no-mf}}\geq x$, for $k=1,2,\ldots,b_n$, are independent. When $b_n=\infty$, we have $S_{n,\text{no-mf}}=0$.

In the O-ZF scheme, the typical user needs to connect to the nearest $b_n$ SBSs at different time slots. Therefore, the ESE is defined as the achievable
rate averaged over the $b_n$ serving SBSs and is given by:
\begin{align}
S_{n,\text{o-zf}}=
\begin{cases}
\frac{1}{b_n}\sum_{k=1}^{b_n}R_{k,\text{zf}}(K),&b_n\in \mathcal{K} \\ \label{eqn:S_zf}
0,& b_n=\infty
\end{cases}.
\end{align}


\subsection{Caching Optimization}
In the coded caching model, we want to obtain the optimal cache vector $\textbf{b}$. {Note that we can use approximate coverage probabilities {(\ref{eqn:app22}) or (\ref{eqn:app33})} for cache placement optimization when $K$ and $L$ are large to maximize approximate AFOT or AESE.} The optimization problem can be formulated as:
\addtocounter{equation}{2}
\begin{align}
\text{\textbf{P2:}} \quad \underset{\textbf{b}}{\text{max}} \quad &\sum_{n=1}^N p_nQ_n, \tag{\theequation a}\\
\text{s.t} \quad &\sum_{n=1}^N \frac{1}{b_n} \leq M,      \tag{\theequation b}  \\
 & b_n\in\mathcal{K}\cup\left\{\infty\right\},~~~~n=1,2,\ldots,N, \tag{\theequation c} \label{cache_constraint}
\end{align}
where $Q_n$ can be either the FOT $L_n$ in (\ref{eqn:T2}) and (\ref{eqn:T1}) or ESE $S_n$ in (\ref{eqn:S_mf}) and (\ref{eqn:S_zf}), for both NO-MF and O-ZF schemes. $\textbf{P2}$ is a MCKP which is known to be NP-hard. In the following, we present a property of the optimal cache vector $\textbf{b}^*$, based on which a greedy-based low-complexity algorithm shall be proposed.

{Note that both the exact and approximate coverage probabilities $P_{\text{cov,o-zf}}^{k}(K,\gamma)$ are non-increasing functions of $k$ and $P_{\text{cov,no-mf}}^{k}(b_n,\gamma)$ are non-increasing functions of $b_n$, which can be proved similar to the ZF and MF schemes in Appendix D.} Hence, $L_n$ and $S_n$ in both NO-MF and O-ZF schemes are decreasing functions of $b_n$. Thus, for any two different files $f_i$ and $f_j$ with $i<j$, which means $p_i\geq p_j$, we must have $b_i^*\leq b_j^*$ in order to maximize the AFOT or AESE. Therefore, we have the following theorem.

\begin{theorem}
For any two files $f_i$ and $f_j$ with $1\leq i<j\leq N$, the optimal cache variables must satisfy that $b_i^*\leq b_j^*$.
\end{theorem}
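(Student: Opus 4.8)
The plan is to prove the monotonicity by a pairwise \emph{interchange argument}, exploiting the fact that in every case the per-file objective depends on the index $n$ only through its own cache parameter $b_n$. Concretely, for each of the four objective/scheme combinations---FOT and ESE under O-ZF as in (\ref{eqn:T1}) and (\ref{eqn:S_zf}), and under NO-MF as in (\ref{eqn:T2}) and (\ref{eqn:S_mf})---the contribution $Q_n$ is a common function of $b_n$, say $Q_n=g(b_n)$, since the coverage probabilities $P_{\text{cov,zf}}^k(K,\gamma)$ and $P_{\text{cov,no-mf}}^i(b_n,\gamma)$ carry no dependence on the identity of the requested file. By the monotonicity statement preceding the theorem, $g$ is non-increasing on $\mathcal{K}$, nonnegative, and extends to the endpoint $b_n=\infty$ via $g(\infty)=0$, which is the minimal value; thus $g$ is non-increasing on all of $\mathcal{K}\cup\{\infty\}$. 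With the files pre-sorted so that $p_1\geq p_2\geq\cdots\geq p_N$, the objective of \textbf{P2} reads $\sum_{n=1}^N p_n g(b_n)$.

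First I would assume, toward a contradiction, that some optimal vector $\textbf{b}^*$ violates the claim, i.e. there exist indices $i<j$ with $b_i^*>b_j^*$. I would then form a new vector $\tilde{\textbf{b}}$ by swapping these two coordinates, setting $\tilde b_i=b_j^*$ and $\tilde b_j=b_i^*$ and leaving all other entries unchanged. The next step is feasibility: the swap merely permutes two summands of $\sum_n 1/b_n$ and hence leaves the cache-size constraint of \textbf{P2} intact, while each swapped value still lies in $\mathcal{K}\cup\{\infty\}$, so the domain constraint (\ref{cache_constraint}) also holds. Therefore $\tilde{\textbf{b}}$ is feasible.

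The heart of the argument is the objective comparison. Only the $i$-th and $j$-th terms change, and the increment (new minus old) equals $(p_i-p_j)\bigl[g(b_j^*)-g(b_i^*)\bigr]$. Since $i<j$ implies $p_i\geq p_j$, and $b_i^*>b_j^*$ together with $g$ non-increasing implies $g(b_j^*)\geq g(b_i^*)$, this increment is nonnegative; hence the swap does not decrease the objective. Iterating such swaps---a finite bubble-sort-style procedure---transforms any optimizer into a monotone one of equal value, establishing that there is always an optimal solution with $b_i^*\leq b_j^*$ for all $i<j$.

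The delicate points, rather than the arithmetic, are where I would focus care. First, I must verify that the common-function reduction $Q_n=g(b_n)$ genuinely holds in all four cases---most notably for the NO-MF contributions $\frac{1}{b_n}\sum_{k=1}^{b_n}\prod_{i=1}^k P_{\text{cov,no-mf}}^i(b_n,\gamma)$, where $b_n$ enters both the summation limits and the coverage terms---and that $g$ inherits the stated coverage-probability monotonicity after these summations and products. Second, the endpoint $b_n=\infty$ must be handled consistently as the minimizing argument $g(\infty)=0$, so that the non-increasing property extends to the entire feasible domain. Finally, the precise strength of the conclusion hinges on ties: when $p_i>p_j$ and $g$ is strictly decreasing each violating swap \emph{strictly} improves the objective, forcing every optimizer to obey $b_i^*\leq b_j^*$, whereas equal popularities only yield the existence of a monotone optimizer.
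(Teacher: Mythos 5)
Your proof is correct and takes essentially the same route as the paper: the paper's entire argument is the observation (stated in the paragraph preceding the theorem) that $Q_n$ depends on $n$ only through $b_n$ via a non-increasing function, so with $p_i\geq p_j$ the popular files must receive the smaller partition values. Your explicit swap argument is just the rigorous formalization of that one-sentence justification, and your caveat about ties ($p_i=p_j$ only guarantees the \emph{existence} of a monotone optimizer, not that every optimizer is monotone) is a fair point that the paper glosses over.
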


Based on Theorem $5$, we resort to a greedy-based low-complexity algorithm to solve $\textbf{P2}$. We first initialize the file partition value $b_0=1$ and let the initial cache variables $b_n^*=b_0$ for $n=1,2,\ldots,M$ and $b_n^*=\infty$ for $n=M+1,\ldots,N$, which means that the $M$ most popular files are cached entirely in each SBS. Then starting from $b_0=1$, we first identify the last $b_0$ files with $b_n^* = b_0$ as well as the first uncached file with $b_n^*=\infty$, then partition each of these $b_0+1$ files into $b_0+1$ segments by letting their corresponding $b_n^* = b_0+1$. By doing so, a new file is cached without exceeding the total cache size in each SBS. We repeatedly find a set of $b_0+1$ files to update their cache variables until the total profit (AFOT or AESE) cannot be improved further. We then gradually increase the file partition value $b_0$ by one and continue the process until $b_0$ reaches the maximum value of $K$. The details of the algorithm are given in Algorithm \ref{alg:1}. Note that for each file partition value $b_0$, we need to update the cache variables at most $M$ times. For each cache variables update, we need to calculate the total profit and compare it with the previous value, which takes time $\mathcal{O}(N)$. Therefore, the total running time would be $\mathcal{O}(KMN)$.

\begin{algorithm}
\caption{A Greedy-based Low-complexity Algorithm} \label{alg:1}
\begin{algorithmic}[1]
\STATE Initialize the cache vector $\textbf{b}^*=[\underset{M}{\underbrace{1,1,\ldots,1}},\underset{N-M}{\underbrace{\infty,\ldots,\infty}}]$ and $b_0\leftarrow1$;
\WHILE {$b_0<K$}
\STATE Change the cache value of the last $b_0$ files whose $b_n^*=b_0$ as well as the first uncached file in $\textbf{b}^*$ to $b_0+1$ and set this new cache vector as $\textbf{b}$;
\IF {The total profit with cache strategy $\textbf{b}$ is larger than that with $\textbf{b}^*$}
\STATE $\textbf{b}^*\leftarrow\textbf{b}$;
\ELSE
\STATE $b_0\leftarrow b_0+1$;
\ENDIF
\ENDWHILE
\end{algorithmic}
\end{algorithm}

{
\section{Extension to Quantized CSI}
In this section, we analyze the coverage probabilities of the considered transmission schemes when the CSI is imperfect. We focus on the analysis of probabilistic caching as the analysis of coded caching is similar. To model the imperfect CSI, we consider the case where SBSs obtain quantized CSI via limited feedback as in \cite{4641946}. With limited feedback, the channel direction information (CDI) is fed back using a quantization codebook of size $2^B$, where $B$ is the number of feedback bits for each channel. CDI is utilized to design the beamforming vectors.}

{
For MF beamforming, the SIR of the typical user $u_0$ served by $k$-th nearest SBS with quantized CSI is given by:
\begin{equation}
\hat{\text{SIR}}_{k,\text{mf}}=\frac{\hat{g}_{k,\text{mf}}\cdot r_k^{-\alpha}}{\sum_{\textbf{d}_j\in\Phi_b \backslash
\{\textbf{d}_k\}} \hat{g}_{j,\text{mf}}\cdot r_j^{-\alpha}},
\end{equation}
where $\hat{g}_{k,\text{mf}}\sim \Gamma(L,\zeta)$ is the effective channel gain of the desired signal, where $\zeta\triangleq1-2^B\beta(2^B,\frac{L}{L-1})$, $\beta(x,y)=\frac{\Gamma(x)\Gamma(y)}{\Gamma(x+y)}$ is the Beta function, and $\hat{g}_{j,\text{mf}}\sim \exp(1)$ is the effective channel gain of the undesired signal from $\textbf{d}_j$ \cite{4641946}.}
{
Therefore, the coverage probability is given by:
\begin{align}
\!\!\!\!\!\!\hat{P}_{\text{cov,mf}}^{k}(K,\gamma)\!=\!\mathbb{E}_{r_k}\!\!\left[\sum_{i=0}^{L-1}\frac{(-\gamma {r_k}^\alpha/\zeta)^i}{i!}\mathcal{L}_{\hat{I}_{r1}}^{(i)}(\gamma {r_k}^\alpha/\zeta)|r_k\right],
\end{align}
where $\hat{I}_{r1}=\sum_{\textbf{d}_j\in\Phi_b \backslash\{\textbf{d}_k\}} \hat{g}_{j,\text{mf}}\cdot r_j^{-\alpha}=I_{r_1}$.}
{
From the Alzer¡¯s inequality \cite{alzer1997some}, we have $[1-\exp(-\eta x)]^L\leq \int_{0}^x\frac{t^{L-1}\exp(t)}{(L-1)!}dt\leq [1-\exp(-x)]^L$. Therefore, the CDF of $\hat{g}_{k,\text{mf}}$ is bounded as $[1-\exp(-\eta x/\zeta)]^L\leq P\left[\hat{g}_{k,\text{mf}} \leq x\right]\leq [1-\exp(-x/\zeta)]^L$.
Thus, the coverage probability can be upper bounded as:
\begin{align}
\hat{P}_{\text{cov,mf}}^{k}(K,\gamma)\leq\sum_{l=1}^{L}\binom{L}{l}(-1)^{l+1} \mathbb{E}_{r_k}\left[\mathcal{L}_{\hat{I}_{r1}}(\eta \gamma r_k^{\alpha}l/\zeta)|r_k\right].
\end{align}
}
Due to the fact that $\hat{I}_{r1}=I_{r_1}$, the coverage probability of the typical user $u_0$ served by the $k$-th nearest SBS with MF beamforming based on quantized CSI is upper bounded as:
\begin{align}
\hat{P}_{\text{cov,mf}}^{k,\text{u}}(K,\gamma)\!\!=\!\!\sum_{l=1}^{L} \beta_1\left(\eta/\zeta,\gamma,\alpha,l,k\right)\frac{\binom{L}{l}(-1)^{l+1}}{\left(1+\beta_2\left(\eta/\zeta,\gamma,\alpha,l\right)\right)^k},
\end{align}
where the proof is similar to Appendix B. By setting $\eta=1$, the lower bound is also obtained.

In the ZF case, the SIR of $u_0$ when served by SBS $\textbf{d}_k$ with quantized CSI is given by:
\begin{equation}
\hat{\text{SIR}}_{k,\text{zf}}=\frac{\hat{g}_{k,\text{zf}}\cdot r_k^{-\alpha}}{\sum_{\textbf{d}_i\in\Phi_K\backslash \{\textbf{d}_k\}
}\hat{g}_{i,\text{zf}}\cdot r_i^{-\alpha}+\sum_{\textbf{d}_j\in\Phi_b\backslash \Phi_K}\hat{g}_{j,\text{zf}}\cdot r_j^{-\alpha}},
\end{equation}
where $\hat{g}_{k,\text{zf}}\sim \Gamma(L-K+1,\zeta)$ is the effective channel gain of the desired signal, $\hat{g}_{i,\text{zf}}\sim \exp(\frac{1}{1-\zeta})$ is the effective channel gain of the undesired signal for SBSs $\textbf{d}_i\in\Phi_K\backslash \{\textbf{d}_k\}$ and $\hat{g}_{j,\text{zf}}\sim \exp(1)$ is the effective channel gain of the undesired signal for SBSs $\textbf{d}_j\in\Phi_b\backslash \Phi_K$ \cite{7038201}.
{
Similarly, the coverage probability is given by:
\begin{align}
\hat{P}_{\text{cov,zf}}^{k}(K,\gamma)\!\!=\!\!\mathbb{E}_{r_k,r_K}\!\!\left[\sum_{i=0}^{L-K}\frac{(-\gamma {r_k}^\alpha/\zeta)^i}{i!}\mathcal{L}_{\hat{I}_{r2}}^{(i)}(\gamma {r_k}^\alpha/\zeta)|r_k,r_K\right],
\end{align}
where $\hat{I}_{r2}=\sum_{\textbf{d}_i\in\Phi_K\backslash \{\textbf{d}_k\}}\hat{g}_{i,\text{zf}}\cdot r_i^{-\alpha}+\sum_{\textbf{d}_j\in\Phi_b\backslash \Phi_K}\hat{g}_{j,\text{zf}}\cdot r_j^{-\alpha}$. The interference $\hat{I}_{r2} $ consists of two parts, the interference $\hat{I}_1$ from the $k-1$ SBSs $\textbf{d}_i\in\Phi_K\backslash \{\textbf{d}_k\}$ and the interference $\hat{I}_2$ from the SBSs farther than $\textbf{d}_K$. The Laplace transform of $\hat{I}_1$ is given by:
\begin{align}
\mathcal{L}_{\hat{I}_1}(s)&\!\!=\!\!\left(\int_0^{r_k}\!\!\!\! \frac{1}{1+(1-\zeta)sr^{-\alpha}}\frac{2r}{r_k^2}dr\right)^{k-1} \!\!\!\!\!\frac{1}{1+(1-\zeta)sr_K^{-\alpha}} \nonumber\\
&\times\left(\int_{r_k}^{r_K} \frac{1}{1+(1-\zeta)sr^{-\alpha}}\frac{2r}{r_K^2-r_k^2}dr\right)^{K-k-1}
\end{align}
when $k<K$. When $k=K$, we have
\begin{align}
\mathcal{L}_{\hat{I}_1}(s)&=\bigg[1-\frac{2 (1-\zeta)^{2/\alpha}s^{2/\alpha}}{\alpha r_k^2 } \nonumber\\
&\times B\left(\frac{2}{\alpha},1-\frac{2}{\alpha},\frac{1}{1+(1-\zeta)sr_k^{-\alpha}}\right)\bigg]^{K-1}
\end{align}}

{
For the interference $\hat{I}_2$, its Laplace transform is $\mathcal{L}_{\hat{I}_{2}}(s)=\exp\left[-2\pi\lambda_b\int_{r_K}^{\infty}\frac{sr^{-\alpha}}{1+sr^{-\alpha}}rdr\right]$.}
{
Thus, the the coverage probability with ZF beamforming based on quantized CSI is upper bounded as:
\begin{align}
\hat{P}_{\text{cov,zf}}^{k,u}(K,\gamma)&=\sum_{l=1}^{L-K+1}\binom{L-K+1}{l}(-1)^{l+1}   \nonumber\\
&\times \mathbb{E}_{r_k,r_K}\left[\mathcal{L}_{\hat{I}_{r2}}(\kappa \gamma r_k^{\alpha}l/\zeta)|r_k,r_K\right]. \label{eqn:imperfect3}
\end{align}}
{
To simplify the expectation in (\ref{eqn:imperfect3}), we introduce a parameter $\delta_k=\frac{r_k}{r_K}$ and we have
\begin{align}
&\mathbb{E}_{r_k,r_K}\!\!\left[\mathcal{L}_{\hat{I}_{r2}}(\kappa \gamma r_k^{\alpha}l/\zeta)\right]\!\!=\!\!\left[1-\frac{2 \tau_l^{2/\alpha}}{\alpha}B\!\!\left(\frac{2}{\alpha},1-\frac{2}{\alpha},\frac{1}{1+\tau_l}\right)\!\right]^{k-1}  \nonumber\\
&~~~\times\mathbb{E}_{\delta_k}\bigg[\bigg[\frac{2 \tau_l^{2/\alpha}}{\alpha (\delta_k^{-2}-1)}
\bigg(B\left(\frac{2}{\alpha},1-\frac{2}{\alpha},\frac{1}{1+\tau_l}\right)- \nonumber\\
&~~~B\bigg(\frac{2}{\alpha},1-\frac{2}{\alpha},\frac{1}{1+\tau_l\delta_k^{\alpha}}\bigg)\bigg)\bigg]^{K-k-1}\!\!\!\frac{(1+\tau_l\delta_k^{\alpha})^{-1}}{\left[1+\beta_3(\kappa\gamma\delta_k^{\alpha}l/\zeta,\alpha)\right]^K}\bigg]. \label{eqn:imperfect1}
\end{align}
where $\tau_l=\kappa \gamma l\frac{1-\zeta}{\zeta}$ when $k<K$ and when $k=K$, we have
\begin{align}
&\mathbb{E}_{r_k,r_K}\!\!\left[\mathcal{L}_{\hat{I}_{r2}}(\kappa \gamma r_k^{\alpha}l/\zeta)|r_k,r_K\right]\!\!  \nonumber\\
&=\!\!\left[1\!-\!\frac{2\tau_l^{2/\alpha}}{\alpha}B\left(\frac{2}{\alpha},1-\frac{2}{\alpha},\frac{1}{1+\tau_l}\right)\right]^{K-1}\!\!\!\!\times\! \frac{1}{[1+\beta_3(\tau_l 1^{\alpha},\alpha)]^K}. \label{eqn:imperfect2}
\end{align}
The proof is similar to Appendix C. By substituting (\ref{eqn:imperfect1}) or (\ref{eqn:imperfect2}) into (\ref{eqn:imperfect3}), we obtain the upper bound $\hat{P}_{\text{cov,zf}}^{k,u}(K,\gamma)$. By setting $\tau_l=\gamma l\frac{1-\zeta}{\zeta}$, the lower bound is also obtained.}

{With these coverage probabilities, we can formulate and optimize the cache placement problems accordingly. The same algorithm proposed in the previous section can be applied. Similar extension for coded caching with quantized CSI holds and hence is ignored.}

\section{Numerical Results}
In this section, we first validate the tightness of the approximate coverage probability. Then, we demonstrate the performance of probabilistic caching and coded caching by treating MPC as the benchmark. {Finally, we investigate the effects of imperfect CSI in both ZF and MF.} Through these numerical results, we will reveal the role of different beamforming schemes in cache-enabled multi-antenna SCNs.

For presentation convenience, the performances of coded caching obtained by Algorithm \ref{alg:1} with O-ZF and NO-MF scheme are denoted as ``O-ZF-CC'' and ``NO-MF-CC'', respectively. The performances of optimal probabilistic caching using ZF and MF beamforming are denoted as ``ZF-OPC'' and ``MF-OPC'', respectively. For MPC, the performances of using ZF and MF beamforming are denoted as ``ZF-MPC'' and ``MF-MPC'', respectively. \footnote{{Note that we utilize the exact coverage probabilities to optimize the cache strategies in the numerical results for small $K$ ($K=3$). However, when $K$ is large ($K=6$), which causes the exact coverage probabilities hard to obtain due to the high computational complexity, we have to utilize the approximate ones to optimize the cache strategies.}}

The file popularity is assumed to follow the Zipf distribution, i.e., $p_n=\frac{1/n^\delta}{\sum_{j=1}^N1/j^\delta}$ for file $f_n$ with $\delta$ being Zipf skewness parameter. Unless otherwise stated, the other system parameters are set as follows: path loss exponent $\alpha=4$, number of total files $N=100$, cache size $M=10$, Zipf parameter $\delta=0.5$ and cluster size $K=3$.

\subsection{Validation of Analytical Results}
Figs. \ref{fig:MF}$\thicksim$\ref{fig:MF-SIC} compare the analytical results of coverage probabilities in Lemma 1, 2 and 4 with simulation results for all the considered caching and beamforming schemes. {We observe that the simulation and analytical results match well. It is also seen that the (approximate) {lower bounds (\ref{eqn:app11}), (\ref{eqn:app22}) and (\ref{eqn:app33})} are close to the analytical results.} As such, we can use them to approximate the true coverage probabilities for analysis and optimization of AFOT and AESE at large $K$ and $L$.
\begin{figure*}[t]
\vspace{-0.1cm}
\begin{minipage}[t]{0.5\linewidth}
\centering
\includegraphics[width=8.15cm]{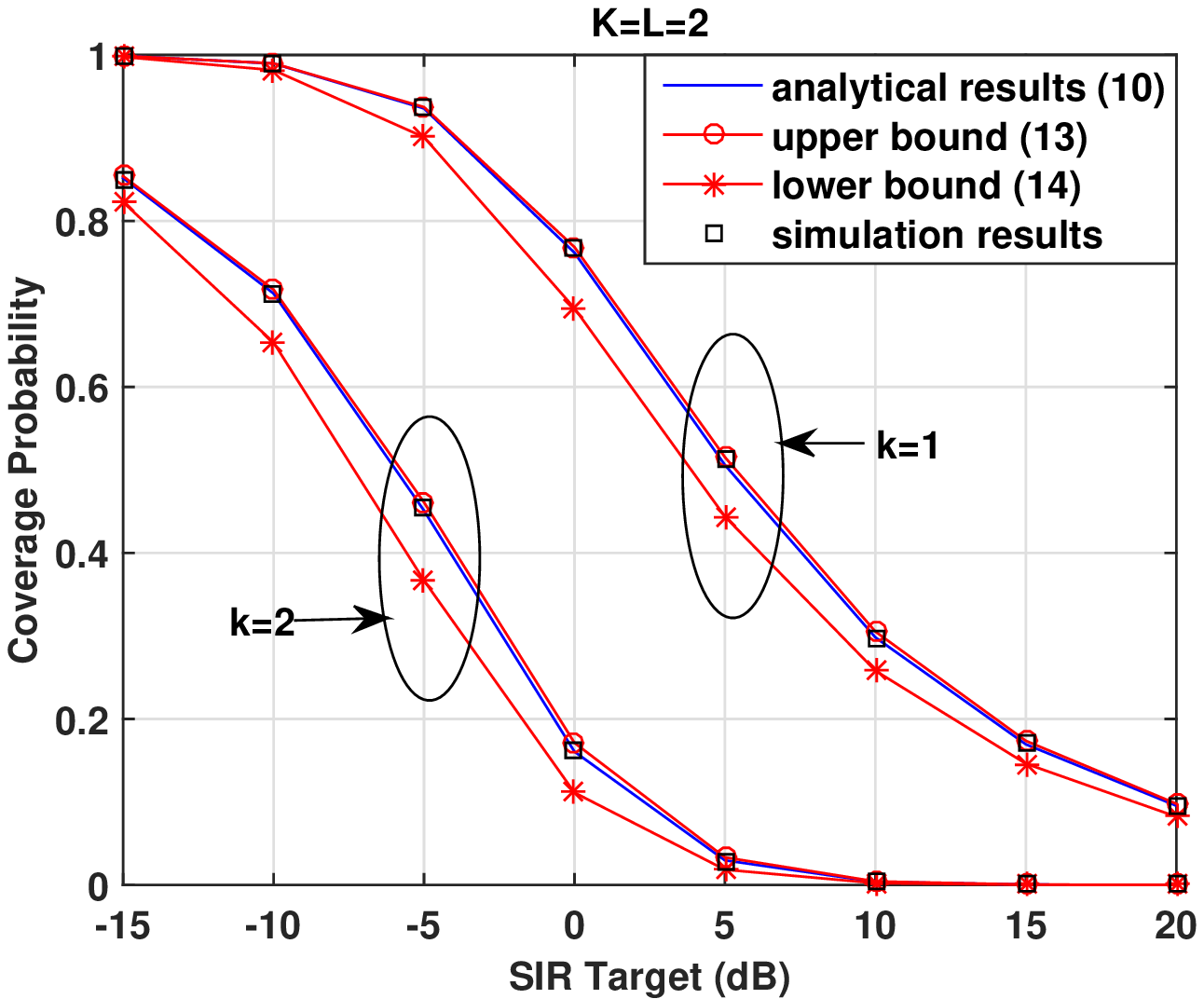}\\
\vspace{-0cm}
\caption{\small{Coverage probability in MF.}}\label{fig:MF}
\end{minipage}
\hfill
\begin{minipage}[t]{0.5\linewidth}
\centering
\includegraphics[width=8.15cm]{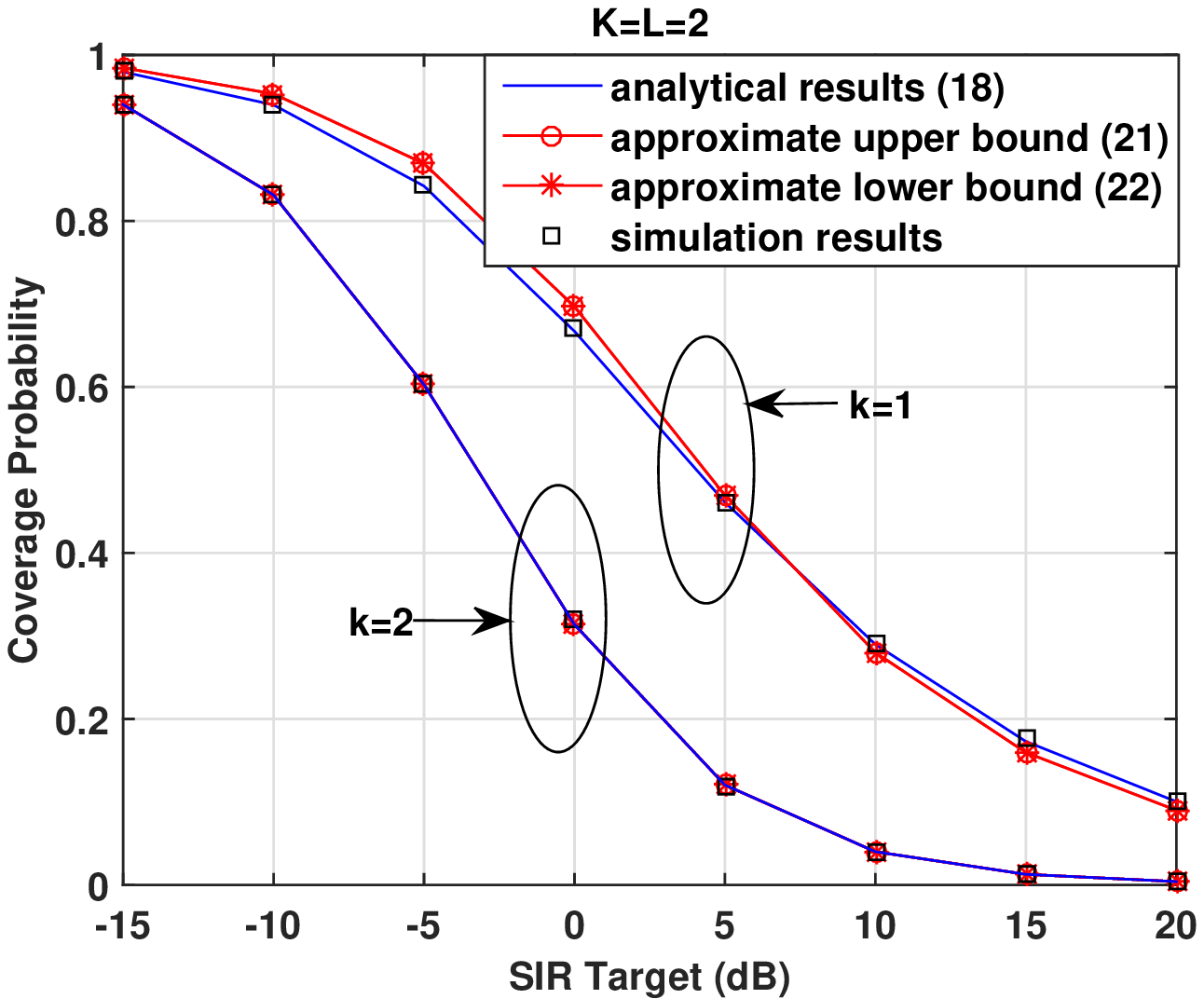}\\
\vspace{-0cm}
\caption{\small{Coverage probability in ZF and O-ZF.}}\label{fig:BF}
\end{minipage}
\vspace{-0.1cm}
\end{figure*}

\begin{figure*}[t]
\vspace{-0.1cm}
\begin{minipage}[t]{0.5\linewidth}
\centering
\includegraphics[width=8.15cm]{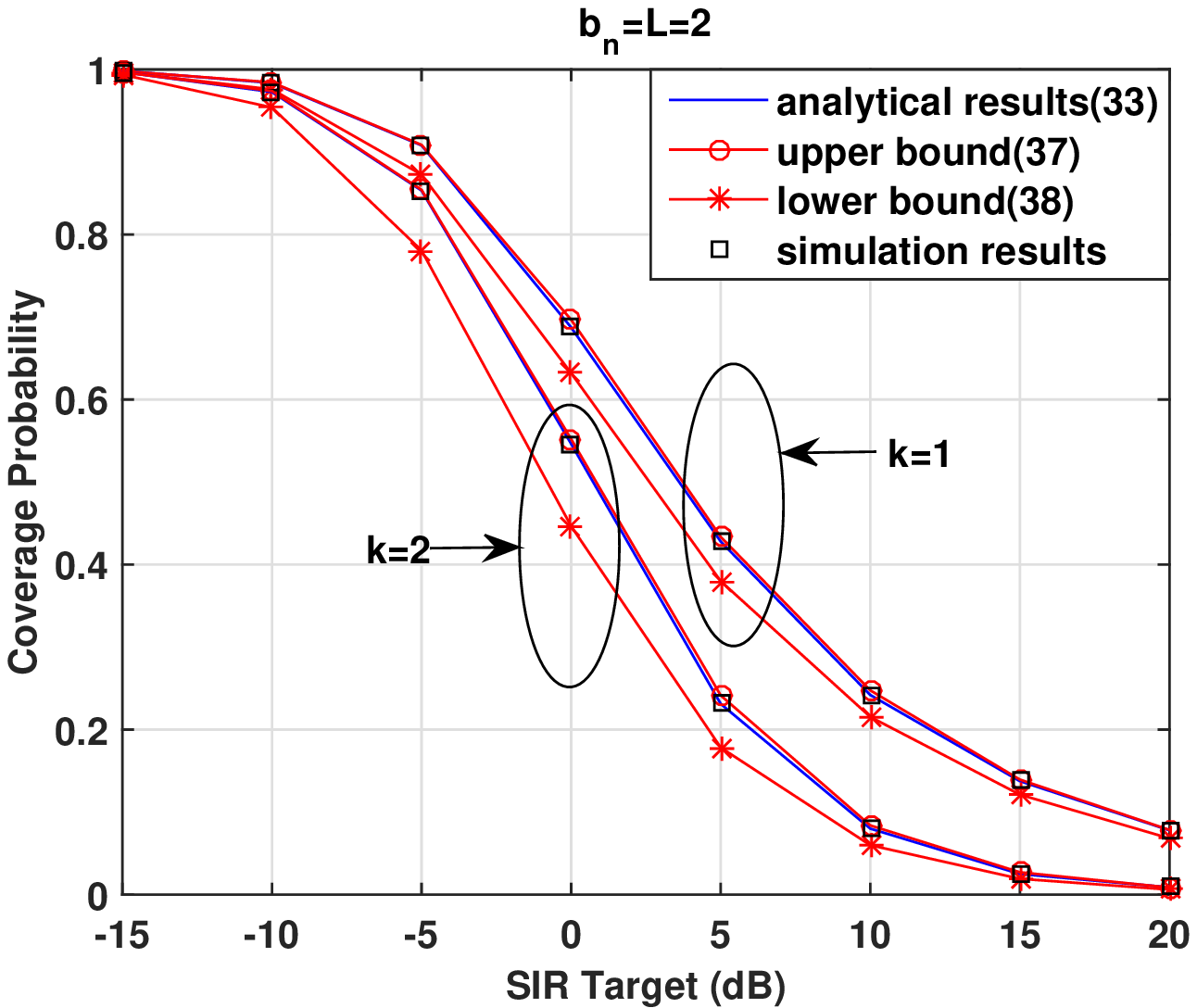}\\
\vspace{-0cm}
 \caption{\small{Coverage probability in NO-MF.}} \label{fig:MF-SIC}
\end{minipage}
\hfill
\begin{minipage}[t]{0.5\linewidth}
\centering
\includegraphics[width=8.15cm]{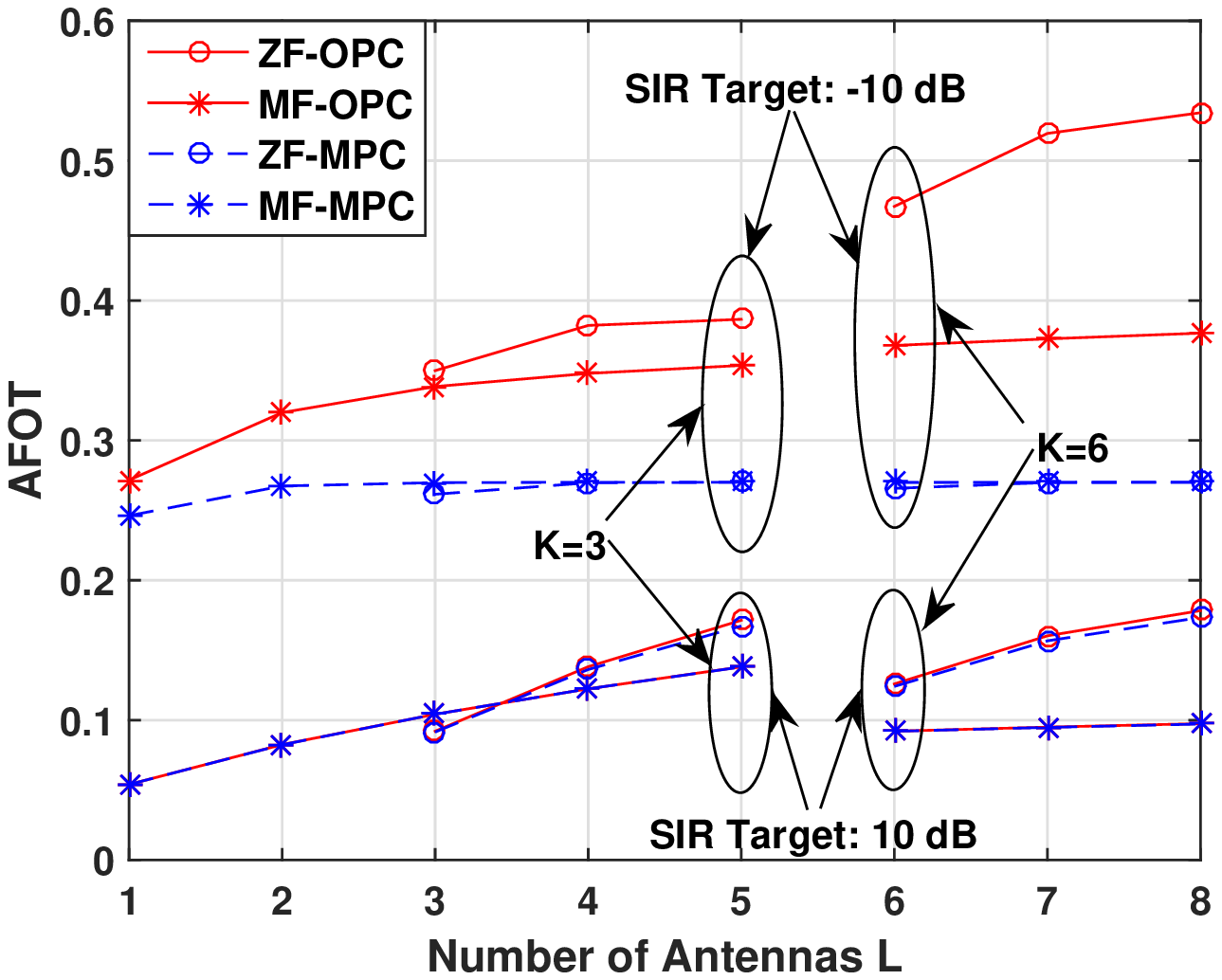}\\
\vspace{-0cm}
\caption{\small{{AFOT for different $L$ in probabilistic caching.}}} \label{fig:opc_1}
\end{minipage}
\vspace{-0.1cm}
\end{figure*}
\subsection{Comparison between MF and ZF for probabilistic caching }
\subsubsection{AFOT}
{Fig. \ref{fig:opc_1}} illustrates the AFOT of OPC and MPC for different number of antennas. It is seen that OPC outperforms MPC at low SIR target ($\gamma=-10$ dB) since it provides chances to users connecting to multiple SBSs rather than the nearest SBS. However, this gain becomes limited at high SIR target ($\gamma=10$ dB). This is because users have a high probability to only connect to the nearest SBS when the decoding threshold is very stringent. Thus, OPC degenerates to MPC at high SIR target. By increasing the number of antennas $L$, AFOT for both ZF and MF are increasing but the gain diminishes as $L$ grows. Besides, the performance gain of OPC over MPC becomes larger when $L$ increases, especially at low SIR target.

Comparing two different beamformings, it is observed that when $L=K$, MF outperforms ZF slightly for OPC at high SIR target and MPC when $K=3$. This is because when the number of antennas equals the cluster size, the effective channel gain of the desired signal with ZF beamforming is much smaller than that of MF beamforming although the former suffers less interference. However, when the number of antennas is larger than the cluster size, SBSs have enough spatial dimensions to null out the intra-cluster interference and strengthen the effective channel gain of the desired signals simultaneously. Therefore, ZF outperforms MF when $L>K$.
\subsubsection{AESE}
\begin{figure*}[t]
\vspace{-0.1cm}
\begin{minipage}[t]{0.5\linewidth}
\centering
\includegraphics[width=8.15cm]{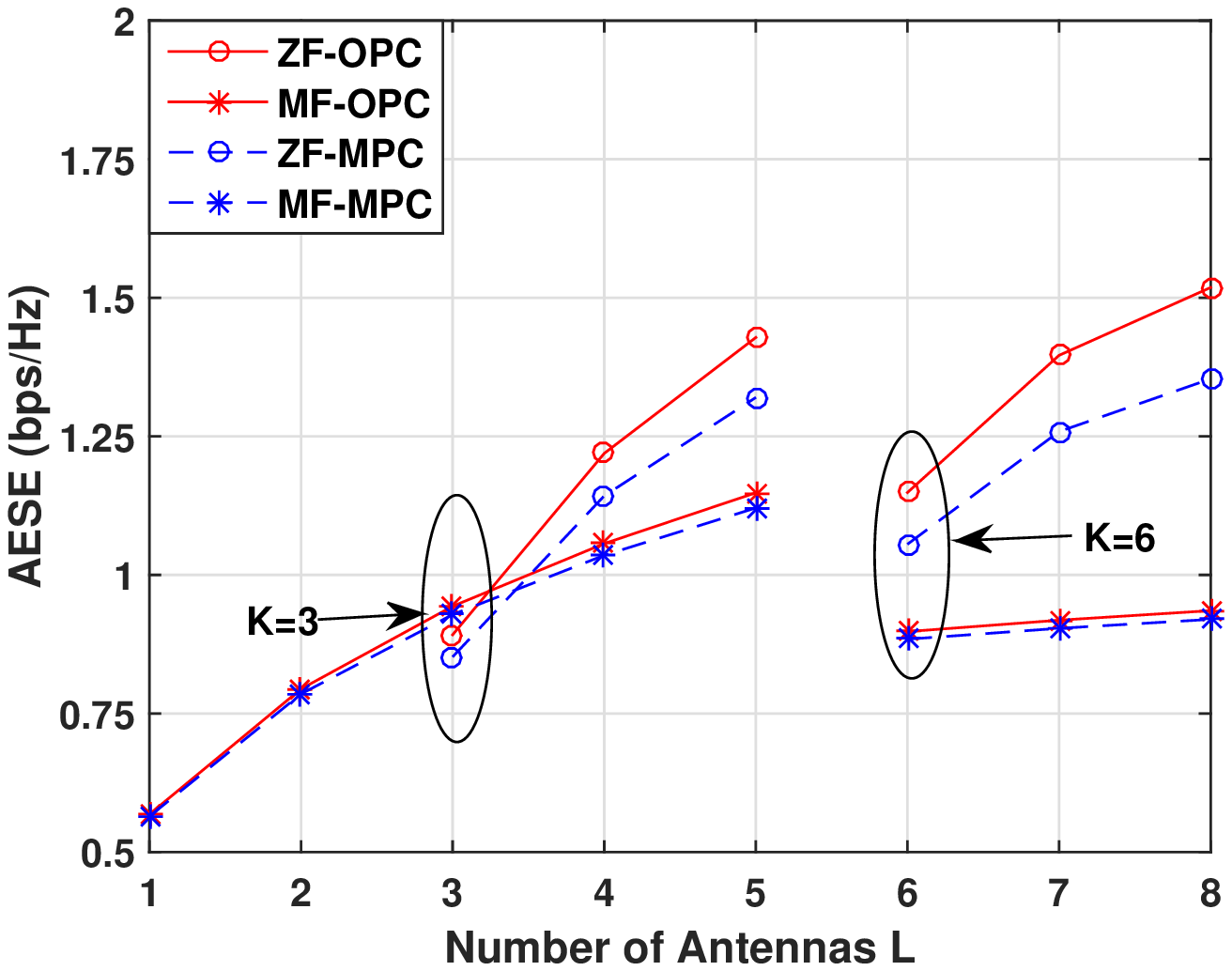}\\
\vspace{-0cm}
\caption{\small{{AESE for different $L$ in probabilistic caching.}}} \label{fig:AESE_2}
\end{minipage}
\hfill
\begin{minipage}[t]{0.5\linewidth}
\centering
\includegraphics[width=8.15cm]{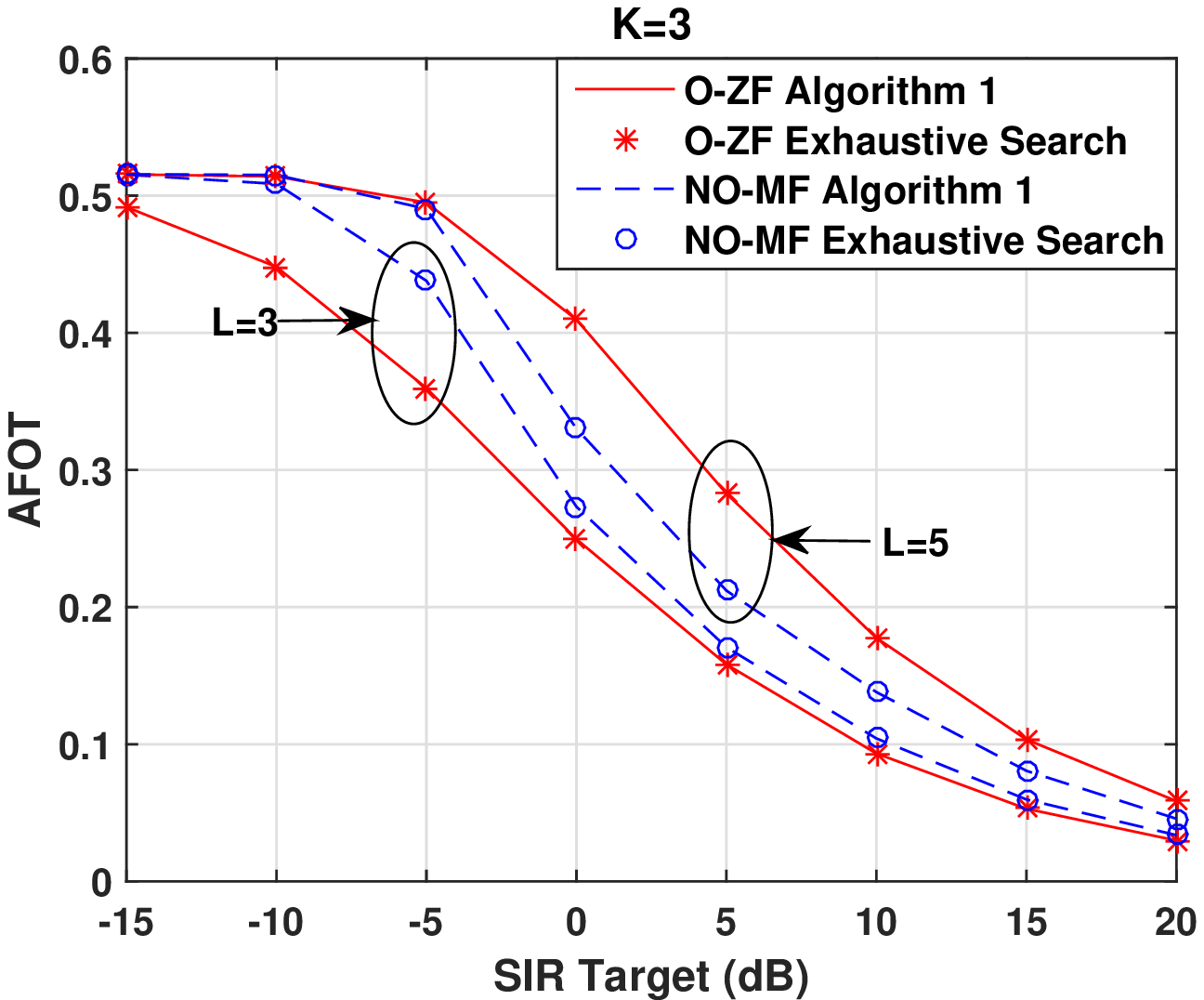}\\
\vspace{-0cm}
\caption{\small{Validation of Algorithm $1$ respect to AFOT.}} \label{fig:Alg1}
\end{minipage}
\vspace{-0.1cm}
\end{figure*}
{Fig. \ref{fig:AESE_2}} illustrates the AESE of OPC and MPC for different number of antennas. It is observed that OPC outperforms MPC in both MF and ZF while the performance gap between OPC and MPC when ZF is applied is larger than that when MF is applied. This is because users suffer strong interference in MF, which causes that the AESE is mainly limited by the performance of the nearest SBS. Besides, the performance gain of OPC over MPC becomes larger by increasing the number of antennas. As such, we can conclude that OPC benefits better from multiple antennas than MPC.

For different beamforming types, {Fig. \ref{fig:AESE_2}} shows that ZF outperforms MF when $L>K$. The reasons are similar to the AFOT case.
\subsection{Comparison between MF and ZF for coded caching }
Since Algorithm \ref{alg:1} is a greedy algorithm, we first validate its effectiveness. For illustration purpose, we only consider the AFOT maximization problem for validation. Fig. \ref{fig:Alg1} shows that the results obtained by Algorithm \ref{alg:1} are almost identical to the optimal solution obtained by exhaustive search. Therefore, we can utilize Algorithm \ref{alg:1} to obtain the coded caching solution.
\subsubsection{AFOT}
Fig. \ref{fig:cc_2} illustrates the AFOT of CC and MPC for different antenna number. At low SIR target ($\gamma=-10$ dB), CC has a great performance gain over MPC since it can make a better utilization of collaborative caching among multiple SBSs. However, at high SIR target ($\gamma=10$ dB), CC outperforms MPC slightly with ZF and it even performs the same as MPC with MF. This is because users have a large probability to only connect to the nearest SBS when the decoding threshold is very stringent, and hence the performance of MPC is close to that of CC. Besides, AFOT for both O-ZF and NO-MF increase as $L$ grows, but the gain is diminishing. Moreover, the performance gap between CC and MPC also becomes larger by increasing $L$, which means CC can enjoy a higher performance gain from multiple antennas than MPC.

{Fig. \ref{fig:cc_2} shows that NO-MF outperforms O-ZF at low SIR target. This is because in NO-MF, the effective channel gain of the desired signal is larger than O-ZF and the strong interference from closer SBSs is canceled simultaneously via SIC-based receiver. When the SIR target is high, users have a large probability to only be served by the nearest SBS. Therefore, O-ZF performs better than NO-MF when $L>K$, which is similar to the probabilistic caching case.}

\subsubsection{AESE}
Fig. \ref{fig:AESE_CC} illustrates the AESE of CC and MPC for different number of antennas. It is observed that when MF beamforming is applied, CC performs almost the same as MPC for all $L$'s. This is because the AESE is limited by the minimum delivery rate of the serving SBSs in the concurrent transmission, and thus users prefer only connecting to the nearest SBS. When ZF beamforming is applied, on the other hand, CC outperforms MPC since it can make a better utilization of multiple SBSs. Moreover, ZF outperforms MF in coded caching. As such, in contrast to the previous finding in \cite{xu2017modeling} where CC performs nearly to MPC in terms of AESE in the single-antenna system, we find that CC outperforms considerably MPC in the multi-antenna case if proper beamforming is chosen.


\begin{figure*}[t]
\vspace{-0.1cm}
\begin{minipage}[t]{0.5\linewidth}
\centering
\includegraphics[width=8.15cm]{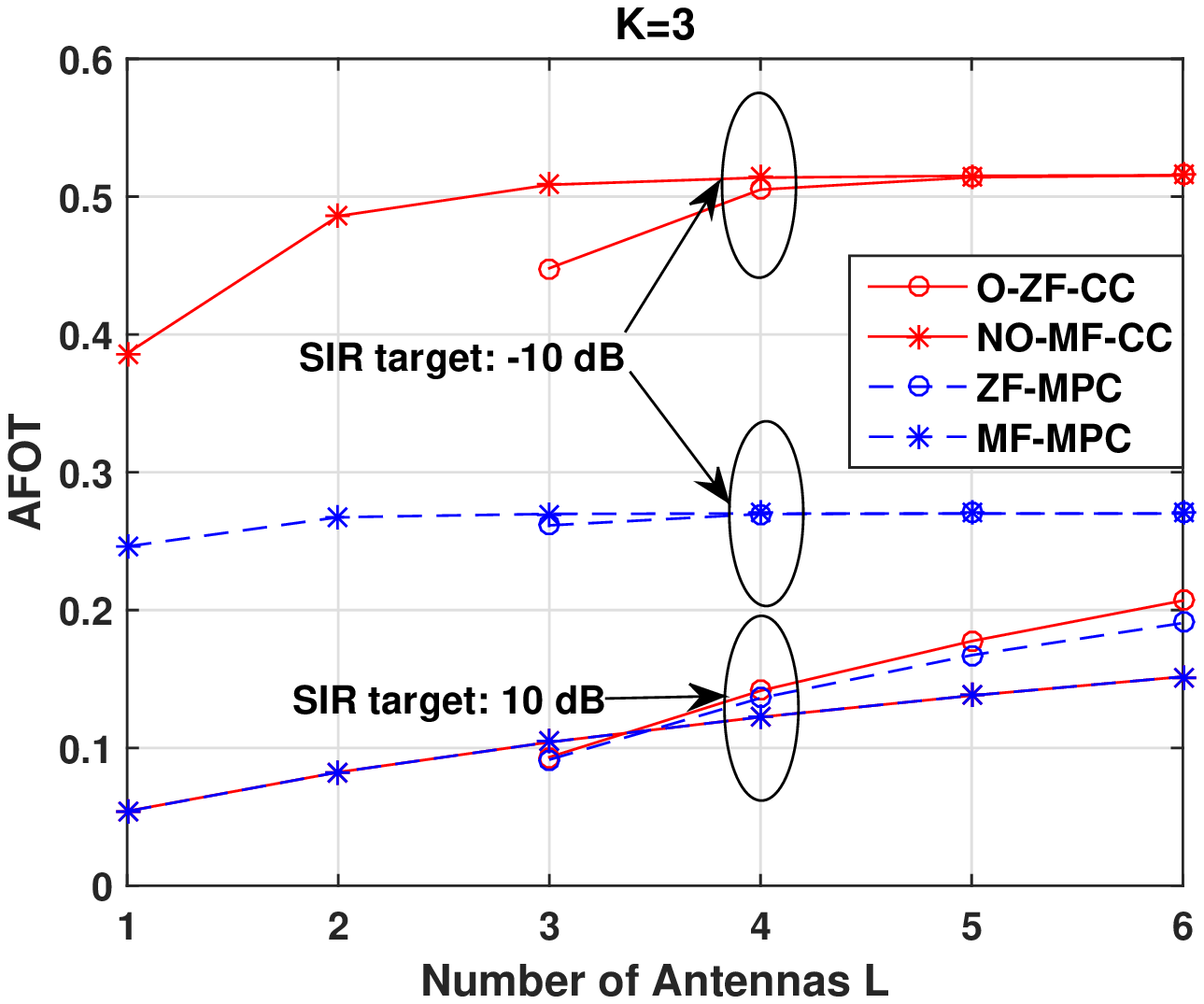}\\
\vspace{-0cm}
\caption{\small{AFOT for different $L$ in coded caching.}} \label{fig:cc_2}
\end{minipage}
\hfill
\begin{minipage}[t]{0.5\linewidth}
\centering
\includegraphics[width=8.15cm]{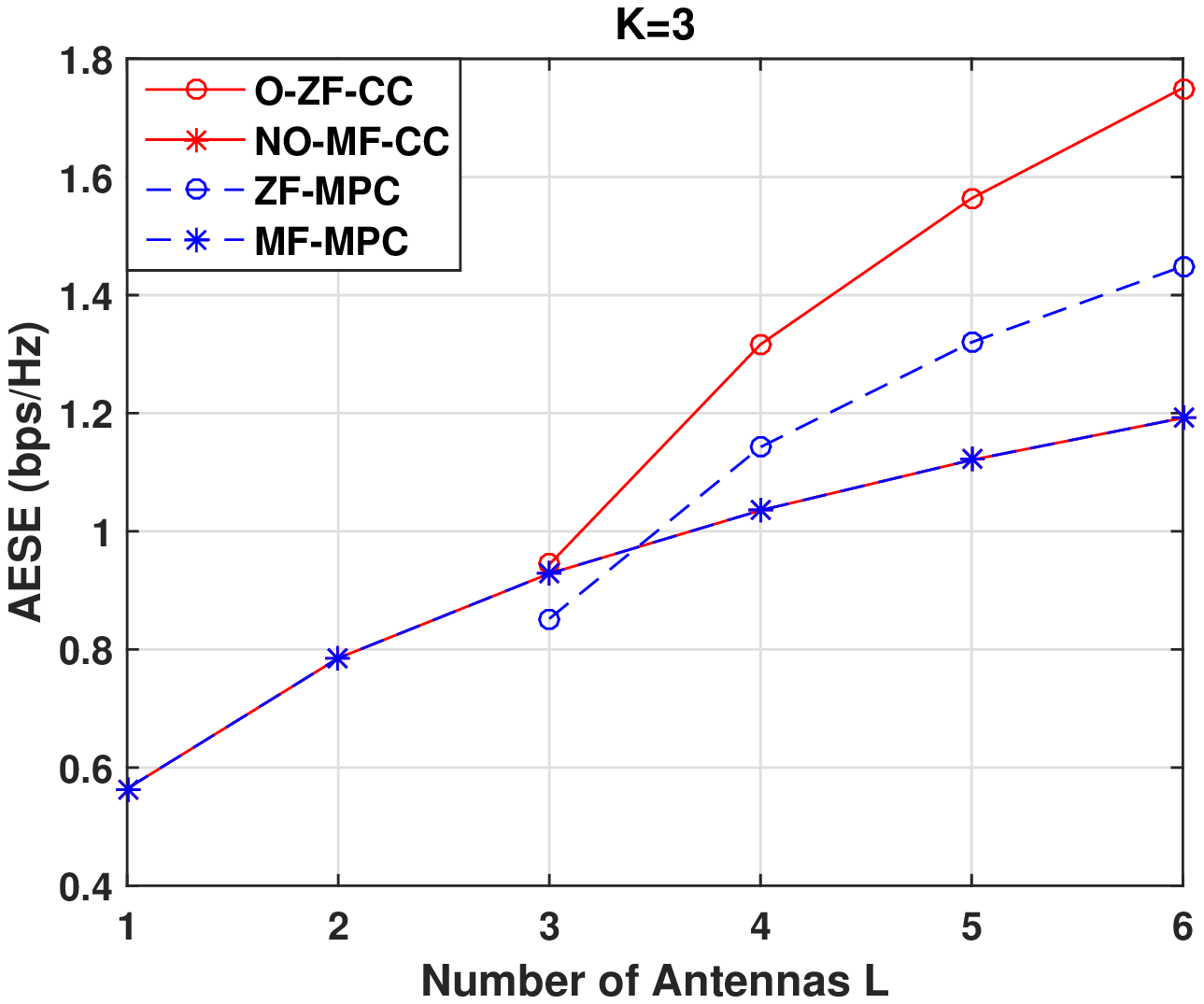}\\
\vspace{-0cm}
\caption{\small{AESE for different $L$ in coded caching.}} \label{fig:AESE_CC}
\end{minipage}
\vspace{-0.1cm}
\end{figure*}

\subsection{Impact of Imperfect CSI}
\begin{figure*}[t]
\vspace{-0.1cm}
\begin{minipage}[t]{0.5\linewidth}
\centering
\includegraphics[width=8.15cm]{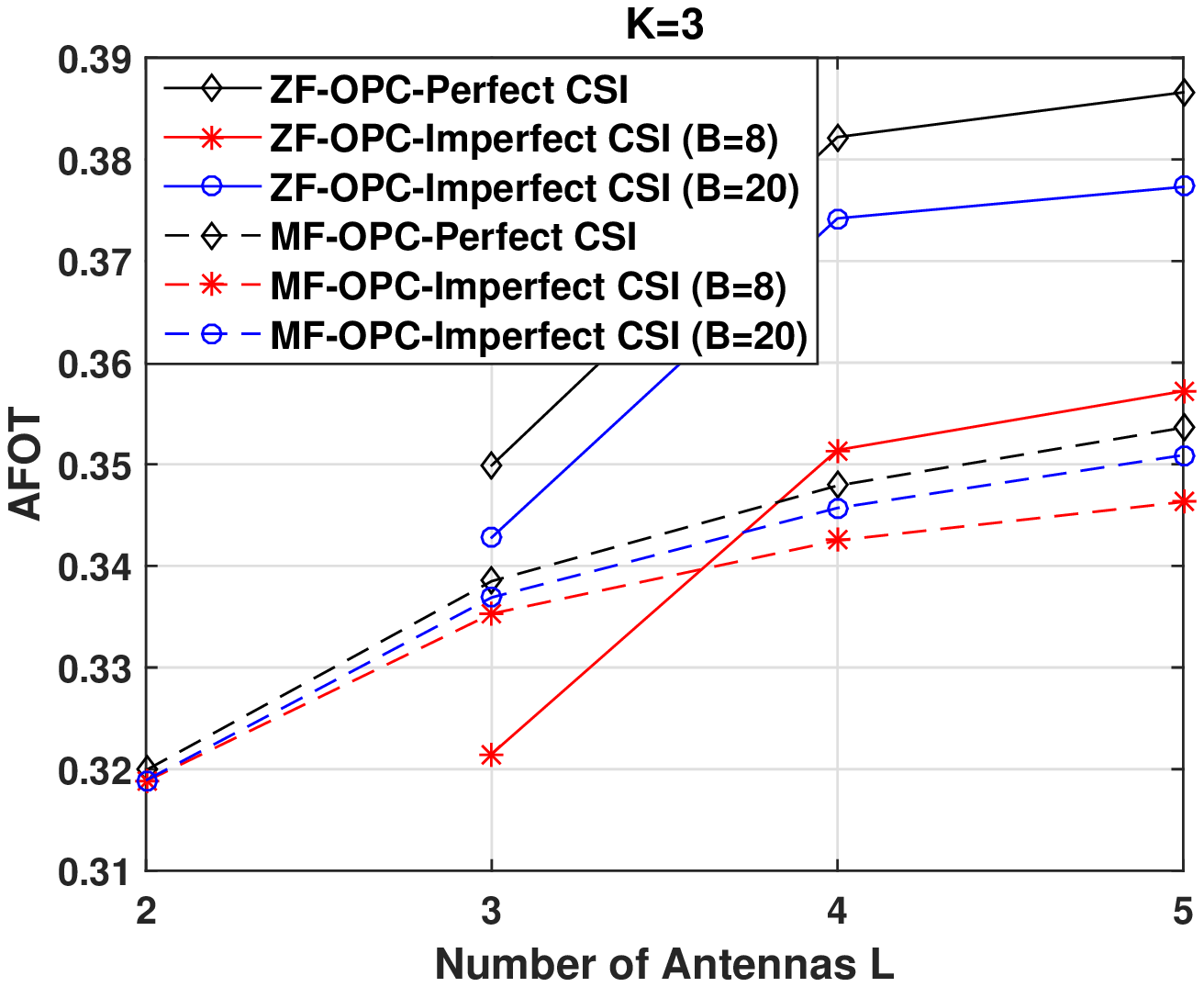}\\
\vspace{-0cm}
\caption{\small{AFOT for different $L$ in probabilistic caching with imperfect CSI ($\gamma=-10$ dB).}} \label{fig:CSI1}
\end{minipage}
\hfill
\begin{minipage}[t]{0.5\linewidth}
\centering
\includegraphics[width=8.15cm]{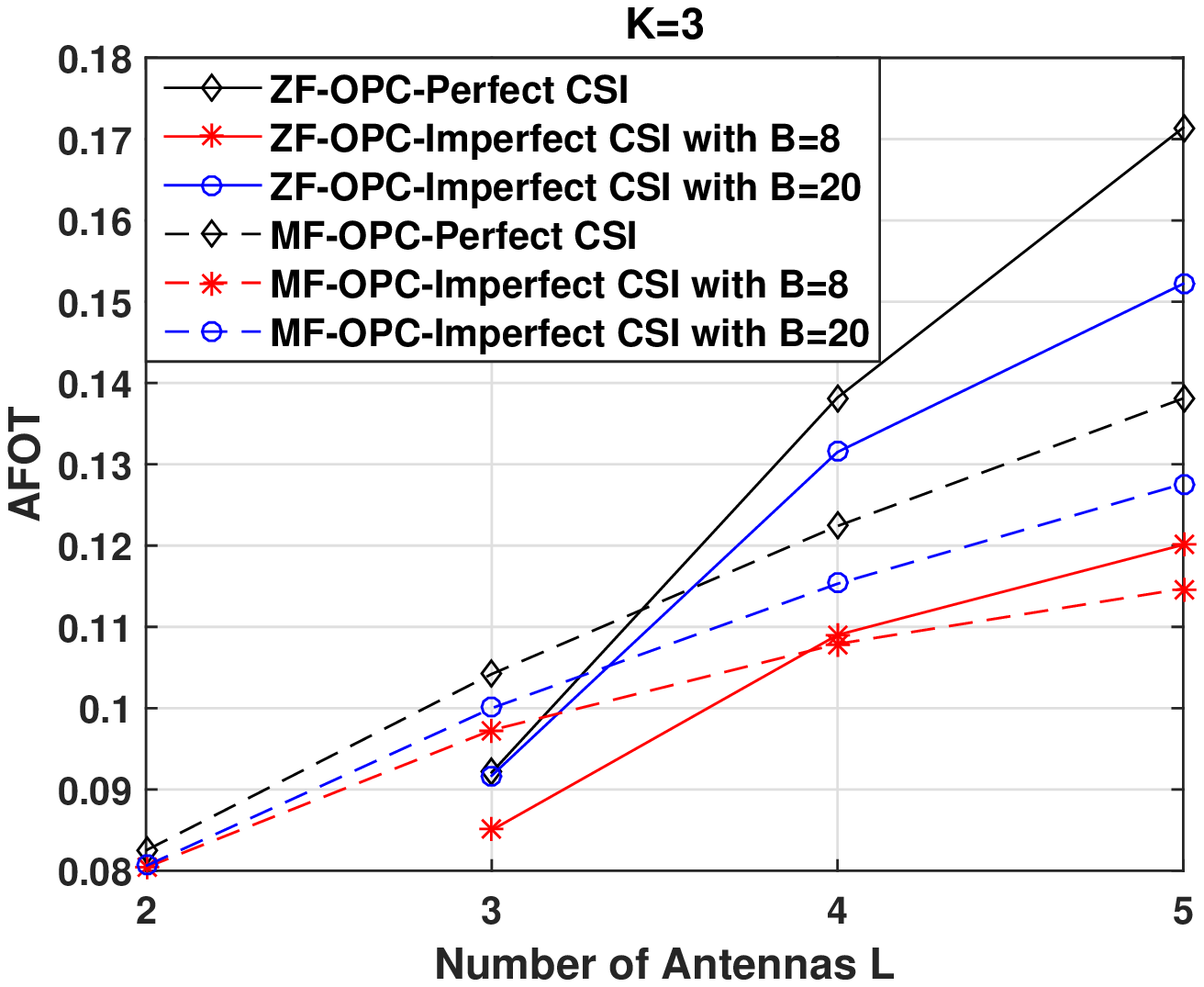}\\
\vspace{-0cm}
\caption{\small{AFOT for different $L$ in probabilistic caching with imperfect CSI ($\gamma=10$ dB).}} \label{fig:CSI2}
\end{minipage}
\vspace{-0.1cm}
\end{figure*}
Fig. \ref{fig:CSI1} and Fig. \ref{fig:CSI2} show that imperfect CSI degrades the performance of caching in both ZF and MF. Comparing ZF and MF, it is observed MF is more robust than ZF with imperfect CSI since each SBS needs to know more CSI to do the coordination when ZF is applied. Besides, the performance comparison between ZF and MF for differen $L$ with imperfect CSI is similar to the perfect CSI case. Moreover, when the number of feedback bits $B$ increases, the performance of imperfect CSI gets closer to that of perfect CSI case.

\section{Conclusion}
In this work, we analyze and optimize the probabilistic caching and coded caching in cache-enabled multi-antenna SCNs. We propose a user-centric SBS clustering and transmission framework, which allows each user to connect with the $K$ nearest SBSs within its cluster. We obtain approximate and compact integral expressions of AFOT and AESE, respectively, with MF and ZF beamforming. Then we formulate the cache placement problem to find the optimal cache solutions. The probabilistic cache placement problem is shown to be convex and optimal solutions are obtained. The coded cache placement problem is a MCKP and we solve it with a greedy-based low-complexity algorithm efficiently. {We also extend the analysis and optimization above to the imperfect case.} Numerical results show that multiple antennas can boost the advantage of probabilistic caching and coded caching over the traditional most popular caching with the proper use of beamforming. Numerical results also demonstrate the performance difference between MF and ZF under different number of antennas {in both perfect and imperfect CSI cases.}

\begin{appendices}
\begin{center}
Appendix A: Proof of Lemma 1
\end{center}

The coverage probability $P_{\text{cov,mf}}^{k}(K,\gamma)$ can be written as:
\begin{align}
&P_{\text{cov,mf}}^{k}(K,\gamma)=\mathbb{E}_{r_k,I_{r1}}\left[P\left[g_{k,\text{mf}} \geq \gamma r_k^{\alpha} I_{r1}\right]|r_k,I_{r1}\right] \nonumber \\
&~~~~\overset{(a)}{=}\mathbb{E}_{r_k,I_{r1}}\left[\sum_{i=0}^{L-1}\frac{(\gamma r_k^{\alpha} I_{r1})^i}{i!}e^{-\gamma r_k^{\alpha} I_{r1}}|r_k,I_{r1}\right]   \nonumber \\
&~~~~\overset{(b)}{=}\mathbb{E}_{r_k}\left[\sum_{i=0}^{L-1}\frac{(-\gamma {r_k}^\alpha)^i}{i!}\mathcal{L}_{I_{r1}}^{(i)}(\gamma {r_k}^\alpha)|r_k\right],
\end{align}
where (a) follows from the series expansion of the CCDF $\overline{F}(x;m,\theta)$ for Gamma distribution $\Gamma(m,\theta)$ when $\theta$ is a positive integer, i.e., $\overline{F}(x;m,\theta)=\sum_{i=0}^{m-1}\frac{1}{i!}(\frac{x}{\theta})^i e^{-\frac{x}{\theta}}$; and (b) follows from the derivative property of the Laplace transform: $\mathbb{E}[X^ie^{-sX}]=(-1)^i \mathcal{L}_X^{(i)}(s)$.

The interference $I_{r1}$ consists of two parts, the interference $I_1$ from the $k-1$ SBSs closer to $u_0$ than the serving SBS $\textbf{d}_k$ and the interference $I_2$ from the SBSs farther than $\textbf{d}_k$.
The Laplace transform of $I_1$ is given by:
\begin{align}
\mathcal{L}_{I_1}(s)&=\mathbb{E}_{\Phi_b,g_{j,\text{mf}}}\left[\prod_{\textbf{d}_j\in\Phi_b\bigcap \mathcal{B}(0,r_k) \backslash \{\textbf{d}_k\}}\exp\left(-sg_{j,\text{mf}}\cdot r_j^{-\alpha}\right)\right] \nonumber \\
&\overset{(a)}{=}\mathbb{E}_{\Phi_b}\left[\prod_{\textbf{d}_j\in\Phi_b\bigcap \mathcal{B}(0,r_k) \backslash \{\textbf{d}_k\}}\frac{1}{1+sr_j^{-\alpha}}\right]  \nonumber \\
&\overset{(b)}{=}\left(\int_0^{r_k} \frac{1}{1+sr^{-\alpha}}\frac{2r}{r_k^2}dr\right)^{k-1}, \label{eqn:L1}
\end{align}
where (a) follows from  i.i.d. exponential distribution with unit mean of $g_{j,\text{mf}}$ and it is also independent with the HPPP $\Phi_b$. Step (b) follows from that the locations of $K-1$ SBSs are independently and uniformly distributed in the circle area $\mathcal{B}(0,r_k)\triangleq \{\textbf{x}\in\mathbb{R}^2| \Vert\textbf{x}\Vert\leq r_k\}$.
The Laplace transform of $I_2$ is given by:
\begin{align}
\mathcal{L}_{I_2}(s)
&=\mathbb{E}_{\Phi_b}\left[\prod_{\textbf{d}_j\in\Phi_b\backslash \mathcal{B}(0,r_k)}\frac{1}{1+sr_j^{-\alpha}}\right] \nonumber\\
&=\exp\left(-2\pi\lambda_b\int_{r_k}^{\infty}\frac{sr^{-\alpha}}{1+sr^{-\alpha}}rdr\right), \label{eqn:L2}
\end{align}
where the last step follows from the probability generating functional (PGFL) of the HPPP. Thus, Lemma 1 is proved.
\begin{center}
Appendix B: Proof of Theorem 1
\end{center}
%

%

From the Alzer's Inequality \cite{huang2007space, alzer1997some}, the the coverage probability is upper bounded as:
\begin{align}
&P_{\text{cov,mf}}^{k}(K,\gamma)=\mathbb{E}_{r_k,I_{r1}}\left[P\left[g_{k,\text{mf}} \geq \gamma r_k^{\alpha} I_{r1}\right]|r_k,I_{r1}\right] \nonumber \\
&~~~\leq \sum_{l=1}^{L}\binom{L}{l}(-1)^{l+1} \mathbb{E}_{r_k,I_{r1}}\left[e^{-\eta \gamma r_k^{\alpha} I_{r1} l}|r_k,I_{r1}\right]  \nonumber \\
&=\sum_{l=1}^{L}\binom{L}{l}(-1)^{l+1} \mathbb{E}_{r_k}\left[\mathcal{L}_{I_{r1}}(\eta \gamma r_k^{\alpha}l)|r_k\right]. \label{eqn:22}
\end{align}
To simplify (\ref{eqn:22}), we first rewrite the Laplace transform of the interference $I_{r1}$ as:
\begin{align}
\mathcal{L}_{I_{r1}}(s)&=\left(\int_0^{r_k} \frac{1}{1+sr^{-\alpha}}\frac{2r}{r_k^2}dr\right)^{k-1} \nonumber \\
& \times\exp\left(-2\pi\lambda_b\int_{r_k}^{\infty}\frac{sr^{-\alpha}}{1+sr^{-\alpha}}rdr\right) \nonumber \\
&=\left[1-\frac{2 s^{2/\alpha}}{\alpha r_k^2 }B\left(\frac{2}{\alpha},1-\frac{2}{\alpha},\frac{1}{1+sr_k^{-\alpha}}\right)\right]^{k-1}  \nonumber \\
&\times \exp \left[-2\pi\lambda_b \frac{s^{\frac{2}{\alpha}}}{\alpha}B^{'}\left(\frac{2}{\alpha},1-\frac{2}{\alpha},\frac{1}{1+sr_k^{-\alpha}}\right)\right],
\end{align}
where the last step follows by first replacing $s^{-\frac{1}{\alpha}}r$ with $u$, then replacing $\frac{1}{1+u^{-\alpha}}$ with $v$. Therefore, the Laplace transform in (\ref{eqn:22}) can be written as:
\begin{align}
\mathcal{L}_{I_{r1}}(\eta \gamma r_k^{\alpha}l)&=\beta_1(\eta,\gamma,\alpha,l,k) \exp\left(-\pi\lambda_b r_k^2 \beta_2\left(\eta,\gamma,\alpha,l\right)\right),
\end{align}
where $\beta_1(\eta,\gamma,\alpha,l,k)$ and $\beta_2(\eta,\gamma,\alpha,l)$ are defined for notation simplicity, given by (\ref{eqn:beta1}) and (\ref{eqn:beta2}), respectively, with $x=\eta$. Next we need to calculate the expectation of $\mathcal{L}_{I_{r1}}(\eta \gamma r_k^{\alpha}l)$ over $r_k$ from (\ref{eqn:22}). It is observed that only the inter-cluster interference is related to $r_k$. Therefore, evaluating the expectation of $\exp(-\pi\lambda_b r_k^2 \beta_2(\eta,\gamma,\alpha,l))$ is enough, which is given by:
\begin{align}
&\mathbb{E}_{r_k}\left[\exp\left(-\pi\lambda_b r_k^2 \beta_2(\eta,\gamma,\alpha,l)\right)\right] \nonumber \\
&=\int_0^\infty \exp(-\pi\lambda_b r_k^2 \beta_2(\eta,\gamma,\alpha,l)) \frac{2(\lambda_b\pi r_k^2)^k}{r_k\Gamma(k)}\exp(-\lambda_b \pi r_k^2) dr_k \nonumber \\
&\overset{(a)}{=}\int_0^\infty  \left[\frac{z}{1+\beta_2(\eta,\gamma,\alpha,l)}\right]^{k-1}\!\!\times\frac{e^{-z}}{\Gamma(k)\left(1+\beta_2(\eta,\gamma,\alpha,l)\right)}dz \nonumber \\
&\overset{(b)}{=}\left[\frac{1}{1+\beta_2(\eta,\gamma,\alpha,l)}\right]^k, \label{eqn:expectation}
\end{align}
where step (a) follows from the change of variable $z=\pi \lambda_b r_k^2 (1+\beta_2(\eta,\gamma,\alpha,l))$
and step (b) follows from the Gamma distribution property that $\int_0^\infty t^k e^{-\lambda t}dt=\frac{k!}{\lambda^{k+1}}$.

By substituting (\ref{eqn:expectation}) into (\ref{eqn:22}), we obtain the upper bound of the coverage probability (\ref{eqn:app1}). The lower bound (\ref{eqn:app11}) can be similarly proved by letting $\eta=1$ in the above derivations. This theorem is thus proved.
\begin{center}
Appendix C: Proof of Theorem 2
\end{center}

Similar to the proof of Theorem $1$, the coverage probability in ZF can be upper bounded by:
\begin{align}
P_{\text{cov,zf}}^{k}(K,\gamma) &\leq \sum_{l=1}^{L-K+1}\binom{L-K+1}{l}(-1)^{l+1}  \nonumber\\
&~~~\times \mathbb{E}_{r_k,r_K}[\mathcal{L}_{I_{r2}}(\kappa \gamma r_k^{\alpha}l)|r_k,r_K]. \label{eqn:upper bound}
\end{align}
{By substituting $r_K$ for $r_k$ in (\ref{eqn:L2}), we obtain the Laplace transform of inter-cluster interference, which is given by:}  
\begin{align}
\mathcal{L}_{I_{r2}}(s)&=\exp\left[-2\pi\lambda_b\int_{r_K}^{\infty}\frac{sr^{-\alpha}}{1+sr^{-\alpha}}rdr\right].  \label{eqn:L_2}
\end{align}
By introducing a parameter $\delta_k=\frac{r_k}{r_K}$, the Laplace transform in (\ref{eqn:upper bound}) can be written as:
\begin{align}
\mathcal{L}_{I_{r2}}(\kappa \gamma r_k^{\alpha}l)&=\exp\left(-2\pi\lambda_b\int_{r_K}^\infty\frac{r^{-\alpha}\kappa \gamma r_k^{\alpha}l}{1+r^{-\alpha}\kappa \gamma r_k^{\alpha}l}rdr\right) \nonumber \\
&=\exp\left(-2\pi\lambda_b\int_{r_K}^\infty\frac{r}{1+(\frac{r}{r_K})^{\alpha}(\kappa \gamma \delta_k^{\alpha}l)^{-1}}dr\right) \nonumber \\
&=\exp\left(-\pi\lambda_b r_K^2(\kappa \gamma \delta_k^{\alpha}l)^{\frac{2}{\alpha}}\int_{(\kappa \gamma \delta_k^{\alpha}l)^{-\frac{2}{\alpha}}}^{\infty}\frac{1}{1+v^{\frac{\alpha}{2}}}dv\right),\label{eqn:Laplace}
\end{align}
where the last step follows from the change of variable $v=\left[\frac{r}{r_K}\left(\frac{1}{\kappa \gamma \delta_k^{\alpha}l}\right)^{\frac{1}{\alpha}}\right]^2$.
For notation simplicity, we let
\begin{align}
\beta_3(\kappa \gamma \delta_k^{\alpha}l,\alpha)=(\kappa \gamma \delta_k^{\alpha}l)^{\frac{2}{\alpha}}\int_{(\kappa \gamma \delta_k^{\alpha}l)^{-\frac{2}{\alpha}}}^{\infty}\frac{1}{1+v^{\frac{\alpha}{2}}}dv.  \label{eqn:beta3}
\end{align}
From (\ref{eqn:Laplace}), it is observed that we need to calculate the expectation over $\delta_k$ and $r_K$, rather than $r_k$ and $r_K$ as in (\ref{eqn:upper bound}). Thus, we first calculate the expectation of (\ref{eqn:Laplace}) over $r_K$.
\begin{align}
&\mathbb{E}_{r_K}\left[\mathcal{L}_{I_{r2}}\left(\kappa \gamma (\delta_k r_K)^{\alpha}l\right)|\delta_k,r_K\right] \nonumber\\
&=\int_0^\infty \exp(-\pi\lambda_b r_K^2 \beta_3(\kappa \gamma \delta_k^{\alpha}l,\alpha))\times \frac{2(\lambda_b\pi r_K^2)^K}{r_K\Gamma(K)} \nonumber\\
&~~~~\times \exp(-\lambda_b \pi r_K^2) dr_K  \nonumber\\
&=\frac{1}{\left[1+\beta_3(\kappa \gamma \delta_k^{\alpha}l,\alpha)\right]^K}, \label{eqn:beta33}
\end{align}
where the last step follows from the change of variables similar to (\ref{eqn:expectation}). Therefore, $P_{\text{cov,zf}}^{k,\text{u}}(K,\gamma)$ is given by:
\begin{align}
&P_{\text{cov,zf}}^{k,\text{u}}(K,\gamma)=\mathbb{E}_{\delta_k}\left[\sum_{l=1}^{L-K+1} \frac{\binom{L-K+1}{l}(-1)^{l+1}}{\left[1+\beta_3(\kappa \gamma \delta_k^{\alpha}l,\alpha)\right]^K}\right]. \label{eqn:upper bound1}
\end{align}

To obtain the expectation above over $\delta_k$, we first need to know the pdf of $\delta_k$. Utilizing the joint pdf of $r_k$ and $r_K$ given in (\ref{eqn:joint pdf}), the CDF of $\delta_k$ is given by:
\begin{align}
P[\delta_k \leq x]&=P[r_k \leq xr_K] \nonumber\\
&=\int_0^\infty \int_0^{xr_K} f_{R_k,R_K}(r_k,r_K)dr_kdr_K \nonumber\\
&=\int_0^\infty \int_0^{xr_K}\frac{4r_k r_k^{2(k-1)}r_K}{\Gamma(K-k)\Gamma(k)}(\lambda_b\pi)^K  \nonumber\\
&~~~\times (r_K^2-r_k^2)^{K-k-1}\exp(-\lambda_b \pi r_K^2)dr_kdr_K \nonumber\\
&=1-\sum_{i=0}^{k-1}\frac{(K-1)!x^{2(k-1-i)}(1-x^2)^{K-k+i}}{(K-k+i)!(k-1-i)!}, \label{eqn:CDF}
\end{align}
where $0\leq x \leq 1$. Then, the pdf of $\delta_k$ can be obtained as:
\begin{align}
f_{\delta_k}(x)&=\frac{dP[\delta_k \leq x]}{dx} \nonumber\\
&=\sum_{i=0}^{k-1} \frac{(K-1)!\left[(K-1)x^2-(k-i-1)\right]}{(K-k+i)!(k-1-i)!} \nonumber\\
&~~~\times 2x^{2(k-1-i)-1}(1-x^2)^{K-k+i-1} \nonumber\\
&=\frac{2(K-1)!}{(k-1)!(K-k-1)!}x^{2k-1}(1-x^2)^{K-k-1}.  \label{eqn:pdf}
\end{align}

Recall (\ref{eqn:beta3}), we approximate the integral in it as a constant value according to randomness of $\delta_k$. {By utilizing partial integration, we can calculate that $\mathbb{E}(\delta_k^2)=\frac{k}{K}$. Therefore, we approximate the integral in (\ref{eqn:beta3}) as a constant value according to randomness of $\delta_k$ similar to \cite[Eqn. (28)]{lee2015spectral} as:
\begin{align}
&\mathbb{E}\left[\int_{\delta_k^{-2}(\kappa \gamma l)^{-\frac{2}{\alpha}}}^{\infty}\frac{1}{1+v^{\frac{\alpha}{2}}}dv\right]=\mathbb{E}\left[\mathcal{A}\left(\frac{(\kappa \gamma l)^{-\frac{2}{\alpha}}}{\delta_k^2}\right)\right]\nonumber\\
&\simeq \sqrt{\mathbb{E}(\delta_k^2)} \mathcal{A}\left(\frac{(\kappa \gamma l)^{-\frac{2}{\alpha}}}{\sqrt{\mathbb{E}(\delta_k^2)}}\right)=\sqrt{\frac{k}{K}} \mathcal{A}\left(\frac{\sqrt K (\kappa \gamma l)^{-\frac{2}{\alpha}}}{\sqrt k }\right).
\end{align}
}


Thus, we can approximate $\beta_3(\kappa \gamma \delta_k^{\alpha}l,\alpha)$ as $\beta_3(\kappa \gamma \delta_k^{\alpha}l,\alpha)\simeq \delta_k^2(\kappa \gamma l)^{\frac{2}{\alpha}}\sqrt{\frac{k}{K}} \mathcal{A}\left(\frac{\sqrt K (\kappa \gamma l)^{-\frac{2}{\alpha}}}{\sqrt k }\right)$.
Therefore, we have
\begin{align}
&\mathbb{E}_{\delta_k}\left[\left(\frac{1}{1+\beta_3(\kappa \gamma \delta_k^{\alpha}l,\alpha)}\right)^K\right] \nonumber \\
&~~~=\int_0^1\left[\frac{1}{1+\beta_3(\kappa \gamma x^{\alpha}l,\alpha)}\right]^K f_{\delta_k}(x)dx \nonumber \\
&~~~\simeq\int_0^1\frac{f_{\delta_k}(x)}{\left[1+(\kappa \gamma l)^{\frac{2}{\alpha}}\sqrt{\frac{k}{K}} \mathcal{A}\left(\frac{\sqrt K (\kappa \gamma l)^{-\frac{2}{\alpha}}}{\sqrt k }\right)x^2\right]^K} dx \nonumber \\ &~~~=\frac{1}{{\left[1+(\kappa \gamma l)^{\frac{2}{\alpha}}\sqrt{\frac{k}{K}} \mathcal{A}\left(\frac{\sqrt K (\kappa \gamma l)^{-\frac{2}{\alpha}}}{\sqrt k }\right)\right]^k}}.  \label{eqn:Edelta}
\end{align}

Substituting (\ref{eqn:Edelta}) into (\ref{eqn:upper bound1}), we obtain the approximate upper bound of the coverage probability (\ref{eqn:app2}). The approximate lower bound (\ref{eqn:app22}) can be similarly proved by letting $\kappa=1$ in the above derivations. This theorem is thus proved.
\begin{center}
Appendix D: Proof of Lemma 3
\end{center}

{To prove the convexity of $\textbf{P1}$, we first prove that the coverage probabilities in both MF and ZF schemes are non-increasing functions of $k$. For the exact coverage probabilities, this property holds obviously. For the lower bound (\ref{eqn:app11}) for MF, we have
\begin{align}
&P_{\text{cov,mf}}^{k,\text{l}}(K,\gamma)-P_{\text{cov,mf}}^{k+1,\text{l}}(K,\gamma) \nonumber\\
&=\mathbb{E}_{r_k,I_{r1}}\left[1-(1-e^{- \gamma r_k^{\alpha} I_{r1}})^L\right] \nonumber\\
&~~~-\mathbb{E}_{r_{k+1},I_{r3}}\left[1-(1-e^{- \gamma r_{k+1}^{\alpha} I_{r3}})^L\right] \nonumber\\
&=\mathbb{E}_{r_k,I_{r1},r_{k+1},I_{r3}}\left[(1-e^{- \gamma r_{k+1}^{\alpha} I_{r3}})^L-(1-e^{- \gamma r_k^{\alpha} I_{r1}})^L\right]  \nonumber\\
&\geq0,
\end{align}}
where the last step follows from that $r_{k+1}\geq r_k$ and $I_{r3}=\sum_{j\in\Phi_b\backslash\{\textbf{d}_k\}}g_{j,\text{mf}}\cdot r_j^{-\alpha}\geq I_{r1}=\sum_{j\in\Phi_b\backslash\{\textbf{d}_{k+1}\}}g_{j,\text{mf}}\cdot r_j^{-\alpha}$ since all $g_{j,\text{mf}}$ are i.i.d. random variables. Therefore, we conclude that the approximate probability in MF scheme is a non-increasing function of $k$. For the ZF scheme, the proof is similar to the MF scheme.

Since the coverage probabilities in both MF and ZF schemes are non-increasing functions of $k$, the second order derivative of the objective function (\ref{eqn:T3}) respect to $a_n$ can be expressed as:
{
\allowdisplaybreaks[4]
\begin{align}
&\frac{\partial^2 \sum_{n=1}^Np_nL(a_n)}{\partial a_n^2}  \nonumber\\
&=\sum_{n=1}^N p_n{\sum_{k=1}^K(k-1)(1-a_n)^{k-3}(ka_n-2)P_{\text{cov}}^k(K,\gamma)} \nonumber\\
&=\sum_{n=1}^N p_n\Bigg[-2P_{\text{cov}}^2(K)+2(3a_n-2)P_{\text{cov}}^3(K,\gamma) \nonumber\\
&~~~+\sum_{k=4}^K(k-1)(1-a_n)^{k-3}(ka_n-2)P_{\text{cov}}^k(K,\gamma)\Bigg] \nonumber\\
&\leq\sum_{n=1}^N p_n\Bigg[-2P_{\text{cov}}^3(K)+2(3a_n-2)P_{\text{cov}}^3(K,\gamma)  \nonumber\\
&+\sum_{k=4}^K(k-1)(1-a_n)^{k-3}(ka_n-2)P_{\text{cov}}^k(K,\gamma)\Bigg] \nonumber\\
&=\sum_{n=1}^N p_n\Bigg[6(a_n-1)P_{\text{cov}}^3(K,\gamma)+3(1-a_n)(4a_n-2)P_{\text{cov}}^4(K,\gamma) \nonumber\\
&+\sum_{k=5}^K(k-1)(1-a_n)^{k-3}(ka_n-2)P_{\text{cov}}^k(K,\gamma)\Bigg] \nonumber\\
&{~~~~~\vdots}  \nonumber\\
&\leq\sum_{n=1}^N p_n\left[ K(K-1)(1-a_n)^{K-3}(a_n-1)P_{\text{cov}}^K(K,\gamma)\right] \nonumber\\
&\leq 0,
\end{align}}
{where the previous steps come from the property that $P_{\text{cov}}^k(K,\gamma)$ is non-increasing of $k$ and the last step follows from that $0\leq a_n\leq 1$.}
Since objective of $\textbf{P1}$ is to maximize a concave function and all constraints are linear, $\textbf{P1}$ is a convex problem in terms of AFOT maximization.

Since $R_k(K)$ is also a non-increasing function of $k$, the objective function AESE can be proved to be concave similar to AFOT above. Therefore, the proof is completed.
\begin{center}
Appendix E: Proof of Theorem 3
\end{center}

For AFOT maximization, the Lagrangian function of $\textbf{P1}$ can be written as:
\begin{align}
L(a_1,a_2,\cdots,a_N,\mu)&=\sum_{n=1}^N p_n{\sum_{k=1}^K a_n(1-a_n)^{k-1}P_{\text{cov}}^{k}(K,\gamma)} \nonumber\\
&+\mu\left(M-\sum_{n=1}^Na_n\right),
\end{align}
where $\mu$ is the Lagrangian multiplier associated with the constraint (\ref{eqn:constraint}).
By letting the partial derivative of the Lagrangian function to be $0$, we have
\begin{align}
p_n \sum_{k=1}^K(1-a_n)^{k-2}(1-ka_n)P_{\text{cov}}^k(K,\gamma)=\mu \label{eqn: Lagrangian}
\end{align}

It is easy to find that the left hand of (\ref{eqn: Lagrangian}) is a decreasing function of $a_n$ since the objective function is concave. Notice that we have the constraint $0\leq a_n \leq 1$. Thus, when $a_n=1$, $\mu$ has the minimum value: $p_n\left[P_{\text{cov}}^1(K,\gamma)-P_{\text{cov}}^2(K,\gamma)\right]$. While for $a_n=0$, it has the maximum value: $p_n\sum_{k=1}^KP_{\text{cov}}^k(K,\gamma)$.
Therefore, the cache solution $a_n(\mu)$ is given by:
\begin{align}
a_n(\mu)=
\begin{cases}
1,&\mu \leq p_n\left[P_{\text{cov}}^1(K,\gamma)-P_{\text{cov}}^2(K,\gamma)\right] \\
w_n(\mu),&\text{otherwise}\\
0,&\mu \geq p_n\sum_{k=1}^KP_{\text{cov}}^k(K,\gamma)
\end{cases}, \label{eqn:strategy}
\end{align}
which is equivalent to (\ref{eqn:11111}) by substituting $\mu^*$ for $\mu$ in (\ref{eqn:strategy}). Hence, the proof is completed.

For AESE maximization, the proof is similar and hence is omitted here.
\end{appendices}
\bibliographystyle{IEEEtran}
\bibliography{IEEEabrv,paper1}

\begin{thebibliography}{10}
\providecommand{\url}[1]{#1}
\csname url@samestyle\endcsname
\providecommand{\newblock}{\relax}
\providecommand{\bibinfo}[2]{#2}
\providecommand{\BIBentrySTDinterwordspacing}{\spaceskip=0pt\relax}
\providecommand{\BIBentryALTinterwordstretchfactor}{4}
\providecommand{\BIBentryALTinterwordspacing}{\spaceskip=\fontdimen2\font plus
\BIBentryALTinterwordstretchfactor\fontdimen3\font minus
  \fontdimen4\font\relax}
\providecommand{\BIBforeignlanguage}[2]{{%
\expandafter\ifx\csname l@#1\endcsname\relax
\typeout{** WARNING: IEEEtran.bst: No hyphenation pattern has been}%
\typeout{** loaded for the language `#1'. Using the pattern for}%
\typeout{** the default language instead.}%
\else
\language=\csname l@#1\endcsname
\fi
#2}}
\providecommand{\BIBdecl}{\relax}
\BIBdecl

\bibitem{xxz_c}
X.~Xu and M.~Tao, ``Analysis and optimization of probabilistic caching in
  multi-antenna small-cell networks,'' in \emph{IEEE Proc. Global Commun. Conf.
  (GLOBECOM)}, Dec. 2017, pp. 1--6.

\bibitem{bastug2014living}
E.~Bastug, M.~Bennis, and M.~Debbah, ``Living on the edge: The role of
  proactive caching in 5g wireless networks,'' \emph{IEEE Commun. Mag.},
  vol.~52, no.~8, pp. 82--89, Aug. 2014.

\bibitem{shanmugam2013femtocaching}
K.~Shanmugam, N.~Golrezaei, A.~G. Dimakis, A.~F. Molisch, and G.~Caire,
  ``Femtocaching: Wireless content delivery through distributed caching
  helpers,'' \emph{IEEE Trans. Inf. Theory}, vol.~59, no.~12, pp. 8402--8413,
  Dec. 2013.

\bibitem{wang2014cache}
X.~Wang, M.~Chen, T.~Taleb, A.~Ksentini, and V.~C. Leung, ``Cache in the air:
  exploiting content caching and delivery techniques for 5g systems,''
  \emph{IEEE Commun. Mag.}, vol.~52, no.~2, pp. 131--139, Feb. 2014.

\bibitem{golrezaei2013femtocaching}
N.~Golrezaei, A.~F. Molisch, A.~G. Dimakis, and G.~Caire, ``Femtocaching and
  device-to-device collaboration: A new architecture for wireless video
  distribution,'' \emph{IEEE Commun. Mag.}, vol.~51, no.~4, pp. 142--149, Apr.
  2013.

\bibitem{barish2000world}
G.~Barish and K.~Obraczke, ``World wide web caching: Trends and techniques,''
  \emph{IEEE Commun. mag.}, vol.~38, no.~5, pp. 178--184, May. 2000.

\bibitem{7485844}
A.~Khreishah, J.~Chakareski, and A.~Gharaibeh, ``Joint caching, routing, and
  channel assignment for collaborative small-cell cellular networks,''
  \emph{IEEE J. Sel. Areas Commun.}, vol.~34, no.~8, pp. 2275--2284, Aug. 2016.

\bibitem{7537172}
G.~Paschos, E.~Bastug, I.~Land, G.~Caire, and M.~Debbah, ``Wireless caching:
  technical misconceptions and business barriers,'' \emph{IEEE Commun. Mag.},
  vol.~54, no.~8, pp. 16--22, Aug. 2016.

\bibitem{bacstu?2015cache}
E.~Bastug, M.~Bennis, M.~Kountouris, and M.~Debbah, ``Cache-enabled small cell
  networks: Modeling and tradeoffs,'' \emph{EURASIP J. on Wireless Commun. and
  Netw.}, vol. 2015, no.~1, pp. 1--11, Feb. 2015.

\bibitem{7194828}
C.~Yang, Y.~Yao, Z.~Chen, and B.~Xia, ``Analysis on cache-enabled wireless
  heterogeneous networks,'' \emph{IEEE Trans. Wireless Commun}, vol.~15, no.~1,
  pp. 131--145, Jan. 2016.

\bibitem{peng2015backhaul}
X.~Peng, J.-C. Shen, J.~Zhang, and K.~B. Letaief, ``Backhaul-aware caching
  placement for wireless networks,'' in \emph{IEEE Proc.Global Commun. Conf.
  (GLOBECOM)}, Dec. 2015, pp. 1--6.

\bibitem{blaszczyszyn2015optimal}
B.~Blaszczyszyn and A.~Giovanidis, ``Optimal geographic caching in cellular
  networks,'' in \emph{IEEE Proc. Int. Conf. Commun (ICC)}, Jun. 2015, pp.
  3358--3363.

\bibitem{chae2016caching}
S.~H. Chae and W.~Choi, ``Caching placement in stochastic wireless caching
  helper networks: Channel selection diversity via caching,'' \emph{IEEE Trans.
  Wireless Commun.}, vol.~15, no.~10, pp. 6626--6637, Oct. 2016.

\bibitem{7723871}
Y.~Cui and D.~Jiang, ``Analysis and optimization of caching and multicasting in
  large-scale cache-enabled heterogeneous wireless networks,'' \emph{IEEE
  Trans. Wireless Commun}, vol.~16, no.~1, pp. 250--264, Jan. 2017.

\bibitem{7562510}
Y.~Chen, M.~Ding, J.~Li, Z.~Lin, G.~Mao, and L.~Hanzo, ``Probabilistic
  small-cell caching: Performance analysis and optimization,'' \emph{IEEE
  Trans. Veh. Technol.}, vol.~66, no.~5, pp. 4341--4354, May. 2017.

\bibitem{7365479}
J.~Li, W.~Chen, M.~Xiao, F.~Shu, and X.~Liu, ``Efficient video pricing and
  caching in heterogeneous networks,'' \emph{IEEE Trans. Veh. Techno.},
  vol.~65, no.~10, pp. 8744--8751, Oct. 2016.

\bibitem{7435255}
S.~T. ul~Hassan, M.~Bennis, P.~H.~J. Nardelli, and M.~Latva-aho, ``Caching in
  wireless small cell networks: A storage-bandwidth tradeoff,'' \emph{IEEE
  Commun. Lett.}, vol.~20, no.~6, pp. 1175--1178, Jun. 2016.

\bibitem{bharath2016learning}
B.~Bharath, K.~Nagananda, and H.~V. Poor, ``A learning-based approach to
  caching in heterogenous small cell networks,'' \emph{IEEE Trans. Commun.},
  vol.~64, no.~4, pp. 1674--1686, Apr. 2016.

\bibitem{8125744}
K.~Li, C.~Yang, Z.~Chen, and M.~Tao, ``Optimization and analysis of
  probabilistic caching in $n$ -tier heterogeneous networks,'' \emph{IEEE
  Trans. Wireless Commun.}, vol.~17, no.~2, pp. 1283--1297, Feb. 2018.

\bibitem{bioglio2015optimizing}
V.~Bioglio, F.~Gabry, and I.~Land, ``Optimizing {MDS} codes for caching at the
  edge,'' in \emph{IEEE Proc. Global Commun. Conf. (GLOBECOM)}, Dec. 2015, pp.
  1--6.

\bibitem{7572987}
J.~Liao, K.~K. Wong, M.~R.~A. Khandaker, and Z.~Zheng, ``Optimizing cache
  placement for heterogeneous small cell networks,'' \emph{IEEE Commun. Lett.},
  vol.~21, no.~1, pp. 120--123, Jan. 2017.

\bibitem{altman2013coding}
E.~Altman, K.~Avrachenkov, and J.~Goseling, ``Coding for caches in the plane,''
  \emph{arXiv preprint arXiv:1309.0604}, 2013.

\bibitem{chen2016cooperative}
Z.~Chen, J.~Lee, T.~Q.~S. Quek, and M.~Kountouris, ``Cooperative caching and
  transmission design in cluster-centric small cell networks,'' \emph{IEEE
  Trans. Wireless Commun.}, vol.~16, no.~5, pp. 3401--3415, May. 2017.

\bibitem{xu2017modeling}
X.~Xu and M.~Tao, ``Modeling, analysis, and optimization of coded caching in
  small-cell networks,'' \emph{IEEE Trans. Commun.}, vol.~65, no.~8, pp.
  3415--3428, Aug. 2017.

\bibitem{8017548}
A.~Tang, S.~Roy, and X.~Wang, ``Coded caching for wireless backhaul networks
  with unequal link rates,'' \emph{IEEE Trans. Commun.}, vol.~66, no.~1, pp.
  1--13, Jan. 2018.

\bibitem{7136382}
X.~Peng, J.~C. Shen, J.~Zhang, and K.~B. Letaief, ``Joint data assignment and
  beamforming for backhaul limited caching networks,'' in \emph{IEEE Proc. Int.
  Symp. Pers., Indoor, Mobile Radio Commun. (PIMRC)}, Sep. 2014, pp.
  1370--1374.

\bibitem{7488289}
M.~Tao, E.~Chen, H.~Zhou, and W.~Yu, ``Content-centric sparse multicast
  beamforming for cache-enabled cloud ran,'' \emph{IEEE Trans. Wireless
  Commun.}, vol.~15, no.~9, pp. 6118--6131, Sep. 2016.

\bibitem{7936549}
Y.~Cao, M.~Tao, F.~Xu, and K.~Liu, ``Fundamental storage-latency tradeoff in
  cache-aided mimo interference networks,'' \emph{IEEE Trans. Wireless
  Commun.}, vol.~16, no.~8, pp. 5061--5076, Aug. 2017.

\bibitem{cao2018treating}
Y.~Cao and M.~Tao, ``Treating content delivery in multi-antenna coded caching
  as general message sets transmission: A dof region perspective,'' \emph{arXiv
  preprint arXiv:1807.01432}, 2018.

\bibitem{liu2017caching}
D.~Liu and C.~Yang, ``Caching policy toward maximal success probability and
  area spectral efficiency of cache-enabled {HetNets},'' \emph{IEEE Trans.
  Commun.}, vol.~65, no.~6, pp. 2699--2714, Jun. 2017.

\bibitem{jindal2011multi}
N.~Jindal, J.~G. Andrews, and S.~Weber, ``Multi-antenna communication in ad hoc
  networks: Achieving {MIMO} gains with {SIMO} transmission,'' \emph{IEEE
  Trans. Commun.}, vol.~59, no.~2, pp. 529--540, Feb. 2011.

\bibitem{7038201}
C.~Li, J.~Zhang, M.~Haenggi, and K.~B. Letaief, ``User-centric intercell
  interference nulling for downlink small cell networks,'' \emph{IEEE Trans.
  Commun.}, vol.~63, no.~4, pp. 1419--1431, Apr. 2015.

\bibitem{lee2015spectral}
N.~Lee, D.~Morales-Jimenez, A.~Lozano, and R.~W. Heath, ``Spectral efficiency
  of dynamic coordinated beamforming: A stochastic geometry approach,''
  \emph{IEEE Trans. Wireless Commun.}, vol.~14, no.~1, pp. 230--241, Jan. 2015.

\bibitem{andrews2011tractable}
J.~G. Andrews, F.~Baccelli, and R.~K. Ganti, ``A tractable approach to coverage
  and rate in cellular networks,'' \emph{IEEE Trans. Commun.}, vol.~59, no.~11,
  pp. 3122--3134, Nov. 2011.

\bibitem{6763007}
M.~A. Maddah-Ali and U.~Niesen, ``Fundamental limits of caching,'' \emph{IEEE
  Trans. Inf. Theory}, vol.~60, no.~5, pp. 2856--2867, May. 2014.

\bibitem{7097743}
S.~T. {Veetil}, K.~{Kuchi}, and R.~K. {Ganti}, ``Performance of pzf and mmse
  receivers in cellular networks with multi-user spatial multiplexing,''
  \emph{IEEE Trans. Wireless Commun.}, vol.~14, no.~9, pp. 4867--4878, Sep.
  2015.

\bibitem{haenggi2005distances}
M.~Haenggi, ``On distances in uniformly random networks,'' \emph{IEEE Trans.
  Inf. Theory}, vol.~51, no.~10, pp. 3584--3586, Oct. 2005.

\bibitem{srinivasa2010distance}
S.~Srinivasa and M.~Haenggi, ``Distance distributions in finite uniformly
  random networks: Theory and applications,'' \emph{IEEE Trans. Veh. Technol.},
  vol.~59, no.~2, pp. 940--949, Feb. 2010.

\bibitem{wildemeersch2014successive}
M.~Wildemeersch, T.~Q. Quek, M.~Kountouris, A.~Rabbachin, and C.~H. Slump,
  ``Successive interference cancellation in heterogeneous networks,''
  \emph{IEEE Trans. Commun.}, vol.~62, no.~12, pp. 4440--4453, Dec. 2014.

\bibitem{4641946}
D.~J. {Love}, R.~W. {Heath}, V.~K. {N. Lau}, D.~{Gesbert}, B.~D. {Rao}, and
  M.~{Andrews}, ``An overview of limited feedback in wireless communication
  systems,'' \emph{IEEE J. Sel. Areas Commun.}, vol.~26, no.~8, pp. 1341--1365,
  Oct. 2008.

\bibitem{alzer1997some}
H.~Alzer, ``On some inequalities for the incomplete gamma function,''
  \emph{Math. Comput.}, vol.~66, no. 218, pp. 771--778, Apr. 1997.

\bibitem{huang2007space}
K.~Huang, R.~W. Heath, and J.~G. Andrews, ``Space division multiple access with
  a sum feedback rate constraint,'' \emph{IEEE Trans. Signal Processing},
  vol.~55, no.~7, pp. 3879--3891, Jul. 2007.

\end{thebibliography}

\end{document}